\newcommand{\shane}[1]{\textrm{\textcolor{orange}{Shane says: #1}}}
\newcommand{\bao}[1]{\textrm{\textcolor{blue}{Bao says: #1}}}
\newcommand{\sheng}[1]{{\color{red}Sheng says: #1}}%
\newcommand{\myparagraph}[1]{\vspace{0.6\baselineskip}\noindent{\textbf{#1.}}~}
\newcommand{\var}[1]{\mbox{\emph{#1}}}
\newcommand{\svar}[1]{\mbox{\scriptsize\emph{#1}}}
\newcommand{\avar}[1]{\mbox{#1}}
\newtheorem{example}{Example}
\newtheorem{definition}{Definition}
\newtheorem{lemma}{Lemma}
\newcommand{\rknnt}{\textbf{R$k$NNT}}
\newcommand{\knn}{\textbf{$k$NN}}
\newcommand{\rknn}{\textbf{R$k$NN}}
\newcommand{\route}{\mathcal{R}}
\newcommand{\tran}{\mathcal{T}}
\newcommand{\rs}{\mathcal{D}_{\route}}
\newcommand{\ts}{\mathcal{D}_{\tran}}
\newcommand{\tdist}[2]{\var{dist}(#1,#2)}
\newcommand{\pdist}[2]{\var{distance}(#1,#2)}
\newcommand{\hs}[2]{H_{#1:#2}} 
\newcommand{\sfill}{\mathcal{S}_{\svar{filter}}}
\newcommand{\scnd}{\mathcal{S}_{\svar{cnd}}}
\newcommand{\fr}{{\textbf{RR-tree}}}
\newcommand{\sr}{{\textbf{TR-tree}}}
\newcommand{\orgin}{v_s}
\newcommand{\destination}{v_e}
\newcommand{\rknntweight}[1]{\omega(#1)}
\newcommand{\maxrknnt}{\textbf{MaxR$k$NNT}}
\newcommand{\minrknnt}{\textbf{MinR$k$NNT}}
\newcommand{\traveldistance}[1]{\psi(#1)}
\newcommand{\planroute}{R}
\newcommand{\currentroute}{R^*}
\newcommand{\srefine}{\mathcal{S}_{\svar{refine}}}
\newcommand{\queue}{\mathcal{Q}}
\newcommand{\graph}{\mathcal{G}}
\newcommand{\sresult}{\mathcal{S}_{\svar{result}}}
\newcommand{\md}[2]{\mathcal{M}_\psi[#1][#2]}
\newcommand{\compareone}{\textbf{Filter-Refine}}
\newcommand{\comparetwo}{\textbf{Voronoi}}
\newcommand{\comparethree}{\textbf{Divide-Conquer}}
\newcommand{\maxcompareone}{\textbf{Bruteforce}}
\newcommand{\rp}{r}
\newcommand{\tp}{t}
\newcommand{\crs}[1]{\mathcal{C}(#1)}
\newcommand{\scomment}[1]{\iffalse#1\fi}
\newcommand{\plist}{\textbf{PList}}
\newcommand{\nlist}{\textbf{NList}}
\begin{document}
\sloppy
\title{Reverse {\huge $k$} Nearest Neighbor Search over Trajectories}

\author{
	\begin{tabular}[t]{c@{\extracolsep{8em}}} 
		Sheng Wang$^{\ast}$~~~~~Zhifeng Bao$^{\ast}$~~~~~
		J. Shane Culpepper$^{\ast}$~~~~~Timos Sellis$^{\dagger}$~~~~~Gao Cong$^{\ddagger}$ \\
		$^{\ast}$RMIT University,
		$^{\dagger}$Swinburne University of Technology, $^{\ddagger}${Nanyang Technological University}\\
		$^{\ast}${firstname.surname@rmit.edu.au} ~~~~~$^{\dagger}$
		{tsellis@swin.edu.au}~~~~~$^{\ddagger}${gaocong@ntu.edu.sg}
	\end{tabular}
}
\maketitle


\begin{abstract}
	GPS enables mobile devices to continuously provide new opportunities
	to improve our daily lives.
	For example, the data collected in applications created by Uber
	or Public Transport Authorities can be used to plan transportation
	routes, estimate capacities, and proactively
	identify low coverage areas.
	In this paper, we study a new kind of query -- {\em Reverse $k$
		Nearest Neighbor Search over Trajectories} (\rknnt), which can be
	used for route planning and capacity estimation.
	Given a set of existing routes $\rs$, a set of passenger transitions $\ts$, and a
	query route $Q$, an \rknnt~query returns all transitions that take
	$Q$ as one of its $k$ nearest travel routes.
	To solve the problem, we first develop an index to handle dynamic
	trajectory updates, so that the most up-to-date transition data
	are available for answering an {\rknnt} query.
	Then we introduce a filter refinement framework for processing
	{\rknnt} queries using the proposed indexes.
	Next, we show how to use {\rknnt} to solve the optimal route planning
	problem {\maxrknnt} (\minrknnt), which is to search for the
	optimal route from a start location to an end location that
	could attract the maximum (or minimum) number of passengers based on a
	pre-defined travel distance threshold.
	Experiments on real datasets demonstrate the efficiency and
	scalability of our approaches.
	To the best of our best knowledge, this is the first work to study
	the {\rknnt} problem for route planning.
\end{abstract}

\section{Introduction}
\scomment{
	1. problem: describe traffic problem directly (demand), data model: transition:
	two points, dynamic, route: multiple points,
	the core operator and my problem, and rknn, then rknnt, try to find transition to take my route to commute, features: real-time, significance and applying area. Challenge: query is multiple points, route is also a challenge
	then related work: our features: 1)no training time and the dataset is static, 2) basic operator. classify.
}

{\em Reverse $k$ Nearest Neighbor} ({\rknn})
queries have attracted considerable attention
{\cite{stanoi2000reverse,tao2004reverse,wu2008finch,Cheema2011,
		Yang2014,Yang2015a}.
	An {\rknn} query aims to find all the points among a set of points that
	take a query point as their $k$ nearest neighbors. The {\rknn} query
	has many applications in resource allocation, decision support
	and profile-based marketing, etc.
	For example, {\rknn}~can be applied to estimate the number of
	customers of a planned restaurant among all existing restaurants.
	
	The increasing prevalence of GPS-enabled devices provides new
	opportunities to obtain many real trajectory data in a short period of
	time, such as the GPS trajectories of taxi or uber drivers, check-in trajectories of social media users in Foursquare, which describe the movements/transitions of people.
	In this paper, we will explore the {\rknn}~search over multiple-point
	trajectories (which is referred to as {\rknnt}).
	In a nutshell,
	the {\rknnt} query~can
	be described as: taking a planned (or
	existing) route as a query $Q$, return
	all the passengers who will take the query route $Q$ as one of
	the $k$ nearest routes among the route set $\rs$ to
	travel. Here, a passenger's movement is modeled as a combination of an
	origin and a destination~{\cite{Liu2016}} such as home and office,
	which is called a {\em transition}.
	As shown in Figure~\ref{fig:motivation}, two red points represent a
	single transition $\tran$ of a passenger.
	Among the two available routes, $\tran$ would favour Route 2 as both
	points in $\tran$ can find a point closer in Route 2 than in Route 1.

	The main difference between our problem and {\rknn} is that our query
	is a route, and our data collections contain routes and user transitions
	respectively.
	{\rknnt}~can estimate the passengers that will take the query
	route to travel.
	Another significant difference of our {\rknnt} from previous work is that
	the transition data is very dynamic and new transitions will arrive continuously, such as the request of uber passengers.
	Therefore, it is important to take this into consideration in the solution to answering the {\rknnt} query.
	

	
	\scomment{
		Recently, Liu et al.~{\cite{Liu2016}} proposed a choice model to
		determine whether a passenger would take a route based on the user's
		historical movement data.
	}

	\begin{figure}[h]
		\vspace{-1.5em}
		\centering
		\includegraphics[height=2.5cm]{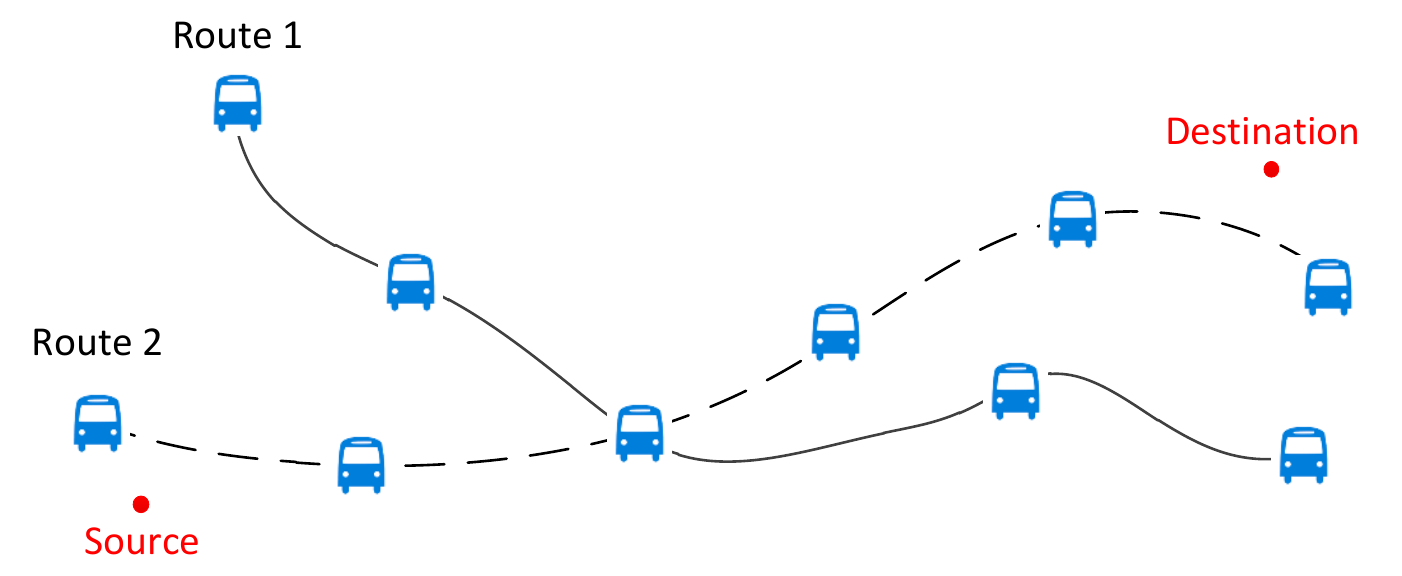}
		\vspace{-1em}
		\caption{Bus Routes and User Transition.}
		\vspace{-1em}
		\label{fig:motivation}
	\end{figure}
	
	{\rknnt} serves as a core yet fundamental operator in many applications in transportation field. The first and foremost one is to estimate the capacity of a route based on passengers' dynamic movements. On top of capacity estimation, {\rknnt} can be used for the following promising applications. \\
	{\textbf{1) Optimal Route Planning.}}
	(i) Among a set of candidate routes, an {\rknnt} can be used to find the optimal
	route which has the maximum (minimum) number of passengers among a
	set of candidate routes.
	For Uber drivers, finding a route with the maximum number of
	passengers can increase profitability (the uber fare will be increased with a surge of passenger requests) and the chance of being hired.
	For ambulance and fire truck drivers, finding a route which
	has the fewest people can increase response times in emergency
	situations. We will address problem (i) in this paper as well.
	(ii) Furthermore, by taking the temporal factor into consideration, i.e., user transitions at different time periods, it can help further adjust the frequency of planned vehicles on the planned routes, in order to save running cost for either individual vehicle drivers or public transportation authority~\cite{Ceder2007}.
	%
	
	The first challenge in answering {\rknnt} lies in how to
	prune the transitions which cannot be in the results without
	accessing every transition.
	A straightforward method is to conduct a {\knn} search for every
	transition, and then check the resulting ranked lists to see whether the
	query is a \knn.
	This method is intractable when there are a large number of transitions
	and new transitions are being added to the database.
	For {\maxrknnt}, a brute force method can be used to find all
	candidate routes whose travel distances do not exceed the distance
	threshold, and then an {\rknnt} search can be executed on each
	candidate, and finally the one with maximum number of passengers
	is selected as the answer.
	This method is shown to be inefficient in our experimental study.
	Similar to {\rknnt}, it is crucial to prune out candidates which
	cannot be an optimal route.
	Another challenge is how to support \emph{dynamic updates} as old transitions expire and new transitions arrive.
	
	To addresses this challenge, we first build two R-tree
	indexes $\fr$ and $\sr$ combined with two inverted indexes {\plist}
	and {\nlist} for the route and transition sets, respectively.
	Then, we choose a set of route points from the existing routes to form a
	{\em filtering set} by traversing the route index $\fr$.
	By drawing bisectors between the route points in the filtering set and
	the query, an area can be found where any transition point inside
	can not have the query as a nearest neighbor.
	After finding the {\em filtering set}, the pruning of transitions
	starts by traversing $\sr$ and checking if a node can be
	filtered by more than $k$ routes in the filtering set.
	Finally, all of the candidate transitions are verified using the
	filtered nodes during the traversal of $\fr$.
	
	Next, we explore the optimal route planning problem, where we consider a graph formed by the bus network.
	Given a starting location and a destination, we find the optimal route
	$\route$ which connects the two locations in a bus network, and
	maximizes (minimize) the number of passengers that take $\route$ as
	its $\knn$ without exceeding a distance threshold.
	We call this problem as the {\maxrknnt} (\minrknnt).\\
	{\textbf{2) Bus Advertisement Recommendation.}}
	As an {\rknnt} query for a route can locate the set of passenger transitions
	who would take the travel route, and we can obtain the profiles of
	potential passengers using social networks, a
	deeper analysis of the common interests among passengers who take
	similar travel routes can be found.
	Consequently, we can select and broadcast advertisements that
	will have the largest influence on passengers taking a route.

	For {\maxrknnt}, a weighted graph is built by the pre-computed
	{\rknnt} set for every vertex.
	Then we start from the first vertex and access the neighbor vertex
	$v$ to compute a partial route $R$.
	Then a reachability check on $R$ is performed to see whether the
	estimated lower bound travel distance of $R$ is greater than the
	distance threshold.
	Next, the dominance table of $v$ is checked to see if $R$ can
	dominate other partial routes which terminate at $v$.
	Further checks on the route are made when $R$ meets all the
	conditions.
	
	In summary, the main contributions of this paper are:
	\begin{itemize}[leftmargin=1em]
		\itemsep 0em
		\item We investigate the {\rknnt} problem for the first time, which
		serve as a core yet frequently adopted operator in many practical applications. In particular, we explore how to use
		{\rknnt} to plan an optimal route which
		attracts the most (or fewest) passengers (Section~\ref{sec:problem}).
		\item We propose a filtering-refinement framework which can prune routes
		using a {\em filtering set} (in Section~\ref{sec:baseline}) and a
		voronoi-based optimization to further improve the efficiency
		(Section~\ref{sec:optimization}).
		\item We also introduce {\maxrknnt} (\minrknnt) queries which can be
		used to find the optimal route that attracts the maximum (minimum)
		number of passengers in a bus network (Section~\ref{sec:maxrknnnt}).
		\item We validate the practicality of our approaches using real world
		datasets (Section~\ref{sec:experiment}).
	\end{itemize}
	%

	
	

\begin{figure}[htp]
	\vspace{-1em}
	\centering
	\includegraphics[height=2cm]{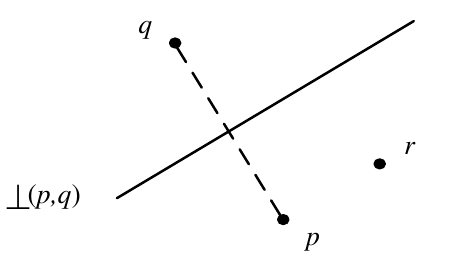}
	\vspace{-1em}
	\caption{Pruning by half-space.}
	\label{fig:tpl}
	\vspace{-2em}
\end{figure}

\section{Preliminary \& Related Work}
\label{sec:method}
In this section, we first compare the difference between our work and
classic {\rknn} search over point and moving object data.
Further, we will review related work on route planning.

\subsection{{\large \rknn}}\label{sec:rnns}
\myparagraph{{\rknn} on Spatial Points} Most existing {\rknn} search
work focuses on static point
data,
and often employ a pruning-refinement frameworks to avoid scanning the
entire dataset.
However, these approaches cannot easily be translated to route search
where both queries and collections are of multi-point trajectories.
Given a set of candidates as a query, a maximizing {\rknn} query
finds the optimal trajectory which satisfies the maximum number of
results~{\cite{Wong2009,Zhou2011}}.

How to improve the search performance has attracted much attentions 
over last decades.
The basic intuition behind filtering out a point $p$ is to find
another point which is closer to $p$ than the query point $q$
~{\cite{tao2004reverse, wu2008finch,Cheema2011,Yang2014}}.
Here, we review the half space method and use a simple example to
show how pruning works.
Figure~\ref{fig:tpl} shows a query point $q$ and a data point $p$.
As we can see, a perpendicular bisector divides the whole space into
two sub-spaces, and all points inside the lower subspace would prefer 
$p$ as a nearest neighbor, such as point $r$.
For the reverse nearest neighbor search, $r$ may be filtered 
from the candidate set of query $q$.
More specifically, $r$ can be filtered out if it can be pruned by at
least $k$ such points.

\myparagraph{{\rknn} on Moving Objects} 
Given a moving object dataset $D$ and a candidate point set $O=\left(
o_1,o_2,...,o_m \right)$ as query, Shang et al.~{\cite{Shang2011Finding}} 
find the optimal point from $O$ such
that the number of moving objects that choose $o_i$ as a nearest
neighbor is maximized.
Specifically, they proposed a {\em Reverse Path Nearest Neighbor} (R-PNN)
search which finds the nearest point, and not the $k$ nearest points.
Shang et al.
introduced the concept of {\em influence-factor} to determine the
optimal point.
The influence-factor of $o$ is $f$ if $o$ is the nearest neighbor of
$f$ trajectories for all candidate points in $O$.
Cheema et al.~{\cite{Cheema2012}} proposed a continuous reverse
nearest neighbors query to monitor a moving object and find all
static points that take the moving object query as a $k$ nearest
neighbor.
Recently, Emrich et al.~{\cite{emrich2014reverse}} solved a problem
of RNN search with ``uncertain'' moving object trajectories using
a Markov model approach.
A moving object is treated as a result when it always takes the query
object as a nearest neighbor for every time stamp within a given time
interval.
All of these approaches target a single point rather than a
transition of multiple-points, which is the focus of our work.

\subsection{Route Searching}
\label{sec:bnp}
\myparagraph{Bus Route Planning} Bus network design is known to be a
complex, non-linear, non-convex, multi-objective NP-hard
problem~{\cite{Chen2014a}}.
Based on existing bus networks, Pattnaik et
al.~{\cite{pattnaik1998urban}} proposed a heuristic method which uses
a genetic algorithm to minimize the cost of passengers and operators.
Yang et al.~{\cite{yang2007parallel}} used ant colony algorithms to
maximize the number of direct travelers between two nearby bus stops.
\scomment{Such work uses population estimation and user surveys to find
	a sub-optimal route, but do not consider realtime dynamic movements
	of passengers. }
Population estimation and user surveys~{\cite{Xian2013}} around the
planned route are traditional ways to estimate the number
of passengers that may use the planned travel route.
However, the data is usually out-of-date as 
census data may only be gathered every five years, and may not reflect
current travel patterns.

Chen et al.~{\cite{Chen2014a}} tried to approximate night time bus
route planning by first clustering all points in
taxi trajectories to determine ``hot spots'' which could be bus
stops, and then created a bus route graph based on the connectivity
between two stops.
Based on human mobility patterns, Liu et al.~{\cite{Liu2016}}
proposed a localized transportation choice model, which can
predict bus travel demand for different bus routes by taking
into account both bus and taxi travel demands.
However, the method must scan the static records for all of the
customers, which is inefficient in practice, and the model has to be
rebuilt whenever the records are updated.
\scomment{
	These two works all scan entire dataset to estimate the
	transportation capacity for a route, and do not support dynamic
	updates.
	
	\bao{I do not think it is closely related to our \maxrknnt problem. I suggest removing this section!}
	
	\sheng{We need to the shortest distance computation, and we need to mention the graph search}}

\myparagraph{Shortest Route Searching}
Given a starting vertex and an ending vertex, the classical route
planning problem is to find the shortest path in a graph.
Best First Search (BFS) and Depth First Search (DFS) are two commonly
used algorithms for this problem.
An extension of this problem is {\em $k$ Shortest Path searching}
($k$SP) ~{\cite{Yen1971,Aljazzar2011}}, which aims to find the $k$
shortest paths from a start vertex $s$ to a target vertex $t$ in a
directed weighted graph $G$.
Yen's algorithm~{\cite{Yen1971}} is a derivative algorithm for
ranking the $k$ shortest paths between a pair of nodes.
The algorithm always searches the shortest paths in a tree containing
the $k$ shortest loop free paths.
The shortest one is obtained first, and the second shortest path is
explored based on the previous paths.
The {\em Constraint Shortest Path} (CSP) problem
~{\cite{Lozano2013,Bolivar2014}} applies resource constraints on each
edge, and solves the shortest path search problem based on these
constraints.
An example constraint would be time costs.

\section{Problem Definition}
\label{sec:problem}
In this section, we formally define the \rknnt~problem and important
notations are recorded in Table~\ref{tab:notation}.
\begin{table}
	\centering
	\caption{Summary of Notation}
	\label{tab:notation}
	\begin{tabular}{|c|c|}
		\hline
		\textbf{Notation} & \textbf{Definition}  \\
		\hline
		\hline
		$\route$ & The route composed of points $\{\rp_1,...,\rp_n\}$\\
		\hline
		$\tran$ & The transition $\{\tp_o,\tp_d\}$\\
		\hline
		$Q$ & Query route $\{q_1,...,q_m\}$\\
		\hline
		$\rs$, $\ts$ & The route and transition sets\\
		\hline		
		$\tdist{\tp}{\route}$ & Distance from transition point $\tp$ to $\route$\\
		\hline
		$\bot(q,\rp)$ & Perpendicular bisector between $q$ and $\rp$\\
		\hline	
		\makecell{$\hs{q}{\rp}$, $\hs{\rp}{q}$} & \makecell{Two half-planes 
			divided by
			\\ perpendicular bisector $\bot(q,\rp)$} \\
		\hline
		$\hs{\rp}{Q}$, $\hs{\route}{Q}$ & Filtering space formed by $Q$ with $\rp$ and $\route$ \\
		\hline
		$\crs{\rp}$ & Crossover route set of $\rp$ \\
		\hline
		$\sfill$, $\srefine$ & Filtering set and filtered node set\\
		\hline
		$\scnd$, $\sresult$ & Transition candidates and result set\\
		\hline
		$\var{root}_r$ ($\var{root}_t)$ & Root of Route R-tree ( Transition R-tree )\\
		\hline
		$\mathcal{V}_{\route,Q}$ & Voronoi diagram formed by $\route$ and $Q$ \\
		\hline
		$\mathcal{G}$ & Weighted graph\\
		\hline
		$\tau$ & Travel distance threshold\\
		\hline
		$\rknntweight{\route}$, $\traveldistance{\route}$ & \rknnt~set and travel distance of $\route$ in $\graph$\\
		\hline
		$\mathcal{M}_\psi[i][j]$ & \makecell{Lower bound matrix of 
			$\traveldistance{\route}$ \\ 
			where $\route$ starts from vertex $i$ to $j$}\\
		\hline
	\end{tabular}
	\vspace{-1.5em}
\end{table}

\begin{definition}{\textbf{(Route)}}
	\label{def:rt}
	A route $\route$ of length $n$ is a sequence of points 
	$\left(\rp_1,\rp_2,...,\rp_n\right),n \ge 2$, 
	where $\rp_i$ is a point represented by 
	(latitude, longitude).
\end{definition}

\begin{definition}{\textbf{(Transition)}}
	\label{def:ct}
	A transition $\tran$ contains an \text{o}rigin point $\tp_o$
	and a \text{d}estination point $\tp_d$.
\end{definition}

Both a route and a transition are composed of discrete points
called {\em route point} $\rp$ and {\em transition point} $\tp$
respectively.
We use $\ts$ and $\rs$ to denote the transition set and route set.

\begin{definition}{\textbf{(Point-Route Distance)}}
	\label{def:dis}
	Given a transition point $\tp \in \tran$ and a route
	$\route$, the distance $\tdist{\tp}{\route}$ from
	$\tp$ to $\route$ is the minimum Euclidean distance from
	$\tp$ to every point of $\route$, and calculated as:
	\begin{equation}
	\label{equ:prd}
	\tdist{\tp}{\route} = \min_{\rp \in \route}\pdist{\tp}{\rp}
	\end{equation}
\end{definition}

Based on the point-route distance function, the {\knn} search of a
transition point $\tp$ is defined as: 
\begin{definition}{(\knn)}
	\label{def:knn}
	Given a set of routes $\rs$, the {\knn} search of a transition
	point $\tp \in\tran$ retrieves a set $S\in \rs$ of $k$ routes
	such that for all $\route \in S$, and for all $\route^{'} \in \rs-S$: $
	\tdist{\tp}{\route}
	\geq\tdist{\tp}{ \route^{'}}$.
\end{definition}

In particular, two types of {\knn} are supported for a transition $\tran$,
which can also be found in \cite{emrich2014reverse}.
\begin{enumerate}
	\item $\exists$\knn: $\tran$~takes $\route$ as a {\knn} iff there
	exists a point $\tp \in \tran$ taking $\route$ as {\knn}. So, 
	\textbf{$\exists$kNN}($\tran$) $= \knn(\tp_o) \cup \knn(\tp_d)$.
	\item $\forall$\knn: $\tran$~takes $\route$ as a {\knn} iff
	both points $\tp_o$ and $\tp_d$ take $\route$ as their {\knn}. So,
	\textbf{$\forall$kNN}($\tran$) =$\knn(\tp_o) \cap \knn(\tp_d)$.
\end{enumerate}

Now, we can formally define the reverse $k$ nearest neighbor query
over trajectories.
\begin{definition}{\textbf{(R$k$NNT)}}
	\label{def:rknnt}
	Given a set of routes $\rs$, a set of transitions $\ts$, and
	a query route $Q$, $\exists\rknnt(Q)$ ($\forall$\rknnt(Q))
	retrieves all transitions $ \tran \in \ts$, such that for all $\tran$:
	$Q\in\textbf{$\exists$\knn}(\tran)$
	($\textbf{$\forall$kNN}(\tran)$).
\end{definition}
\scomment{
	\shane{The last part of this definition is not clear.}
	\sheng{I polished it}.}

\begin{figure}
	\centering
	\vspace{-2em}
	\includegraphics[height=5cm]{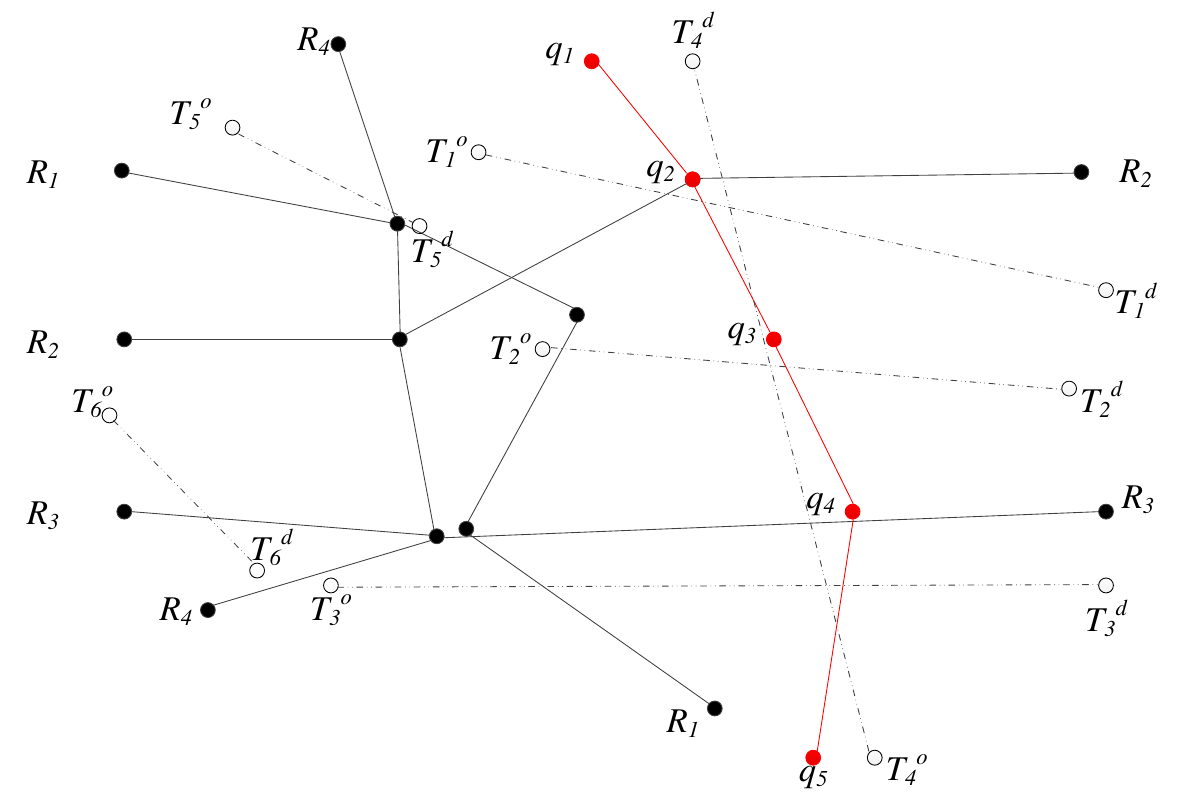}
	\vspace{-1em}
	\caption{Example of routes and transitions.}
	\label{fig:examp}
\end{figure}
\vspace{-1em}
\begin{example}
	In Figure~\ref{fig:examp}, $R_1$, $R_2$, $R_3$ and $R_4$ are {\em
		routes}.
	$T_{1}, T_{2}, T_{3}, T_{4}, T_{5},T_{6}$ are {\em transitions}, and
	$T_{1}^o$ and $T_{1}^d$ denote the origin point and destination point
	for transition $T_1$.
	The query route $Q$ is composed of $5$ query points (in red).
	If we take the $\forall$\rknnt~query, as point $T_{4}^o$ and
	$T_{4}^d$ take $Q$ as the nearest route, $T_4$ will be the result of
	$\forall\rknnt(Q)$.
\end{example}

\begin{lemma}
	\label{le:allexist}
	Given a query $Q$, $\forall\rknnt(Q) \subseteq \exists\rknnt(Q)$.
\end{lemma}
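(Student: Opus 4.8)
The plan is to unfold the two definitions of $\forall\rknnt$ and $\exists\rknnt$ and show the containment reduces to the trivial set‑theoretic fact that an intersection is contained in a union. First I would take an arbitrary transition $\tran \in \forall\rknnt(Q)$; by Definition~\ref{def:rknnt} this means $Q \in \textbf{$\forall$kNN}(\tran)$, which by the definition of $\forall$\knn~for a transition means $Q \in \knn(\tp_o) \cap \knn(\tp_d)$, i.e. $Q$ is among the $k$ nearest routes of \emph{both} the origin point $\tp_o$ and the destination point $\tp_d$.

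Next I would observe that $\knn(\tp_o) \cap \knn(\tp_d) \subseteq \knn(\tp_o) \cup \knn(\tp_d)$, so in particular $Q \in \knn(\tp_o) \cup \knn(\tp_d) = \textbf{$\exists$kNN}(\tran)$. Applying Definition~\ref{def:rknnt} in the other direction, $Q \in \textbf{$\exists$kNN}(\tran)$ is exactly the condition for $\tran \in \exists\rknnt(Q)$. Since $\tran$ was arbitrary, this establishes $\forall\rknnt(Q) \subseteq \exists\rknnt(Q)$.

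There is essentially no obstacle here — the statement is a direct consequence of the definitions, and the only "work" is chaining the three equivalences (membership in $\forall\rknnt$, the $\cap$/$\cup$ characterizations of $\forall$\knn/$\exists$\knn, and membership in $\exists\rknnt$) together with $A \cap B \subseteq A \cup B$. If I wanted to pad the argument I could note that the inclusion can be strict, e.g. when only one of $\tp_o$, $\tp_d$ has $Q$ in its \knn~list, as illustrated by the transitions in Figure~\ref{fig:examp}, but that is a remark rather than part of the proof. I would keep the proof to two or three sentences.
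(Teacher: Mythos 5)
Your proof is correct and follows essentially the same route as the paper's: both unfold Definition~\ref{def:rknnt} and observe that if both $\tp_o$ and $\tp_d$ take $Q$ as a {\knn} then at least one does, i.e.\ $\knn(\tp_o)\cap\knn(\tp_d)\subseteq\knn(\tp_o)\cup\knn(\tp_d)$. The paper's proof additionally remarks that the difference set $\exists\rknnt(Q)-\forall\rknnt(Q)$ consists of transitions with exactly one qualifying point, which, as you note, is an aside about strictness rather than part of the containment argument.
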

\begin{proof}
	Given a query $Q$, $\forall\rknnt(Q)$ returns a set of transitions
	where both origin and destination points have the query as a {\knn}, then
	such transitions will also belong to the result of $\exists\rknnt(Q)$,
	so $\forall\rknnt(Q) \subseteq \exists\rknnt(Q)$.
	Let $\varDelta(Q) = \exists\rknnt(Q)-\forall\rknnt(Q)$,
	$\forall \tran \in \varDelta$: $\tran$ only has one point that will
	take the query as a {\knn}, so $\varDelta(Q)
	\cap\forall\rknnt(Q)=\emptyset$.
\end{proof}

Using Lemma~\ref{le:allexist}, the set of transition points which
take $Q$ as {\knn} can be searched for first, and then $\exists\rknnt(Q)$
can be found by adding the corresponding routes.
For $\forall$\rknnt, we need to remove transitions that have
only one point in $\exists\rknnt(Q)$.
Hence, a unified framework can be proposed that answers
both $\exists$\rknnt~and $\forall${\rknnt}.
In the rest of this paper, we use {\rknnt} to represent
$\exists$\rknnt~by default for ease of composition.

\section{Capacity Estimation - a Processing Framework for {\large $\rknnt$}}
\label{sec:baseline}
In this section, we first provide a sketch of our framework to answer
the {\rknnt} query for capacity estimation, which includes the pruning
idea based on routing points, and the proposed index structures.
Then we describe each step in detail.

\begin{figure}[t]
	\centering
	\includegraphics[height=5cm]{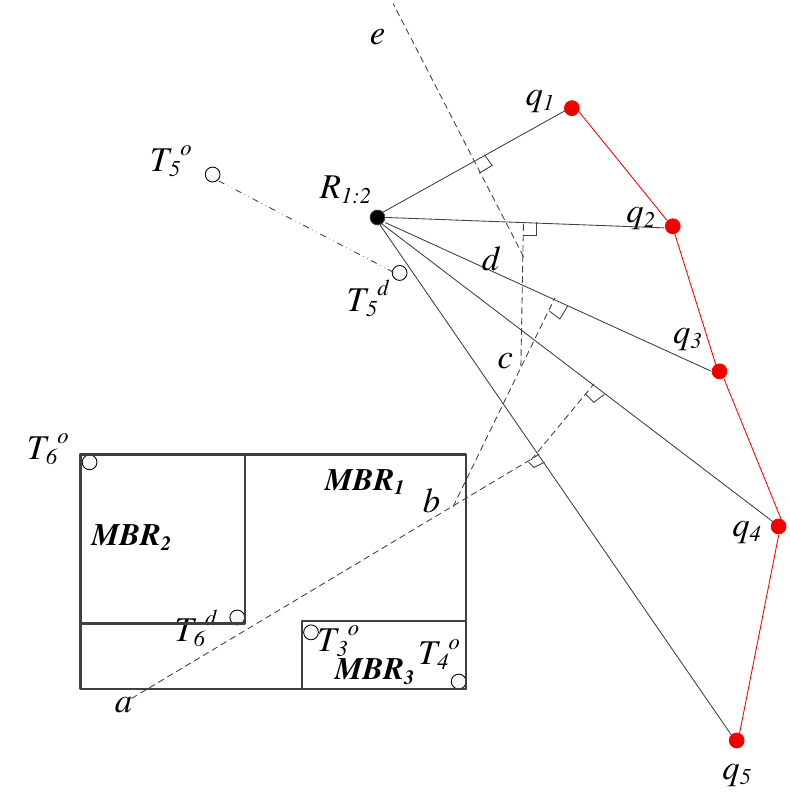}
	\vspace{-1.5em}
	\caption{Pruning by half-space for a multi-point query $Q$.}
	\vspace{-1.5em}
	\label{fig:onepoint}
\end{figure}

\subsection{Main Idea}
All impossible transition points are pruned using a $\mathsf{PruneTransition}$ algorithm, and the remaining candidates $\scnd$ are
further verified using a $\mathsf{RefineCandidates}$ algorithm to generate
the final result set $\sresult$.
Note that before pruning, a subset of routes $\sfill$ needs to be
generated for efficient pruning, the reasoning and approach 
are described in Section~\ref{sec:pc}.
In summary, the whole procedure is composed of the three steps 
in Algorithm~\ref{alg:fw}.

\begin{algorithm}
	\caption{$\mathsf{\rknnt}(Q,\var{root}_r, \var{root}_t)$}
	\label{alg:fw}
	\KwOut{$\sresult$: the result set}
	$(\sfill, \srefine) \leftarrow \mathsf{FilterRoute}(\var{root}_r, Q, k)$\tcp*{\small Sec~\ref{sec:ffs}}
	$\scnd \leftarrow \mathsf{PruneTransition}(\var{root}_t, Q,\sfill, k)$\tcp*{\small Sec~\ref{sec:tp}}
	$\sresult \leftarrow \mathsf{RefineCandidates}(Q,\scnd , \srefine)$\tcp*{\small Sec~\ref{sec:verify}}
	\Return $\sresult$\;
\end{algorithm}

\subsubsection{Pruning Characteristics}
\label{sec:pc}
By Definition~\ref{def:rknnt}, a transition takes a route as a {\knn}
if there exists at least one point (in the transition) that will take
the route as a {\knn}.
If there are more than $k$ routes which are closer to a point in a
transition than the query, then the point in this transition can be
pruned.
Such a route which helps prune transitions is called a {\em filtering
	route}.
If both points of a transition are pruned, then the transition can be
pruned safely, so pruning transitions helps to find the filtering
routes to prune the points in the transition.

\begin{lemma}
	\label{le:tr}
	If a transition point $\tp$ is 
	closer to a route point $\rp\in\route$ than $Q$,
	then $\tp$ is closer to $\route$ than $Q$.
\end{lemma}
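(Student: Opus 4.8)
The plan is to unfold Definition~\ref{def:dis} on both sides of the claimed inequality and then invoke the elementary fact that a minimum over a finite set never exceeds any single element of that set. First I would restate the hypothesis precisely: ``$\tp$ is closer to the route point $\rp\in\route$ than to $Q$'' means $\pdist{\tp}{\rp} < \tdist{\tp}{Q}$, where, applying Definition~\ref{def:dis} to the query route $Q$, we have $\tdist{\tp}{Q}=\min_{q\in Q}\pdist{\tp}{q}$. Next, since $\rp$ is one of the points of $\route$, Definition~\ref{def:dis} applied to $\route$ gives $\tdist{\tp}{\route}=\min_{\rp'\in\route}\pdist{\tp}{\rp'}\le \pdist{\tp}{\rp}$. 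Chaining the two relations yields $\tdist{\tp}{\route}\le\pdist{\tp}{\rp}<\tdist{\tp}{Q}$, i.e.\ $\tp$ is closer to $\route$ than to $Q$, which is exactly the claim (the same argument works verbatim with $\le$ in place of $<$, so the statement is robust to the strict/non-strict convention for ``closer'').

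There is essentially no hard step here: the lemma is a direct consequence of the monotonicity of $\min$, and no case analysis is required. The only point that needs a little care --- and it is a modelling point rather than a mathematical one --- is to read ``closer $\ldots$ than $Q$'' as a comparison against the \emph{whole} query route via the point--route distance $\tdist{\cdot}{Q}$ used throughout the paper, rather than against a single query point; once that reading is fixed, the inequality chain above closes immediately and the proof is complete.
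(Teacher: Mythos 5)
Your proposal is correct and follows essentially the same argument as the paper's own proof: both apply Equation~\ref{equ:prd} to obtain $\tdist{\tp}{\route}\le\pdist{\tp}{\rp}$ and then chain this with the hypothesis $\pdist{\tp}{\rp}<\tdist{\tp}{Q}$ to conclude. Your additional remark that ``closer than $Q$'' must be read via the point--route distance to the whole query is a fair clarification the paper leaves implicit, but it does not change the substance of the argument.
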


\begin{proof}
	We have $\tdist{\tp}{\route} \le \var{distance}(\tp,\rp)$ 
	according to Equation~\ref{equ:prd}.
	If $\var{distance}(\tp,\rp)<\tdist{\tp}{Q}$, 
	then 	$\tdist{\tp}{\route} < \tdist{\tp}{Q}$.
\end{proof}

By Lemma~\ref{le:tr}, a transition point can be removed if it takes a set
of routing points from more than $k$ different routes as a {\knn}
rather than the query.
These points are called {\em filtering points}.
Next, we introduce how to prune a transition points using the filtering
point $\rp$ from the routes.

Recall the example in Figure~\ref{fig:tpl} where an {\rknn} can find
an area where the points inside the area will not take the query as the
nearest neighbor based on the half space.
Similarly, given a query $Q$, we choose a point $\rp$ from a route
$\route$ in $\rs$; then based on the straight line $\overline {\rp
	q_i}$ formed by a point $q_i$ in $Q$ to $\rp$, the perpendicular
bisector $\bot(q_i,\rp)$ is used to cut the space into two
half-planes: $\hs{\rp}{q_i}$ and $\hs{q_i}{\rp}$ which contain $\rp$
and $q_i$, respectively.
For every point $q_i$ in $Q$, there is a $\hs{\rp}{q_i}$.

The intersection of all of the half spaces forms the filtering space
$\hs{\rp}{Q}$ defined as:
\begin{definition}{(\textbf{Filtering Space})}
	\label{de:fs}
	Given a route point $\rp$ and a query $Q$, 
	the intersection of all $\hs{\rp}{q_i}$ forms a filtering space:
	\begin{equation}
	\hs{\rp}{Q} = \bigcap\limits_{{q_i} \in Q} {\hs{\rp}{q_i}} 
	\end{equation}
	the point $\rp$ which belongs to $\route \in \rs$ is called as the 
	filtering point.  \\
	\scomment{
		\bao{why we need to highlight the case of one query point?}\\
		\sheng{I moved it to Section~\ref{sec:dc}}.}
\end{definition}

As shown in Figure~\ref{fig:onepoint}, we can see that there are five
perpendicular bisectors.
They form a polyline $\overline {abcde}$ which divides the whole
space into two sub-spaces, and the left part is the filtering space
$\hs{R_{1:2}}{Q}$.
As $T_5$ is entirely located in this area, it cannot take the
query as its nearest route.
The filtering space can also help filter a set of points using spatial
indexes (see Sec.~\ref{subsec:index}).
If a maximum bounded box (MBB) such as $\var{MBR}_2$ covering
points $T_6^o$ and $T_6^d$ is located entirely inside the filtering
space, then $T_6^o$ and $T_6^d$ inside this MBB will not take the
query as a nearest neighbor and can be filtered out.

Every point in the route set $\rs$ can be a filtering point, but we
cannot choose all points in the route set $\rs$ to do pruning
especially when the whole set is large and located in external memory.
To pruning a transition point, if we access all route points every
time, the process is costly.
In Section~\ref{sec:ffs}, we introduce how to generate a subset
from the whole route set.
Overall, we can observe three key characteristics based on the above
analysis: 1) A filtering space exists between the query and a
route point;
2) If a transition point is located in more than $k$ filtering space of
query $Q$ simultaneously, then the point can be pruned; and
3) It is important to choose a subset of all routes as the filtering
set.

\subsubsection{Indexes}\label{subsec:index}
\begin{itemize}[leftmargin=1em]
	\itemsep 0em
	\item $\fr$ \& $\sr$
	are two tree indexes for point data fetched from route dataset
	$\rs$ and transition dataset $\ts$ respectively, and referred to as
	a Route R-tree ($\fr$) and a Transition R-tree ($\sr$).
	The tree indexes are created first, and every point in the leaf node of
	$\fr$ contains the IDs of the routes it belongs to.
	Every point in $\sr$ also contains the IDs of the transition it
	belongs to. Through the transition ID and route ID in the node of $\fr$ and
	$\sr$, we can get the corresponding route and transition for further
	refinement if two points of a transition are both pruned, and the two
	filtering points belong to the same route
	(See~Section~\ref{sec:verify}).
	\item $\nlist$. As we need to get all the routes that have a point inside
	a given node for verification in Section~\ref{sec:verify}.
	Hence, for each node in $\fr$, we will create a list  for every node of {\fr} by
	traversing the whole tree bottom-up to store all the IDs of routes
	inside.
	\item $\plist$.
	The inverted list of each route point is created to store the IDs of
	the corresponding routes.
	As a bus stop can be shared by many routes in a bus network, we
	call this index a {\plist}.
\end{itemize}

Our index supports dynamic updating, where new transitions and routes
can be added into the index easily.
This is in contrast the previous work~{\cite{Chen2014a,Liu2016}}
which needs to train whole dataset from scratch once there are new
data inserted.

\subsection{Key Functions}
This section describes: 1) how to 
generate the filtering set $\sfill$; 
2) how to prune and find all the candidate routes $\scnd$; and
3) how to verify the candidate routes and further refine them to find 
the final query result $\sresult$.

\subsubsection{Filtering Routes}
\label{sec:ffs}
In order to get a small filtering set $\sfill$ for a given query,
an empty filtering set $\sfill$ is initialized, and new route
points are added which cannot be pruned using the existing points of
a route in $\sfill$.
We organize all the filtered points in a point list, sorted by the
number of routes which cover each point, and denote the route set and
point set as $\sfill.R$ and $\sfill.P$ respectively, which are 
materialized using two dynamic sorted hashtables.
Specifically, for $\sfill.R$, the key is the route ID, and the values
are points of this route that cannot be filtered.
For $\sfill.P$, the key is the route point ID, and the value is a list
of routes containing the point.

Note that in a real bus network, a route point can be covered by
several routes.
If a filtering point is contained by more than $k$ routes, and a
transition takes this filtering point as the nearest neighbor rather than
the query, and then this transition point can be pruned.
We will employ this enhancement to achieve the more efficient pruning.
\begin{definition}{\textbf{(Crossover Route Set)}}
	Given a route point $\rp$, the set of routes which cover
	$\rp$ is the crossover set of $\rp$, and denoted as
	$\crs{\rp}$.
\end{definition}
For example in Figure~\ref{fig:examp}, $R_1$ and $R_4$ intersect at
the second point $R_{1:2}$, then $\crs{R_{1:2}} = \{R_1, R_4\}$.
Using the {\plist}, we can retrieve the {\em crossover route set} of a
point $\rp$ easily, where $\crs{\rp}=\var{PList}[\rp]$.
The crossover route set of each filter point $\rp \in
\sfill.P$ can be sorted by $|\crs{\rp}|$ to give higher priority
to the points which are crossed by more routes in the filtering
phase.

Starting from the root node of {\fr}, the filtering algorithm
iteratively accesses the entries of {\fr} from a heap in ascending
order of their minimum distances to the query $Q$.
The accessed points are used for filtering the search space.
If an accessed entry $e$ of index can be filtered -- $e$ is
pruned by more than $k$ routes -- it can be skipped (see
Algorithm~\ref{alg:pt}).
Otherwise, if $e$ is an intermediate or leaf node, its children are
inserted into the heap; if $e$ is a route point and cannot be
filtered, it is inserted in the filter set $\sfill$ and its
half-space is used to filter the search space.
The filtering algorithm terminates when the heap is empty.
The details can be found in Algorithm~\ref{alg:filteringpoint}.
\begin{algorithm}[h]
	\KwOut{$\sfill$: filtering set, 
		$\srefine$: filtered node set for refinement}
	\caption{$\mathsf{FilterRoute}(\var{root}_r$, $Q$, $k$)}
	\label{alg:filteringpoint}
	$\var{minheap} \leftarrow \varnothing$, $\sfill \leftarrow \varnothing$,
	$\srefine \leftarrow \varnothing$\;
	$\var{minheap.push}(\var{root}_r)$\;
	\While{$\avar{minheap.isNotEmpty()}$}
	{
		$e \leftarrow \var{minheap.pop()}$\;
		\If{$e$ is a node}
		{
			\If{$\mathsf{isFiltered}$($Q$, $\sfill$, $e$, $k$)\label{fps:isfilter}}
			{
				$\srefine.\var{add}(e) \label{alg:fil:refine}$\tcp*{\small Filtered node set}
				\textbf{continue}\;
			}
		}
		\eIf{$e$ is a point}
		{
			$\sfill.\var{add}(e, \mathcal{C}(e))$\label{filterp:add}\tcp*{\small Filtering set}
		}
		{ 
			\ForEach{child $c$ of $e$}
			{
				$\var{minheap.push}(c,\var{MinDist}(Q, c))$\;
			}
		}
	}
	\Return $(\sfill, \srefine)$\;
\end{algorithm}

The minimum distance from a child $c$ 
to the query is computed as the minimum 
distance from every query point to the node $c$:
\begin{equation}
\var{MinDist}(Q, c) = \min_{q \in Q}{\var{MinDist}(q, c)}
\end{equation}

In Line~\ref{filterp:add} of Algorithm~\ref{alg:filteringpoint}, a 
point that cannot be pruned is a filter point and is added
into $\sfill$.
First the route ID of the point is found, and inserted into
$\sfill.R$.
Then the point is inserted into the corresponding sorted point list
$\sfill.P$, and each point is affiliated with a list of route IDs
containing it.

Algorithm~\ref{alg:pt} shows how the filtering works.
The filtering of a node is conducted in two steps.
In step 1 (Line~\ref{filter:step1}-\ref{filter:step1-end}), the
filter points $\sfill.P$ are processed to do the filtering.
All points in $\sfill.P$ are sorted by the size of their crossover
route set, and each point is accessed in a descending order.
If a filtering point if found that can filter the node, then all
affiliated route IDs are added to $S$.
If $S$ contains more than $k$ unique route IDs, termination occurs
and the node can be filtered out.
After checking all the filtering points, step 2
(Line~\ref{filter:step2}-\ref{filter:step2-end}) is initiated, and
the routes inside $\sfill.R$ are used for filtering.
Finally, the filtering method based on Voronoi diagrams is employed
(Section~\ref{sec:voronoi}).

\begin{algorithm}[h]
	\KwOut{whether the \var{node} can be filtered}
	\caption{$\mathsf{IsFiltered}(Q$, $\sfill$, $\var{node}$, $k$)}
	\label{alg:pt}
	$\var{S} \leftarrow \emptyset$\;
	\ForEach{$p \in \sfill.P$\label{single:start}\label{filter:step1}}
	{	\tcp{access list points in descending order}
		\If{$\avar{S}.\avar{size}>k$}
		{
			\Return \var{true};
		}
		$\var{label} \leftarrow \var{true}$\;		
		\ForEach{$q \in Q$}
		{
			\If{\avar{node} located in $\hs{p}{q}$}
			{
				$\var{label} \leftarrow false$\;
			}
		}
		\If{$\avar{label}=\avar{true}$}
		{
			$\var{S} \leftarrow \var{S} \cup \crs{p}$\tcp*{crossover route set}
		}\label{single:end}
	}\label{filter:step1-end}
	$S^{'} \leftarrow \sfill.R-S$\;
	\label{filter:step2}
	\ForEach{$\avar{route} \in S^{'}$}
	{
		\If{$S.\avar{size}>k$}
		{
			\Return $\var{true}$;
		}
		
		\tcp{\small see Section~\ref{sec:voronoi}}
		\If{$\mathsf{VoronoiFiltering}(Q, \avar{route} , \avar{node})$}
		{
			$\var{S} \leftarrow \var{S} \cup \{\var{route}\}$\;
		}
		\label{filter:step2-end}
	}
	\Return $\var{false}$;
\end{algorithm}

\subsubsection{Transition Pruning}
\label{sec:tp}
Based on the filter set $\sfill$, entries $e$ from {\sr} are added to a
heap which is sorted by the distance to the query in ascending
order, and checked to see if they can be pruned by $\sfill$ using 
Algorithm~\ref{alg:pt}. 
\scomment{\shane{I don't understand the logic of this sentence at all.
		You take items from $\sfill$, heapify them, and then uses the remaining
		items in $\sfill$ to prune?}}
Algorithm~\ref{alg:pt} uses the candidates in $\sfill$ to check
whether $e$ is located in a filtering space of $Q$.
The transition points that cannot be pruned are inserted into the
candidate set for further refinement.

Algorithm~\ref{alg:rfiltering} describes the procedure to prune the
transition points using the generated filter set $\sfill$ from $\sr$.
It is similar to the filtering method for generating the filtering
set.
The main difference with the traversal of Route R-tree is that only
the unpruned points need to be stored, and the filtering set $\sfill$
is fixed.
As a result, a set of transition points $\scnd$ is obtained which
takes the query routes as $k$ nearest neighbors.

\begin{algorithm}[h]
	\KwOut{$\scnd$: candidate set}
	\caption{$\mathsf{PruneTransition}(\var{root}_t$, $Q$, $\sfill$, $k$)}
	\label{alg:rfiltering}
	$\var{minheap} \leftarrow \varnothing$, $\scnd \leftarrow \varnothing$\;
	$\var{minheap.push}(\var{root}_t)$\;
	\While{$\avar{!minheap.isEmpty()}$}
	{
		$\var{e} \leftarrow \var{minheap.pop()}$\;
		\If{$\avar{e}$ is a Node}
		{
			\If{$\mathsf{isFiltered}$($Q$, $\sfill$, $\avar{e}$, $k$)}
			{
				\textbf{continue}\;
			}
		}
		\eIf{$\avar{e}$ is a point}
		{
			$\scnd.\var{add}(e)$\;
		}
		{ 
			\ForEach{child $c$ of $\avar{e}$}
			{
				$\var{minheap.push}(c,\var{MinDist}(c, Q))$\;
			}
		}
	}
	\Return $\scnd$\;
\end{algorithm}

\subsubsection{Verification}
\label{sec:verify}
The verification mainly uses the filtered node set $\srefine$ during
the traversal of $\fr$ to find $\sfill$ in
Algorithm~\ref{alg:filteringpoint}.
It can be divided into two steps.
First, the nodes in $\fr$ encountered during the filtering phase are
kept in $\srefine$ in Line~\ref{alg:fil:refine} of
Algorithm~\ref{alg:filteringpoint}.
The verification algorithm runs in rounds.
In each round, one of the nodes in $\srefine$ is opened and its
children are inserted into $\srefine$.
During each round, the nodes and points in $\srefine$ are used to
identify the candidates that can be verified using $\srefine$, which
are the nodes confirmed as {\rknnt} or guaranteed not to be {\rknnt}.
Such candidates are verified and removed from $\scnd$.
The algorithm terminates when $\scnd$ is empty.
The result set is then stored for a second round of verification.

To verify a candidate effectively, if more than $k$ routes are found
in $\srefine$ which are closer to the query than the candidates, then
it can safely be removed from $\scnd$.
Hence, we maintain a set to store the unique IDs of these routes when
every candidate point is checked in $\scnd$.
The route IDs are found, and the set is updated using $\nlist$ when
new filtering points or nodes from $\srefine$ are found.
When the number of IDs in a set is greater than $k$, it can be
removed from $\scnd$.

After finding the transition points for the routes, as they will take
the query as $k$ nearest neighbors, then for the $\exists$\rknnt, the
transition ID for all remaining points can be returned as the final
result $\sresult$ in the second step.
While $\forall$\rknnt, if a transition only has one point in the
result set, then it will be pruned, and only the transitions which
have both points in the result will be considered as the real result
and added to $\sresult$.

\section{Optimizations on Filtering Process}\label{sec:optimization}
\subsection{Voronoi-based Filtering}
\label{sec:voronoi}
One problem of the filtering method in Algorithm~\ref{alg:pt} is that
the filtering space obtained from a single point and the query is
usually very small.
For example in Figure~\ref{fig:onepoint}, $\var{MBR}_1$ cannot 
be pruned, so it needs to load
$\var{MBR}_2$ and $\var{MBR}_3$ to perform further checks, which
require additional pruning time.
To further enlarge the pruning space, the available filtering points
in a single route can be used rather than a single point to perform
the pruning, namely, $\sfill$ can be used for additional pruning.
Given a query and a filtering route $\route$, a larger filter space
can be explored.
To find this area, Voronoi cells can be used.

To accomplish this, a plane is partitioned with points into several
convex polygons, such that each polygon contains exactly one
generating point, and every point in a given polygon is closer to its
generating point than to any other.
The convex polygon of one point is called as the \emph{Voronoi cell},
and the point is called the \emph{kernel} of this cell.
By plotting the Voronoi diagram $\mathcal{V}_{\route, Q}$ between the
query $Q$ and a filtering route $\route$, as shown in 
Figure~\ref{fig:half}. The Voronoi cell
$\mathcal{V}_{\route, Q}[p]$ of the route $\route$ can be found, and
any point inside these cells is closer to the filtering route than
the query.
Furthermore, if a node does not intersect with any cell of the query,
then any point inside this node will be closer to the filtering route
than the query.
If a node can find more than $k$ such filtering routes, then the node
can be pruned.
\begin{figure}
	\centering
	\vspace{-1em}
	\includegraphics[height=4cm]{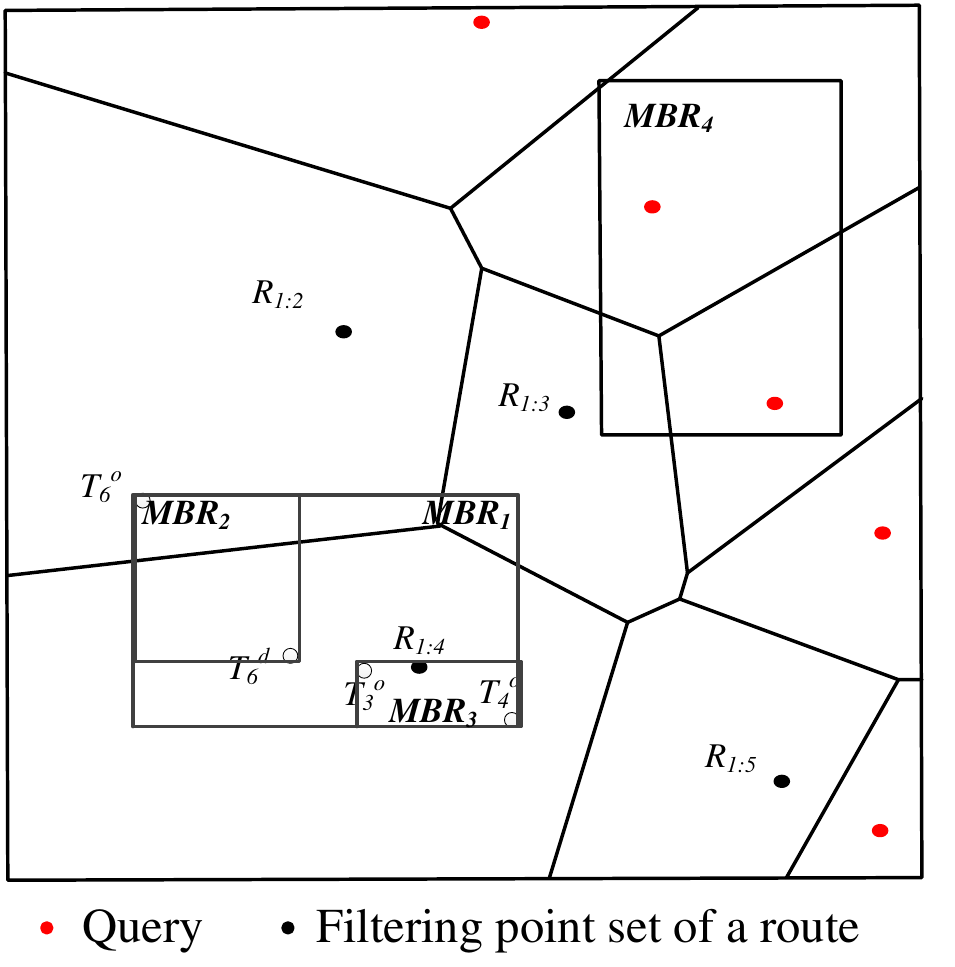}
	\vspace{-1em}
	\caption{Pruning based on the Voronoi diagram of a query (red) and 
		a filtering route composed of $4$ points (black).}
	\label{fig:half}
\end{figure}

\begin{definition}{(\textbf{Voronoi Filtering Space})}
	\label{de:mfs}
	Given a filtering route $\route$ and query $Q$,
	we define the Voronoi filtering space as:
	\begin{equation}
	\hs{\route}{Q} = \bigcup\limits_{p \in \route} {\mathcal{V}_{\route, Q}[p]}
	\end{equation} 
	which is a union of the Voronoi cells of all points from
	$\route$, and $\mathcal{V}_{\route, Q}$ is the Voronoi
	diagram of union of points from $\route$ and $Q$.
\end{definition}


Any transition point inside $\hs{\route}{Q}$ cannot have $Q$ as the
nearest neighbor.
As shown in Figure~\ref{fig:half}, for any point in the
Voronoi filtering space, it can find a point in the filtering
route which is closer than any point in the query.
Hence, two points in a transition can both find a point in
the filtering route rather than the query, so the transition
point will not the take query as the nearest neighbor.

The filtering route $\route$ is used to further
prune the transition point if it cannot be pruned by the filter points
in $\route$ one by one as shown in Line~\ref{single:start}
-\ref{single:end} of Algorithm~\ref{alg:pt}.
After scanning all the filtering points of a route in $\sfill.P$, we
will use the Voronoi filtering space of the route for the query to
prune the transition points, where the space has been created after
getting the filtering route set, then the pruning space will be
larger, and we can find one more route which is closer to 
the entry than the query if it can prune the
entry.

For example, consider the $4$ points belonging to a same route $R_1$
to prune the transition points in Figure~\ref{fig:half}.
The filtering space is larger than the area shown in
Figure~\ref{fig:onepoint}, and $\var{MBR}_1$ is entirely located
within the filtering space, so it can be pruned from consideration.
Since the Voronoi diagram can be produced at the same time as when
the perpendicular bisectors from query to every filtering point are
computed, then there is no additional cost to generate the Voronoi
information.
This additional pruning rule improves the probability of a node being
pruned.

\subsection{Divide \& Conquer Method}
\label{sec:dc}
Note that the processing of the proposed method will be complex when
the query has many points.
The main reason is that a node has to be filtered by every query
point, and the probability of a point being pruned will be lower
when the query length is larger.
To alleviate this problem, we introduce a divide-and-conquer method based
on our processing framework.

\begin{lemma}
	\label{le:trw}
	The {\rknnt} of a multi-point query is the union of the {\rknnt}
	of all points in a query:
	\begin{equation}
	\rknnt(Q)=\bigcup_{q_i\in Q}{\rknnt{(q_i)}}
	\end{equation}
\end{lemma}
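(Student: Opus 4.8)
The plan is to prove the set equality $\rknnt(Q)=\bigcup_{q_i\in Q}\rknnt(q_i)$ by establishing inclusion in both directions, working directly from the definitions of $\exists\rknnt$ (Definition~\ref{def:rknnt}) and $\exists\knn$. Recall that for a transition $\tran$ and a route (or route point) $X$, we have $X\in\exists\knn(\tran)$ iff $X\in\knn(\tp_o)\cup\knn(\tp_d)$, i.e.\ iff at least one of the two transition points takes $X$ among its $k$ nearest routes. The key fact I would isolate first is a monotonicity/transitivity-style observation analogous to Lemma~\ref{le:tr}: for a transition point $\tp$, having the single query point $q_i$ among the $k$ nearest points of the route set is \emph{at least as hard} as having the whole route $Q$ among the $k$ nearest routes, because $\tdist{\tp}{Q}=\min_{q\in Q}\pdist{\tp}{q}\le\pdist{\tp}{q_i}$. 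Hence if $q_i$ beats all but $k-1$ other route points in distance to $\tp$, then $Q$ certainly beats all but (at most) $k-1$ other routes.

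For the direction $\bigcup_{q_i\in Q}\rknnt(q_i)\subseteq\rknnt(Q)$: take $\tran\in\rknnt(q_i)$ for some $q_i\in Q$. Then some transition point $\tp\in\{\tp_o,\tp_d\}$ has $q_i$ among its $k$ nearest route points. By the observation above, $\tdist{\tp}{Q}\le\pdist{\tp}{q_i}$, and every route $\route'$ that is strictly closer to $\tp$ than $Q$ would have some point strictly closer than $q_i$; counting route-membership as in Lemma~\ref{le:tr} and the crossover-set discussion, the number of routes closer to $\tp$ than $Q$ is bounded by the number of route points closer to $\tp$ than $q_i$, which is at most $k-1$. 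Thus $Q\in\knn(\tp)\subseteq\exists\knn(\tran)$, so $\tran\in\rknnt(Q)$. I expect this direction to need a careful argument that "at most $k-1$ route points strictly closer than $q_i$" implies "at most $k-1$ routes strictly closer than $Q$" — this is where the point-vs-route distinction bites, and it should follow because $\route'$ closer to $\tp$ than $Q$ forces at least one point of $\route'$ closer than the closest point of $Q$, hence closer than $q_i$.

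For the direction $\rknnt(Q)\subseteq\bigcup_{q_i\in Q}\rknnt(q_i)$: take $\tran\in\rknnt(Q)$, so some point $\tp$ has $Q$ among its $k$ nearest routes. Let $q^\ast=\arg\min_{q\in Q}\pdist{\tp}{q}$ realize $\tdist{\tp}{Q}$. I claim $\tp$ has $q^\ast$ among its $k$ nearest route points (over the point set underlying $\rs$), so $\tran\in\rknnt(q^\ast)$. The reasoning: any route point $\rp$ with $\pdist{\tp}{\rp}<\pdist{\tp}{q^\ast}=\tdist{\tp}{Q}$ belongs to some route $\route$, and then $\tdist{\tp}{\route}\le\pdist{\tp}{\rp}<\tdist{\tp}{Q}$, so $\route$ is one of the routes strictly closer to $\tp$ than $Q$; there are at most $k-1$ such routes. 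If two such points $\rp,\rp'$ belonged to the same route we would be over-counting in our favor, so the count of qualifying route points is still $\le k-1$ — hence $q^\ast$ is within the top $k$. The mild subtlety here (and the main obstacle overall) is handling ties and the map from "close route points" to "close routes" consistently with how $\knn$ is defined with $\ge$ in Definition~\ref{def:knn}; I would phrase everything with non-strict inequalities to match that definition and note that ties only help the inclusion. Combining both inclusions yields the claimed equality, and I would remark that this immediately suggests a divide-and-conquer algorithm: answer $\rknnt(q_i)$ for each $q_i$ separately and union the results.
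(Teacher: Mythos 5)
Your overall strategy --- proving both inclusions from the observation $\tdist{\tp}{Q}=\min_{q\in Q}\pdist{\tp}{q}\le\pdist{\tp}{q_i}$, and using the minimizing point $q^\ast$ for the forward inclusion --- is exactly the structure of the paper's (much terser) proof, and the lemma is true. The gap is in the execution: you silently replace the route-level notion of \knn{} (Definition~\ref{def:knn} ranks the query against \emph{routes} in $\rs$ under the point-route distance, even when the query is a single point $q_i$) with a ranking of $q_i$ against individual \emph{route points}, and both counting arguments you use to bridge the two levels are false. In your backward direction you claim the number of routes closer to $\tp$ than $Q$ is bounded by the number of route points closer to $\tp$ than $q_i$; this fails at crossover points: a single route point $\rp$ with $|\crs{\rp}|=m$ lying closer than $q_i$ contributes $m$ closer routes but only one closer point, so $q_i$ can be among the $k$ nearest route points while more than $k$ routes are closer than $Q$. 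In your forward direction you claim that ``at most $k-1$ routes closer than $Q$'' forces ``at most $k-1$ route points closer than $q^\ast$,'' remarking that two close points on the same route over-count in your favor; it is the opposite: one route hugging $\tp$ can contribute arbitrarily many route points closer than $q^\ast$, so $q^\ast$ need not be among the $k$ nearest route points at all.

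The repair is to delete the route-point level entirely and compare routes directly, which makes the proof shorter than what you wrote and matches the paper's intent. Backward inclusion: if $q_i\in\knn(\tp)$ then at most $k-1$ routes $\route\in\rs$ satisfy $\tdist{\tp}{\route}<\pdist{\tp}{q_i}$; since $\tdist{\tp}{Q}\le\pdist{\tp}{q_i}$, the set of routes with $\tdist{\tp}{\route}<\tdist{\tp}{Q}$ is contained in the former, so $Q\in\knn(\tp)$. Forward inclusion: with $q^\ast$ realizing $\tdist{\tp}{Q}$, the set of routes closer to $\tp$ than $q^\ast$ is \emph{identical} to the set of routes closer than $Q$, so $Q\in\knn(\tp)$ if and only if $q^\ast\in\knn(\tp)$. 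Your point about phrasing ties with non-strict inequalities to match Definition~\ref{def:knn} is well taken and survives this repair; note the paper's own proof asserts both inclusions without justification, so your level of detail is welcome once the point-versus-route confusion is removed.
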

\begin{proof}
	For a transition, if it takes a query point as a $k$ nearest
	neighbor, then it must be a \rknnt~result for $Q$, 
	so $\bigcup_{q_i\in Q}{\rknnt{(q_i)}} \subseteq \rknnt(Q) $.
	For each transition in $\rknnt(Q)$, it must take one query
	point in $Q$ as the {\knn}, 
	then $\rknnt(Q) \subseteq \bigcup_{q_i\in Q}{\rknnt{(q_i)}}  $.
	Based on above two observations, 
	$\rknnt(Q)=\bigcup_{q_i\in Q}{\rknnt{(q_i)}}$.
\end{proof}
\scomment{\shane{I'm not entirely convinced by this proof. It seems incomplete
		to me. Is this an observation or a lemma?}}

Based on this observation, a divide and conquer framework is proposed
that uses multiple {\rknnt} searches which were introduced in
Section~\ref{sec:baseline}.
The main idea is that {\rknnt} search is performed for every query
point to find a candidate transition point set for every query point
first, and then the transitions containing these points are merged to
get the final transition result.

Even though an $\rknnt$ query mainly targets a route query, it can
process single-point queries as well since every step in the
algorithm does not require that the query have more than one point.
According to Definition~\ref{de:fs}, the filtering space will be the
largest when there is only one query point, so the pruning efficiency
will be the highest when compared with any multi-point query which
extends from this single query point.

\section{Optimal Route Planning}
\label{sec:maxrknnnt}
In this section, we present a solution to the route planning 
problem with a distance threshold
based on {\rknnt}.
We first define a new query called {\maxrknnt}.
A baseline method is proposed first, and then an efficient search
method based on pre-computation and pruning is described.
\begin{figure}[t]
	\centering
	{
		\includegraphics[height=3cm]{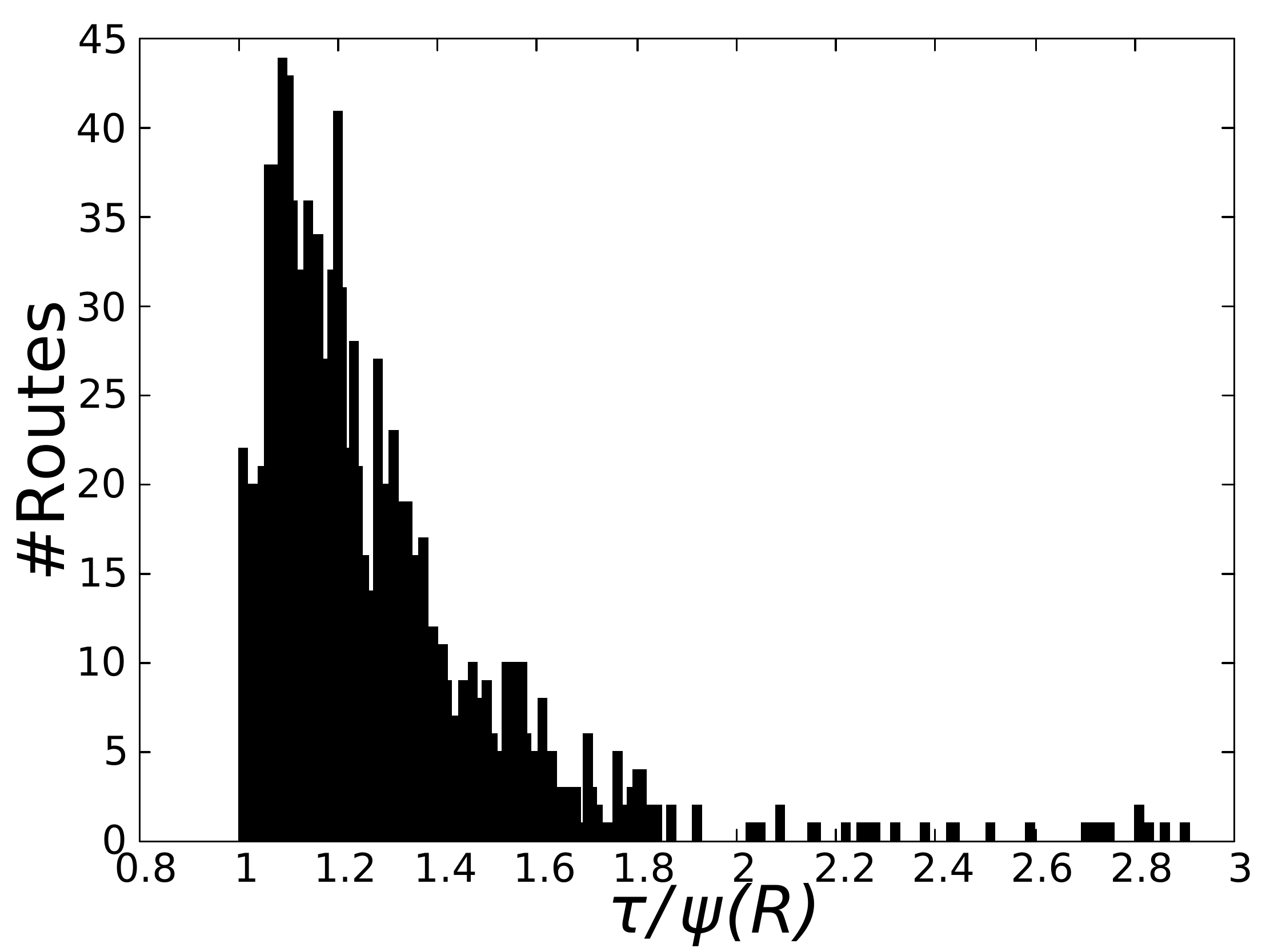}
		\includegraphics[height=3cm]{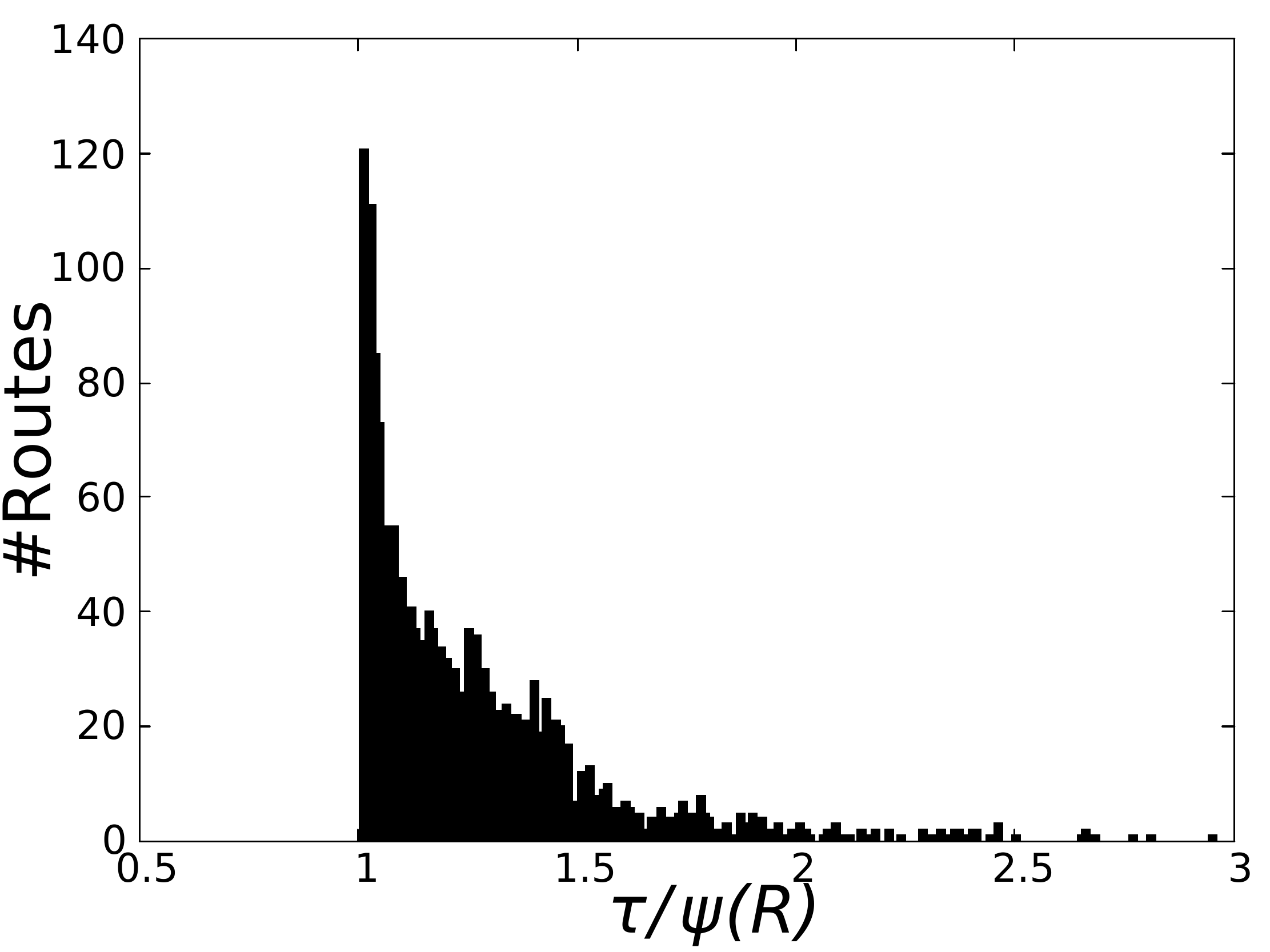}
	}
	\vspace{-1.5em}
	\caption{Frequency histogram of the ratio between travel distance 
		and straight-line distance for all routes in LA and NYC.}
	\label{fig:ratio}
	\vspace{-1em}
\end{figure}

\subsection{Maximizing {\large \rknnt} in a Bus Network}
\label{sec:mrbn}
In bus route planning, the goal is how to attract the maximum number
of passengers within a given distance threshold, since a single bus
cannot cover all stops in a city. For Uber drivers, such a route
also means high possibility to get high profit by attracting bus
passengers to take the Uber to travel alternatively. Next, we
will introduce the maximizing \rknnt~in the bus network.

Here we take the real bus networks in NYC and LA as an example.
Figure~\ref{fig:ratio} shows that the ratio between the travel
distance and the straight line distance between start and end 
bus stops, and does not exceed $2$ in most bus routes.
Hence, such a distance constraint always exists in real-life route
planning.

We first cast the existing bus network as a {\em Weighted Graph}.
\begin{definition}{\textbf{(Weighted Graph)}}
	\label{de:wg}
	$\graph=(E,V)$ is a weighted graph, where $V$ is the vertex set and
	$E$ is a set of edges which connect two vertices among $V$.
	A route in $\graph$ is a sequence of vertices $R=(v_{1},v_{2},\ldots
	,v_{n}) \in V\times V\times \cdots \times V$ such that $ v_{i}$ is
	adjacent to $v_{i+1}$ for $1\leq i<n$, $v_1$ and $v_n$ are the start
	and end vertex respectively.
\end{definition}

Given a route $\route$, $\traveldistance{\route}$
is the travel distance starting from start to end through every
vertex in the route:
\begin{equation}
\label{equ:td}
\traveldistance{\route}= \sum_{p_i\in \route\space \&\space i\in[1, n-1]}{\avar{distance}(p_i,p_{i+1})}
\end{equation}

Recall Definition~\ref{def:rknnt}, given a route $\route$ in
$\graph$, among the transition set $\ts$, the $\rknnt$ of $\route$
can find all transitions that would choose it as a {\knn}.
The passengers who are likely to take $\route$ are
the {\rknnt} set of $\route$.
Let $\omega(\route) = \rknnt(\route)$ for simplicity.
We now formally define the Maximizing \rknnt~($\maxrknnt$) problem
for route planning.

\begin{definition}{\textbf{(\maxrknnt)}}
	\label{de:maxrknnt}
	Given a distance threshold $\tau$, a starting vertex $\orgin$ and a
	destination vertex $\destination$, $\maxrknnt(\orgin, \destination, \tau)$
	returns an optimal route $\planroute$ from $S_{se}$ such that $\forall
	\planroute^{'}\in S_{se}-\planroute$, $|\omega(\planroute)| \ge
	|\omega(\planroute^{'})|$ and $\traveldistance{\planroute} \le \tau$
	, where $S_{se}$ is the set of all possible routes in $\graph$ that
	share same start and end vertex.
\end{definition}

The definition of $\minrknnt$~can be defined by changing
$|\omega(\planroute)| \ge |\omega(\planroute^{'})|$ to
$|\omega(\planroute)| \le |\omega(\planroute^{'})|$ in
Definition.~\ref{de:maxrknnt}.
In this paper, we propose a search algorithm which can solve both
{\maxrknnt} and {\minrknnt}.
By default, we choose {\maxrknnt} for ease of illustration.

\myparagraph{Baseline}
A brute force method for {\maxrknnt} is to find all the candidate
routes which meet the travel distance threshold constraint.
This can be done by extending the {\em k shortest path} method proposed
by {\cite{Yen1971,Martins2003}} with a loop to find the sub-optimal
route until the distance threshold $\tau$ is met.
Then a {\rknnt} query is ran for each candidate and the one with
maximum number of results as the optimal route is selected.
We call this method {\textbf{BF}}.
Recall the query in Figure~\ref{fig:graph}, where almost all routes
such as $\overline{abej}$,$\overline{acej}$ and $\overline{acehj}$
will be candidates.

However, the performance of {\rknnt} decreases as the number of
points increases, which is discussed in more detail in
Section~\ref{sec:em}.
For the bus route planning, it may be tolerable to wait for a few
seconds to conduct {\maxrknnt} query.
However, for real time queries, like identifying profitable routes
for Uber drivers, this method will not work well.
To achieve better performance, an efficient route searching algorithm
is proposed based on the pre-computation of the {\rknnt} set for each
vertex in $\graph$.
\scomment{
	\shane{Technically, if you're pre-computing things, how can it be
		realtime?}
	\sheng{Since the graph is static, we can pre-compute the distance
		and rknnt in advance, when query is coming, we can achieve more efficient
		search based those pre-computed thing.}}

\subsection{Our Solution}
According to Lemma~{\ref{le:trw}}, the query $Q$ can be decomposed
into a set of $|Q|$ queries, which means that we can get the
pre-computed {\rknnt} set for every vertex, and conduct a union
operation on all vertices in a route to get the final {\rknnt} set
for that route.

By using the above property, we introduce a pre-computation based
method with a fixed $k$ which provides better performance.
Note that even though $k$ should be fixed in the pre-computation,
multiple datasets of representative $k$ can be generated in advance
to meet different requirements.

\subsubsection{Pre-computation}
For every vertex in $\graph$, an $\rknnt$ query is ran, and the
result stored.
A pre-computed matrix $\mathcal{M}_\psi[i][j]$ is created which
stores the pre computed all-pair shortest distance for all vertexes in
$\graph$ using the Floyd-Warshall algorithm~{\cite{Burfield2013}}.
The details of pre-computation can be found in
Algorithm~{\ref{alg:pre}}.
\scomment{
	\shane{Umm, isn't Floyd-Warshall a cubic algorithm?
		This does not seem like a scalable algorithm at all.}
	\sheng{It is pre-computed, which is independent with query.}}
\begin{algorithm}[h]
	\KwOut{$\graph.V$: the vertexes with \rknnt~set}
	\caption{$\mathsf{Precomputation}(\graph$, $\rs$, $\ts$, $k$)}
	\label{alg:pre}
	$\var{root}_{\svar{t}} \leftarrow \var{CreateIndex}(\ts)$\tcp*{ \small root of \fr}
	$\var{root}_{\svar{r}} \leftarrow \var{CreateIndex}(\rs)$\tcp*{ \small root of \sr}
	\ForEach{vertex $v \in \graph$}
	{
		$\sresult \leftarrow \rknnt(v, \var{root}_{\svar{r}},
		\var{root}_{\svar{t}})$\tcp*{\small call Algorithm~\ref{alg:fw} by query $v$}
		\scomment{
			\ForEach{Transition $t \in \sresult$ }
			{
				$\var{root}_{\svar{t}} \leftarrow label(\var{root}_{\svar{t}}, t.p_o)$\;
				$\var{root}_{\svar{t}} \leftarrow label(\var{root}_{\svar{t}}, t.p_d)$\tcp*{\small labeled as visited}
			}}
			$\graph.V.\rknnt(v)\leftarrow \sresult$\tcp*{\small update the set on vertex}
			\ForEach{vertex $v^{'} \in \graph-v$}
			{
				$\mathcal{M}_\psi[v][v^{'}] \leftarrow \mathsf{ShortestDistance}(\graph, v, v^{'})$\;
			}
		}
		\Return $\graph.V$ \;
	\end{algorithm}
	
	With the pre-computed {\rknnt} set, we can further improve the
	performance of the baseline method {\textbf{BF}}.
	After getting all candidate routes that do not exceed the distance
	threshold, the {\rknnt} set of each route can be found by performing
	a union operation on the sets.
	Compared with the baseline method, the on-the-fly {\rknnt} query is
	replaced with pre-computation, and the running time for the search is
	reduced to the search time of $k$ shortest path search.
	However, it is still possible to leverage distance constraints and
	dominance relationships to prune additional routes in advance.
	
	\begin{figure}
		\vspace{-1em}
		\begin{minipage}[b]{.5\linewidth}
			\centering
			\label{fig:expansion}
			\includegraphics[width=5.3cm]{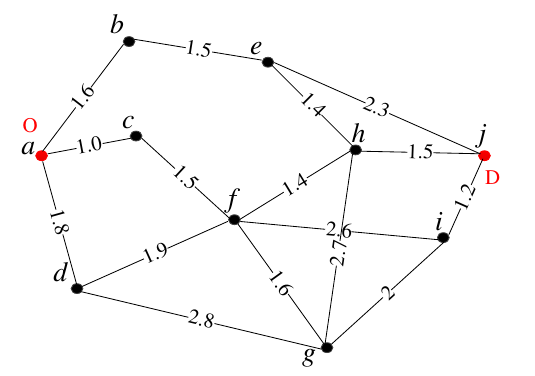}
			\vspace{-3mm}
		\end{minipage}~~~~~~~~~~~~~
		\begin{minipage}[b]{.45\linewidth}
			\label{fig:rknntweight}
			\begin{tabular}{|c|c|}
				\hline
				Vertex & \rknnt~Set \\
				\hline
				$a$      & $T_{1}^o$         \\
				\hline
				$b$      &    $T_{1}^d$          \\
				\hline
				$c$      &       $T_{1}^d, T_{3}^o, T_{4}^o$     \\
				\hline
				$d$      &      $T_5^o$        \\
				\hline
				$e$      &     $T_{2}^o$         \\
				\hline
				$f$      &      $T_2^o, T_3^d,T_4^d$        \\
				\hline
				$g$      &     $T_5^o$         \\
				\hline
				$h$      &       $T_{2}^d$        \\
				\hline
				$i$    &        $T_6^o$      \\
				\hline
				$j$     &       $T_6^d$      \\
				\hline
			\end{tabular}
			\vspace{-1em}
		\end{minipage}
		\caption{An exemplar graph with a query $(a,j,6)$ where $a$ and $j$ are
			the start and end vertexes, $\tau=6$ is the distance threshold, and
			the table shows the {\rknnt} set for each vertex.}
		\label{fig:graph}
		\vspace{-2em}
	\end{figure}
	
	\begin{example}
		As shown in Figure~\ref{fig:graph}, the red points are the start
		$O=a$ and end $D=j$ respectively.
		A query formed by these two points and $\tau=6$ return the route with
		largest {\rknnt} set, where the number on each edge is the distance
		between two vertices, and the label is the vertex ID.
		The table shows the pre-computed $\rknnt$ set for each vertex.
		So, $\omega(\overline{acfhj})=\{T_1,T_2,T_3,T_4,T_6\}$ and
		$\traveldistance{\overline{acfhj}}=1+1.5+1.4+1.5=5.4$.
	\end{example}
	
	\subsubsection{Route Searching by Pruning}
	After getting the {\rknnt} set for every vertex in the graph $\graph$,
	Algorithm~\ref{alg:csp} can be ran to get the optimal route based on the
	pre-computed Euclidean distance of every edge.
	Specifically, the neighbor vertices are accessed around the starting
	point, and two levels of checking are performed to see whether the
	current partial route $\currentroute$ is feasible.
	If it is, it is inserted into the priority heap $\mathcal{Q}$, and
	the partial route is increased until it meets the end point
	$\destination$ and has the maximum result set size.
	Specifically, the two checking functions work as below: 
	
	\begin{algorithm}
		\KwOut{$\planroute$: the optimal route}
		\caption{{\maxrknnt}($o$, $d$, $\tau$)}
		\label{alg:csp}
		\If{$\mathsf{checkReachability}(\orgin, \destination,\tau)$}
		{
			\Return $\varnothing$\;	
		}
		$\planroute \leftarrow \varnothing$,
		$\currentroute \leftarrow \{\orgin\}$\;
		$\traveldistance{\currentroute} \leftarrow 0$\tcp*{\small travel distance}
		$\omega(\currentroute) \leftarrow \graph.V.\rknnt(\orgin)$\tcp*{\small \rknnt~set}
		$\queue \leftarrow \varnothing$ \tcp*{ \small queue stores the partial routes}
		$push(\queue, \{\currentroute, \traveldistance{\currentroute}, \omega(\currentroute)\})$\;
		${\var{max}} \leftarrow |\omega(\currentroute)|$ \label{alg:max:max}\;
		\While{$\queue\ne \varnothing$}
		{
			$ \{\currentroute , \traveldistance{\currentroute},  \omega(\currentroute)\} \leftarrow pop(\queue)$\;
			$v_i \leftarrow \var{GetEnd}(\currentroute)$\;
			\ForEach{$v_j\in \avar{Neighbor}(\graph, v_i)$}
			{
				\If{$\mathsf{checkReachability}(v_j,d,\tau-\traveldistance{\currentroute})$}
				{
					\If{$\mathsf{checkDominance}(o, v_j, \traveldistance{\currentroute} ,  \omega(\currentroute)  )$\label{alg:max:domi}}
					{
						{
							$S \leftarrow \var{Update}(\mathcal{DT}[d], \traveldistance{\currentroute}, \omega(\currentroute)))$\;
							\ForEach{$\avar{candidate}\in S$}
							{
								$\var{Delete}(\mathcal{Q},\var{candidate})$\;
							}
							${\currentroute} \leftarrow \currentroute \cup \{v_j\}$\;
							$\traveldistance{\currentroute} \leftarrow \traveldistance{\currentroute}+\traveldistance{v_i,v_j}$\;
							$\omega(\currentroute) \leftarrow \omega(\currentroute) \bigcup \graph.V.\rknnt(v_j)$\;
							$push(\queue, \{\currentroute, \traveldistance{\currentroute},  \omega(\currentroute)\})$\;		
						}
					}
				}
			}
			\If{$\avar{GetEnd}(\currentroute) = \destination$}
			{
				\If{$|\omega(\currentroute)|> {\var{max}}$ \label{alg:max:compare}}
				{
					\tcp{\small new optimal route}
					$\planroute \leftarrow \currentroute$\;
					${\var{max}} \leftarrow |\omega(\currentroute)|$\;
				}
			}
		}
		\Return $\planroute$\;
	\end{algorithm} 
	
	\myparagraph{$\mathsf{checkReachability}$} 
	This pruning function checks whether the current route meets the
	distance constraint -- namely that the distance from the current
	vertex to the end vertex is less than
	$\tau-\traveldistance{\currentroute}$.
	When $\md{v_j}{d}>\tau-\traveldistance{\currentroute}$, it will
	return false and move to next neighbor of vertex $v_i$ in $\graph$.
	
	\myparagraph{$\mathsf{checkDominance}$}
	This pruning function exploits the dominance relationship between two
	partial routes.
	If a partial route exists that ends at the same vertex and has a
	short route and a larger {\rknnt} set, then it can dominate the
	current route.
	Specifically, a dominating lemma is introduced which works for both
	$\forall${\rknnt} and $\exists$\rknnt.
	\begin{lemma}
		Given two partial routes $R_1^*$ and $R_2^*$ which have the same start
		and end, $R_1^*$ dominates $R_2^*$ in {\maxrknnt} ($R_2^*$ dominates
		$R_1^*$ in {\minrknnt}) when
		$|\traveldistance{R_1^*}|<|\traveldistance{R_2^*}|$ and
		$|\forall\rknnt(R_1^*)|>|\exists\rknnt(R_2^*)|$.
	\end{lemma}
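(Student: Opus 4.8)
The plan is to read ``$R_1^*$ dominates $R_2^*$'' in the branch‑and‑bound sense used by Algorithm~\ref{alg:csp}: pruning $R_2^*$ must never cause the search to miss the optimum. Since $R_1^*$ and $R_2^*$ share both the start vertex $\orgin$ and the same end vertex -- call it $v$ -- the set of ways to complete either partial route into a full $\orgin$-to-$\destination$ route is identical, namely the set of paths $P$ from $v$ to $\destination$ in $\graph$. So it suffices to show that for \emph{every} such completion $P$, the full route $R_1^*\!\oplus\!P$ (the concatenation of $R_1^*$ and $P$ at $v$) is feasible whenever $R_2^*\!\oplus\!P$ is and has an {\rknnt} set at least as large; this makes the best complete route reachable through $R_2^*$ no better than one reachable through $R_1^*$, so $R_2^*$ is safely discardable. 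The {\minrknnt} statement is the mirror image: reverse every inequality and swap the roles of $R_1^*$ and $R_2^*$.

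Feasibility is the routine half. By Equation~\ref{equ:td} the travel distance is additive across the junction, $\traveldistance{R_1^*\!\oplus\!P}=\traveldistance{R_1^*}+\traveldistance{P}$ and likewise for $R_2^*$, so $\traveldistance{R_1^*}<\traveldistance{R_2^*}$ gives $\traveldistance{R_1^*\!\oplus\!P}<\traveldistance{R_2^*\!\oplus\!P}$, hence $\traveldistance{R_1^*\!\oplus\!P}\le\tau$ whenever $\traveldistance{R_2^*\!\oplus\!P}\le\tau$; this is exactly the invariant enforced by $\mathsf{checkReachability}$.

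For the objective I would invoke Lemma~\ref{le:trw}: in the graph setting $\rknntweight{R}=\exists\rknnt(R)=\bigcup_{v_i\in R}\rknnt(v_i)$, so the {\rknnt} set decomposes as $\rknntweight{R_1^*\!\oplus\!P}=\rknntweight{R_1^*}\cup\rknntweight{P}$ and $\rknntweight{R_2^*\!\oplus\!P}=\rknntweight{R_2^*}\cup\rknntweight{P}$, i.e. both completions inherit exactly the same suffix contribution $\rknntweight{P}$. It then remains to compare $|\rknntweight{R_1^*}\cup\rknntweight{P}|$ with $|\rknntweight{R_2^*}\cup\rknntweight{P}|$, which after rearranging is $|\rknntweight{R_1^*}\setminus\rknntweight{P}|\ge|\rknntweight{R_2^*}\setminus\rknntweight{P}|$. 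Here the hypothesis $|\forall\rknnt(R_1^*)|>|\exists\rknnt(R_2^*)|$ is what does the work: the transitions counted by $\forall\rknnt(R_1^*)$ have both endpoints taking some vertex of $R_1^*$ as a {\knn}, so they stay in $\rknnt(R_1^*\!\oplus\!P)$ no matter what $P$ is, and by Lemma~\ref{le:allexist} they lie inside $\rknntweight{R_1^*}$; meanwhile the \emph{entire} set $\exists\rknnt(R_2^*)$ that $R_2^*$ can supply on its own is strictly smaller. The delicate point -- and the step I expect to be the main obstacle -- is controlling the overlap of both sides with $\rknntweight{P}$: one has to argue that the part of $R_1^*$'s guaranteed contribution not already furnished by the suffix still outweighs $R_2^*$'s net contribution, which is precisely why the bound is stated with the stronger $\forall$-count on the left and the weaker $\exists$-count on the right rather than symmetric cardinalities. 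Finally, using $\forall\rknnt(R_1^*\!\oplus\!P)\supseteq\forall\rknnt(R_1^*)$ the same reasoning carries over to the $\forall\rknnt$ variant of the objective, so the dominance rule is valid for both flavours of {\maxrknnt} as claimed.
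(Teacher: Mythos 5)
Your overall strategy is the same as the paper's: fix an arbitrary completion $P$ (the paper calls it $R^{'}$) from the shared end vertex to the destination, note that feasibility transfers because travel distance is additive across the junction, and reduce dominance to comparing $|\rknntweight{R_1^*\oplus P}|$ with $|\rknntweight{R_2^*\oplus P}|$ via the union decomposition of Lemma~\ref{le:trw}. Up to that point the two arguments coincide, and your framing of dominance as ``pruning $R_2^*$ can never lose the optimum'' is a cleaner statement of what the paper leaves implicit.

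However, you stop exactly where the proof has to do its real work. You correctly reduce the claim to showing that $R_1^*$'s contribution beyond $\rknntweight{P}$ outweighs $R_2^*$'s, you correctly observe that the asymmetric hypothesis ($\forall$-count on the left, $\exists$-count on the right) must be what makes this possible, and then you label this ``the main obstacle'' without overcoming it; gesturing at why the hypothesis is stated asymmetrically is not a derivation. The paper closes the gap with one concrete claim: $\forall\rknnt(R_1^*)$ is disjoint from $\rknntweight{P}$, on the grounds that every transition in $\forall\rknnt(R_1^*)$ has \emph{both} its origin and destination points taking $R_1^*$ as a {\knn}. Granting that, $|\rknntweight{R_1^*\oplus P}|\ge|\forall\rknnt(R_1^*)|+|\rknntweight{P}|$, while trivially $|\rknntweight{R_2^*\oplus P}|\le|\exists\rknnt(R_2^*)|+|\rknntweight{P}|$, and the hypothesis $|\forall\rknnt(R_1^*)|>|\exists\rknnt(R_2^*)|$ gives the strict inequality uniformly over all completions $P$; the $\forall$-objective case uses the same two bounds with $\forall\rknnt(P)$ in place of $\rknntweight{P}$. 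Without some such disjointness (or a substitute), the inequality you need can fail: nothing in the hypothesis alone prevents $\rknntweight{R_1^*}\subseteq\rknntweight{P}$ while $\rknntweight{R_2^*}\not\subseteq\rknntweight{P}$, in which case $R_2^*\oplus P$ beats $R_1^*\oplus P$. To finish, you must state and justify that disjointness step explicitly --- and you should scrutinize it, since a transition point may have several routes among its $k$ nearest, so ``both endpoints take $R_1^*$ as a {\knn}'' does not by itself obviously exclude an endpoint also taking $P$ as a {\knn}; this is the one genuinely delicate point in the paper's own argument as well.
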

	\begin{proof}
		For purposes of proving the lemma, we use $\rknntweight{\route}$ and
		$\omega^*(\route)$ to represent $\exists\rknnt(\route)$ and
		$\forall\rknnt(\route)$ to distinguish them.
		Given any partial route $R^{'}$ which starts at $v_j$ and ends at
		$d$, $R_1^*$ and $R_2^*$ can be connected to form two complete routes
		$R_1$ and $R_2$.
		1) For $\exists$\rknnt, if $|\omega^*(R_1^*)|>|\omega(R_2^*)|$, then
		$|\omega(R_1)|\ge |\omega^*(R_1^*)| +|\omega(R^{'})|$, as there is no
		intersection between $\omega^*(R_1^*)$ and $\omega(R^{'})$ because
		$\tran \in \omega^*(R_1^*)$ is the set of transitions that have
		{\knn} in $R_1^*$ for both origin and destination points.
		Given that $|\omega(R_2)| \le |\omega(R_2^*)| + |\omega(R^{'})|$,
		$|\omega(R_1)|>|\omega(R_2)|$, while
		$\traveldistance{R_1^*}<\traveldistance{R_2^*}$, 2)
		$\forall$\rknnt, $|\omega^*(R_2)| \le
		|\omega(R_2^*)|+|\omega^*(R^{'})|$, while $|\omega^*(R_1)| \ge
		|\omega^*(R_2^*)|+|\omega^*(R^{'})|$, so $|\omega^*(R_1)|
		>|\omega^*(R_2)|$.
		Without further spreading, we can see the priority relationship
		between $|\omega(R_1^*)|$ and $|\omega(R_2^*)|$ holds.
	\end{proof}
	
	In Algorithm~\ref{alg:csp}, a dynamic table $\mathcal{DT}$ is
	maintained to store the pairs for every vertex accessed, and updates
	continue when new feasible partial routes are explored during the
	search.
	This is used to compare the {\rknnt} set and the travel distance of
	partial routes.
	The entry for a vertex $v$ inserts a partial route $R^*$ which ends
	at $v$ when an existing partial route cannot be found which dominates
	$R^*$.
	After insertion, old entries in $\mathcal{DT}$ that are dominated by
	the new route $R^*$ are removed.
	If a new one is found that dominates $R^*$, the loop terminates, and
	the next partial route is processed.
	
	\begin{example}
		In Figure~\ref{fig:graph}, $\{\{a\},0,20\}$ is added to the queue
		$\mathcal{Q}$ after checking the reachability from $a$ to $j$ by
		comparing the pre-computed shortest distance with $\tau$.
		Then, pop the queue $\mathcal{Q}$ to get the partial route $R$.
		Next, the last point $a$ of $R$ is checked to see if its neighbor $b$ can be
		reached, and it can since
		$\traveldistance{\overline{bej}}=3.8<(6-1.6)$.
		So $\{\{a,b\},1.6,\{T_1\}\}$ is added to $\mathcal{Q}$.
		Similarly, $\{a,c\}$ is inserted into $\mathcal{G}$.
		$\{a,d\}$ cannot be enqueued as the shortest distance from $d$ to $j$
		is $\traveldistance{\overline{dfhj}}=5.2>(6-1)$.
		$\{\{a,b,e\},3.1,\{T_1,T_2\}\}$ and
		$\{\{a,c,e\},2.6,\{T_1,T_2,T_3,T_4\}\}$ are enqueued and
		$\mathcal{DT}[e] ={\{\{a,b,e\},3.1,\{T_1,T_2\}\}}$ is updated.
		Further, $\{\{a,c,f,h\},3.9,\{T_1,T_2,T_3,T_4\}\}$ is enqueued.
		$\{\{a,b,e,h\},4.5,\{T_1,T_2\}\}$ has a greater travel distance, and
		$\omega(\overline{abeh})=\{T_1,T_2\}$, and
		$\omega^*(\overline{acfh})=\{T_1,T_2,T_3,T_4\}$, so
		$|\omega^*(\overline{acfh})|>|\omega(\overline{abeh})|$,
		$\overline{acfh}$ dominates $\overline{acfh}$.
		Based on this extension in the graph, when $\mathcal{Q}$ is empty,
		the algorithm terminates.
	\end{example}
	
	For {\minrknnt}, Line~\ref{alg:max:max} is changed to
	$\var{max}\leftarrow \infty$, and Line~\ref{alg:max:compare} is
	changed to $|\omega(\currentroute)|< {\var{max}}$.
	Moreover, one additional check called $\mathsf{checkBounds}(\var{max},
	\omega(\currentroute))$ after Line~\ref{alg:max:domi} in
	Algorithm~\ref{alg:csp} must be added.
	Given a partial route $R^*$ and the existing optimal route $R$
	and $\var{max}$, $R^*$ can be discarded when
	$|\omega(R^*)|>\var{max}$ as $R^*$ can not beat
	the existing optimal route $R$.

\section{Experiments}
\label{sec:experiment}
\subsection{Experimental Setup}
\begin{figure}[!htbp]
	\centering
	{
		\includegraphics[height=2cm]{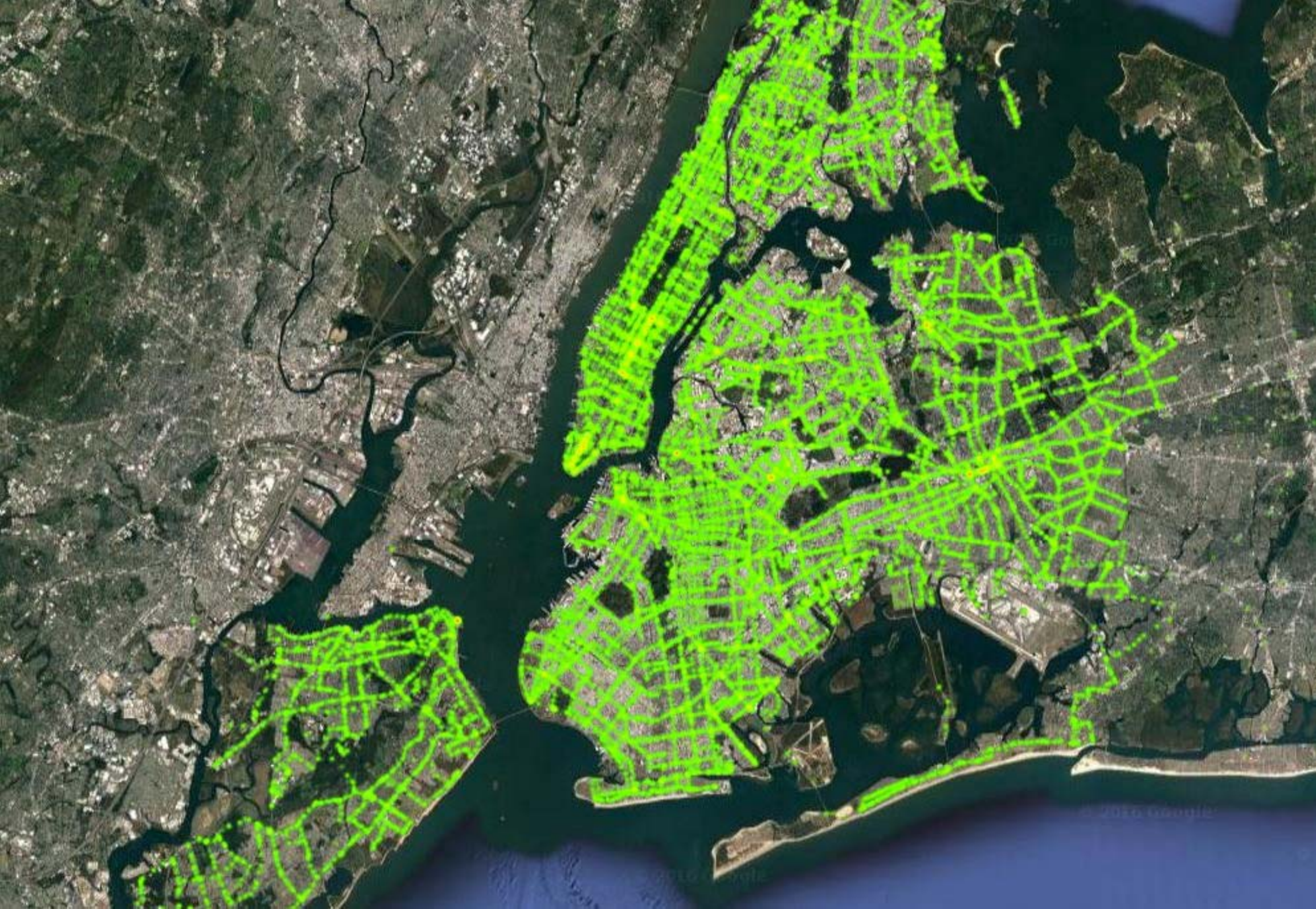}
		\includegraphics[height=2cm]{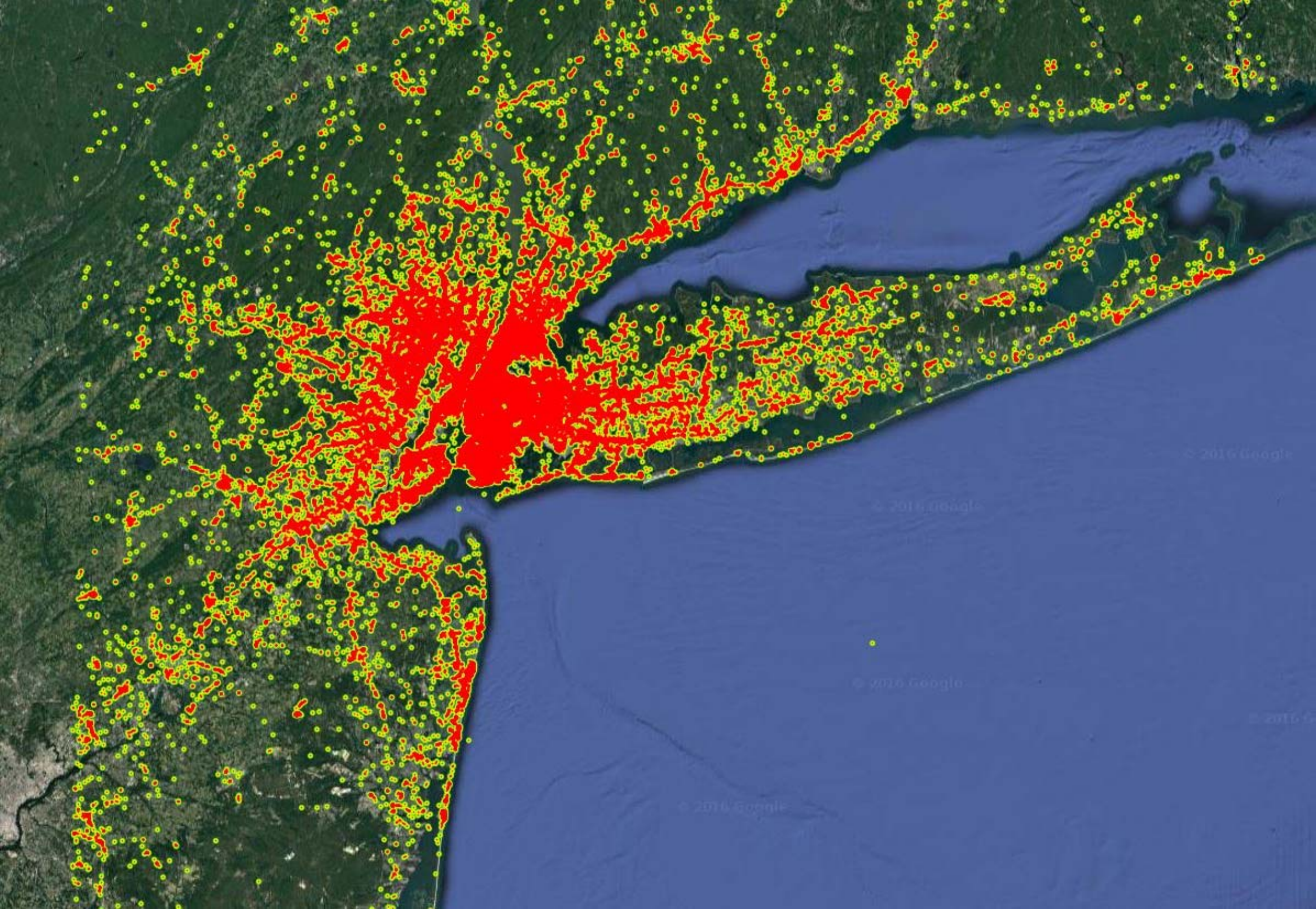}
	}\\
	~\centering
	{
		\includegraphics[height=2cm]{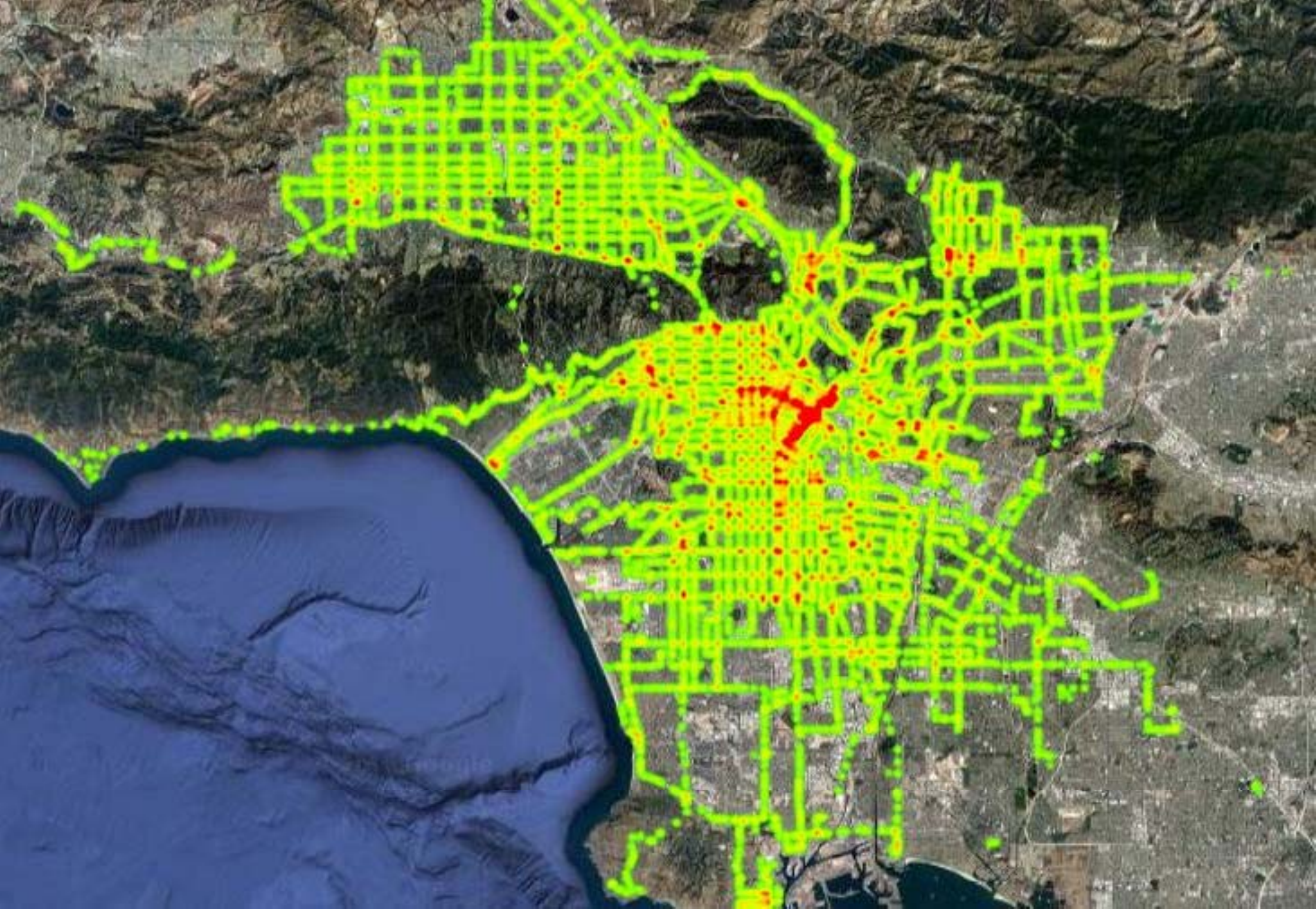}
		\includegraphics[height=2cm]{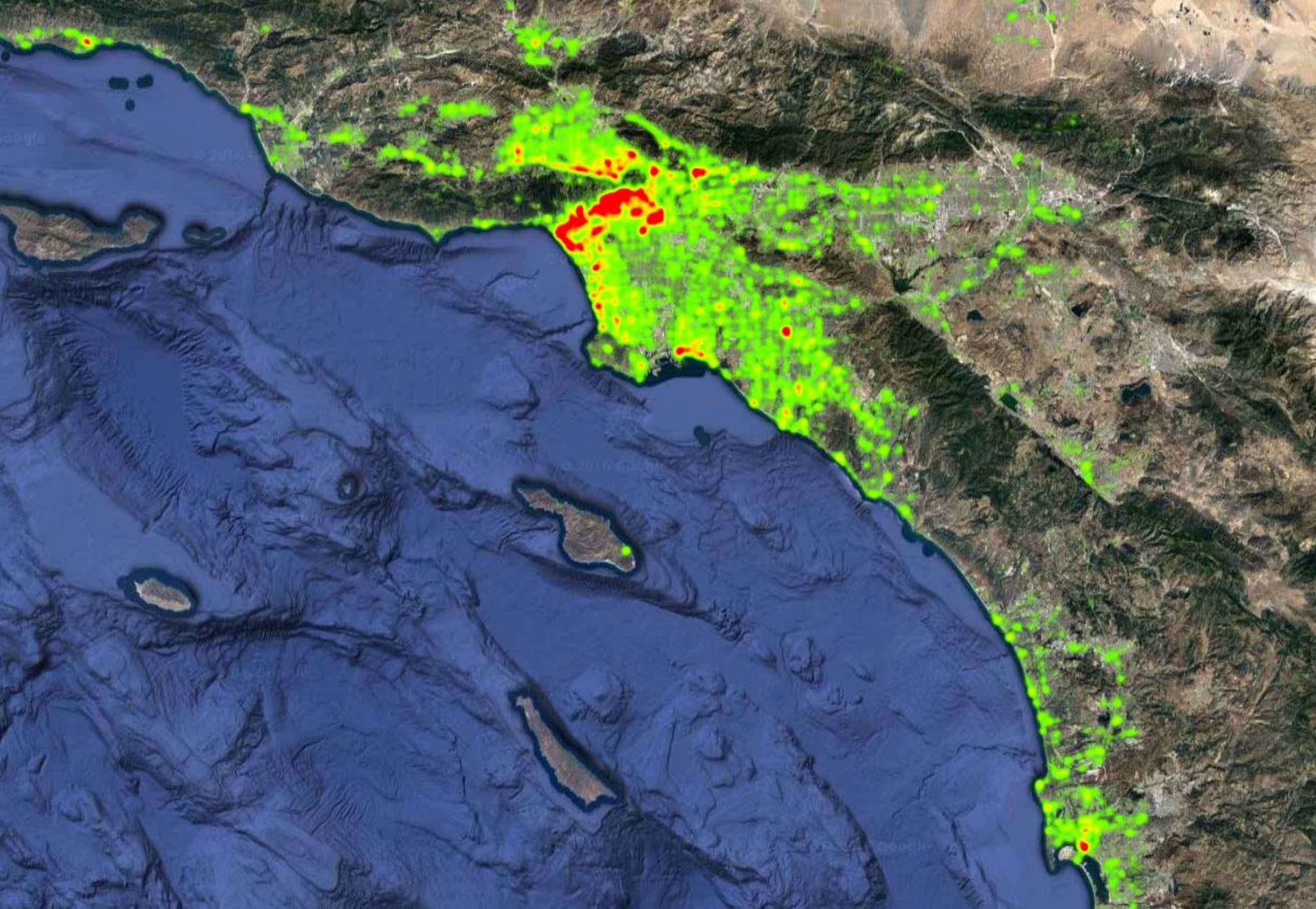}
	}
	\vspace{-1em}
	\caption{The heatmap of the bus route dataset (left) and the transition dataset (right) in NYC (up) and LA (down).}
	\label{fig:heatmap}
\end{figure}

\begin{table}[!htbp]
	\centering
	\caption{Route Datasets.}
	\label{tab:route}
	\begin{tabular}{|c|c|c|c|c|c|}
		\hline
		\textbf{Dataset} &$\left|\rs\right|$ & $|\mathcal{G}.E|$ & $|\mathcal{G}.V|$  \\ \hline \hline
		{\em LA-Route}  &  $1,208$ &  $72,346$   &  $14,119$  \\ \hline
		{\em NYC-Route}   & $2,022$ &  $61,118$  &  $16,999$ \\ \hline
	\end{tabular}
	\vspace{-1em}
\end{table}

\begin{table}[!htbp]
	\centering
	\caption{Transition Datasets.}
	\label{tab:transition}
	\begin{tabular}{|c|c|c|c|c|}
		\hline
		\textbf{Dataset} &$\left|\ts\right|$ & Latitude & 
		Longitude  \\ \hline \hline
		{\em LA-Transit}  &  $109,036$ & [$\ang{32}$,$\ang{35}$] &[$\ang{-120}$,$\ang{-117}$] \\ \hline
		{\em NYC-Transit} &  $195,833$  & [$\ang{39}$,$\ang{42}$]& $[\ang{-75}$, $\ang{-72}$] \\ \hline
		{\em NYC-Synthetic} &  $10,000,000$  & [$\ang{39}$,$\ang{42}$]& $[\ang{-75}$, $\ang{-72}$] \\ \hline
	\end{tabular}
	\vspace{-2em}
\end{table}

\begin{table}[!htbp]
	\centering
	\caption{Parameter Settings.}
	\label{tab:parameter}
	\begin{tabular}{|c|c|}
		\hline
		\textbf{Para} & \textbf{Value}  \\
		\hline
		\hline
		$\left| Q \right|$ & $3$,$4$,\underline{$5$},$6$,$7$,$8$,$9$,$10$  \\
		\hline
		$k$ & $1$,{$5$},\underline{$10$},$15$,$20$,$25$  \\
		\hline
		$\mathcal{I}$ & $1km$, $2km$, \underline{$3km$}, $4km$, $5km$, $6km$\\
		\hline
		$\traveldistance{{se}}$ & $10km$, $20km$, \underline{$30km$}, $40km$, $50km$\\
		\hline
		$\frac{\tau}{\traveldistance{{se}}}$ & $1$, $1.2$, \underline{$1.4$}, $1.6$, $1.8$, $2.0$\\
		\hline
	\end{tabular}
	\vspace{-2em}
\end{table}

We conducted experiments to evaluate our solutions to {\rknnt} and
{\maxrknnt} using real bus route data and check-in data from
Foursquare\footnote{https://foursquare.com/} in New York and Los
Angeles, which are two largest cities in the USA.
We have published our
dataset\footnote{https://sites.google.com/site/shengwangcs/home/rknnt}
to improve the reproducibility of our results.
Figure~\ref{fig:heatmap} shows the heatmap of the route and check-in
datasets.
All experiments were performed on a machine using an Intel Xeon E$5$
CPU with $256$ GB RAM running RHEL v$6.3$ Linux, implemented in C++,
and compiled using GCC $4.8.1$ with -O2 optimization enabled.

\myparagraph{Route Datasets}We 
use two real bus network datasets, namely {\em NYC-Route} and
{\em LA-Route}.
We extracted the data from the GTFS datasets of New York
\footnote{http://web.mta.info/developers/developer-data-terms.html\#data} 
and Los Angeles\footnote{http://developer.metro.net/gtfs/google\_transit.zip}.
Table~\ref{tab:route} provides a breakdown of each dataset.

\myparagraph{Transition Datasets}Two 
real transition datasets, {\em NYC-Transit} and {\em LA-Transit},
were produced by cleaning the Foursquare check-in
data~{\cite{Bao2012Location}}, and statistics for the cleaned data is
shown in Table~\ref{tab:transition}.
Specifically, we divided a user's trajectory with multiple points
into several transitions with two points.
A trajectory with $n$ points can be divided into $n-1$ transitions.
Since the real dataset is small, we also generated a synthetic
dataset which contains $10$ million transitions for the NYC dataset,
and refer to it as {\em NYC-Synthetic}.

\subsection{Evaluation of {\large \rknnt}}
\myparagraph{Algorithms for evaluation}
We compared the following methods when processing {\rknnt} over the
two datasets.
(1) \textbf{Filter-Refine}: The basic framework proposed in
Section~{\ref{sec:baseline}}.
(2) \textbf{Voronoi}: The Voronoi-based method which can create a
larger filtering area by drawing a Voronoi diagram based on the query
and filtering route after regular filtering by points.
(3) \textbf{Divide-Conquer}: As proposed in Section~{\ref{sec:dc}}.

\myparagraph{Queries}
We prepared two query sets: the first set is a synthetic query set
for the purposes of parameter evaluation, and generated as follows: 
1) We randomly generated $1,000$ points from $\rs$.
2) We iteratively chose each point as a start point, and append new
points one by one with a limited rotation angle to simulate a
realistic case.
The rotation angle of every time extension does not exceed
$\ang{90}$, so the query route will not zigzag~{\cite{Chen2014a}}.
All experimental results are averaged by running all $1,000$ queries.
The second query set contains all the routes in {\em NYC-Transit} and
{\em LA-Transit}, which are used as queries to test our most
efficient method, {\comparethree}.

\myparagraph{Parameters}
Table~\ref{tab:parameter} summarizes all key parameters for a
query, and the default values are underlined.
$\mathcal{I} = \frac{\traveldistance{Q}}{|Q|}$ is the interval length
between two adjacent points in the query, where $\traveldistance{Q}$
is the travel distance of the query route and can be computed by
Equation~\ref{equ:td}.

\begin{figure}[h]
	\centering
	\vspace{-1em}
	\subfigure[\textbf{LA}]
	{ \label{fig:ktimela}
		\includegraphics[width=4cm]{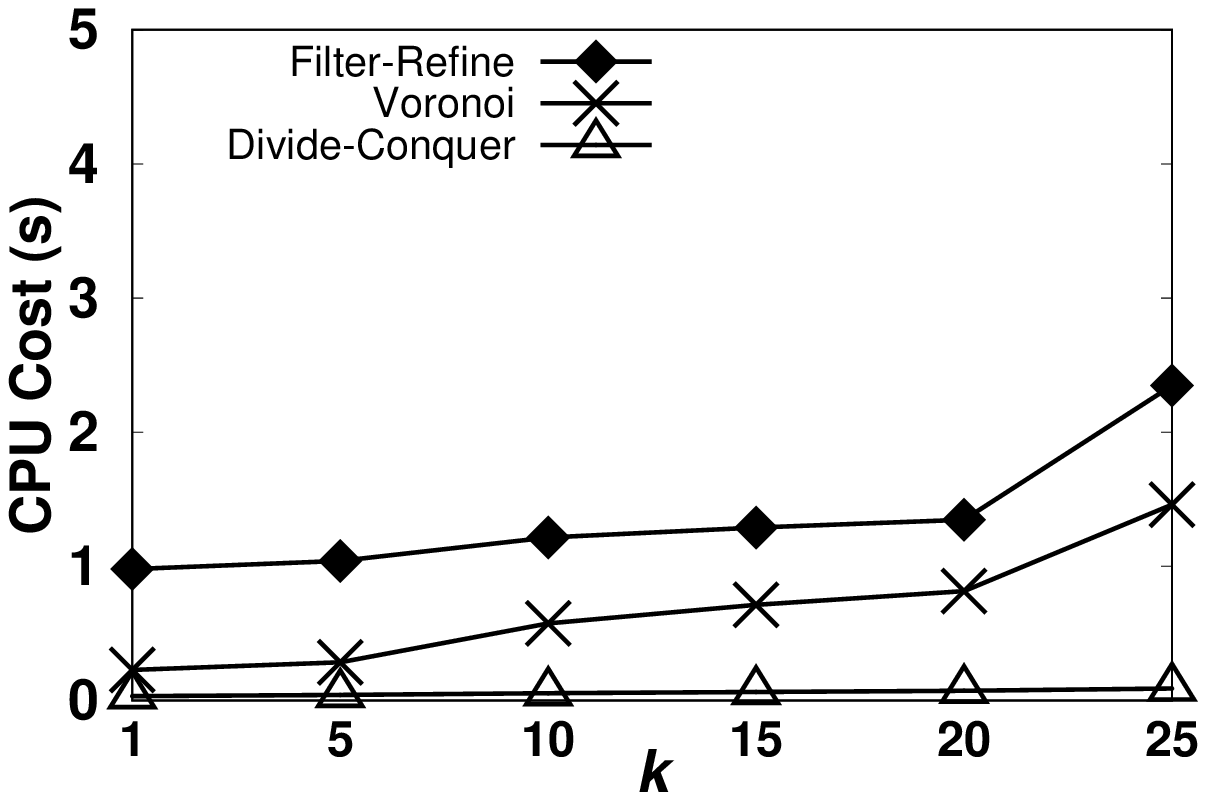}
	}
	\subfigure[\textbf{NYC}]
	{ \label{fig:ktimenyc}
		\includegraphics[width=4cm]{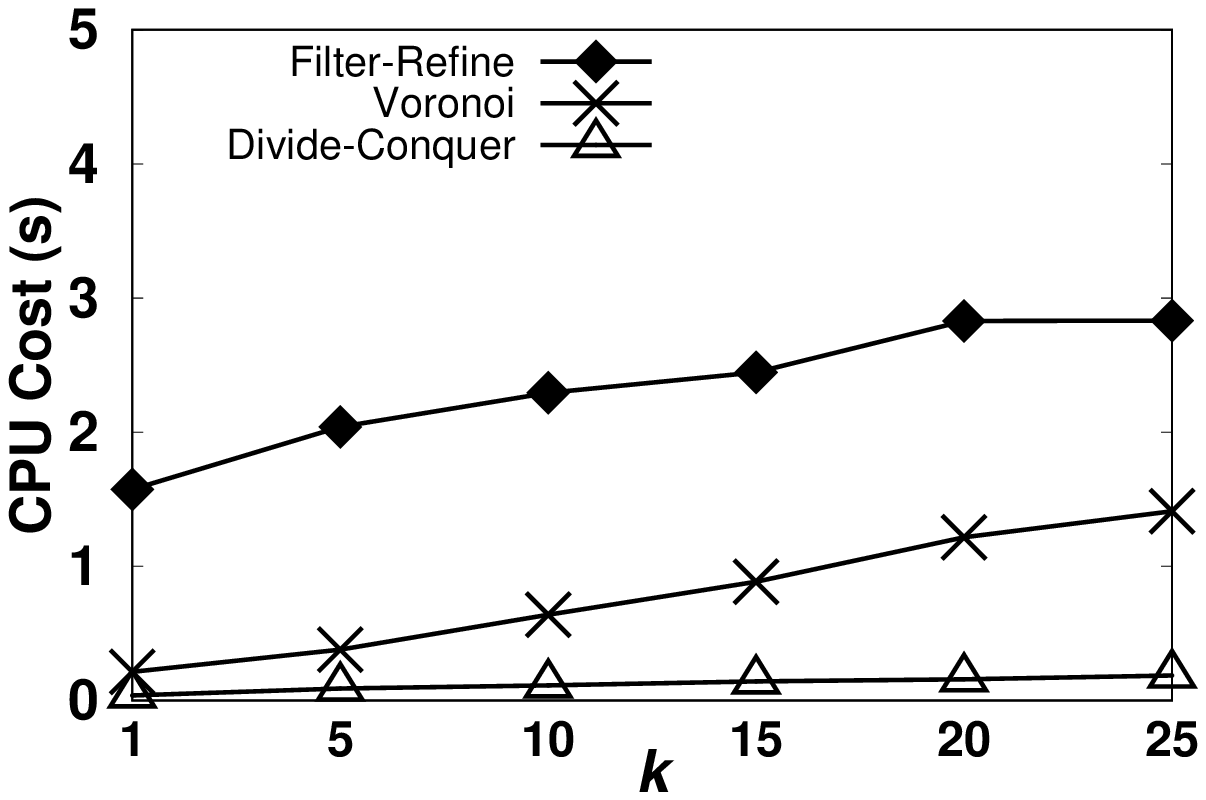}
	}
	\vspace{-2em}
	\caption{Effect on Running Time with the increasing of $k$.}
	\label{fig:ktime}
	\vspace{-1em}
\end{figure}

\begin{figure}[h]
	\centering
	{
		\includegraphics[width=8cm]{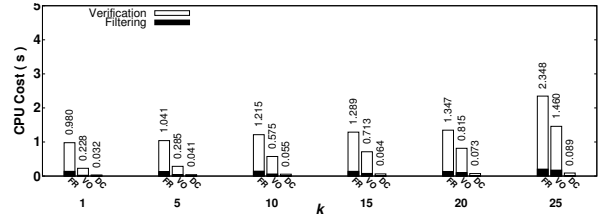}
	}
	\vspace{-1em}
	\caption{Breakdown of running time with increasing $k$ in \textbf{LA}.}
	\label{fig:columnkla}
	\vspace{-1em}
\end{figure}

\begin{figure}[h]
	\centering
	\subfigure[\textbf{LA}]
	{ \label{fig:qtimela}
		\includegraphics[width=4cm]{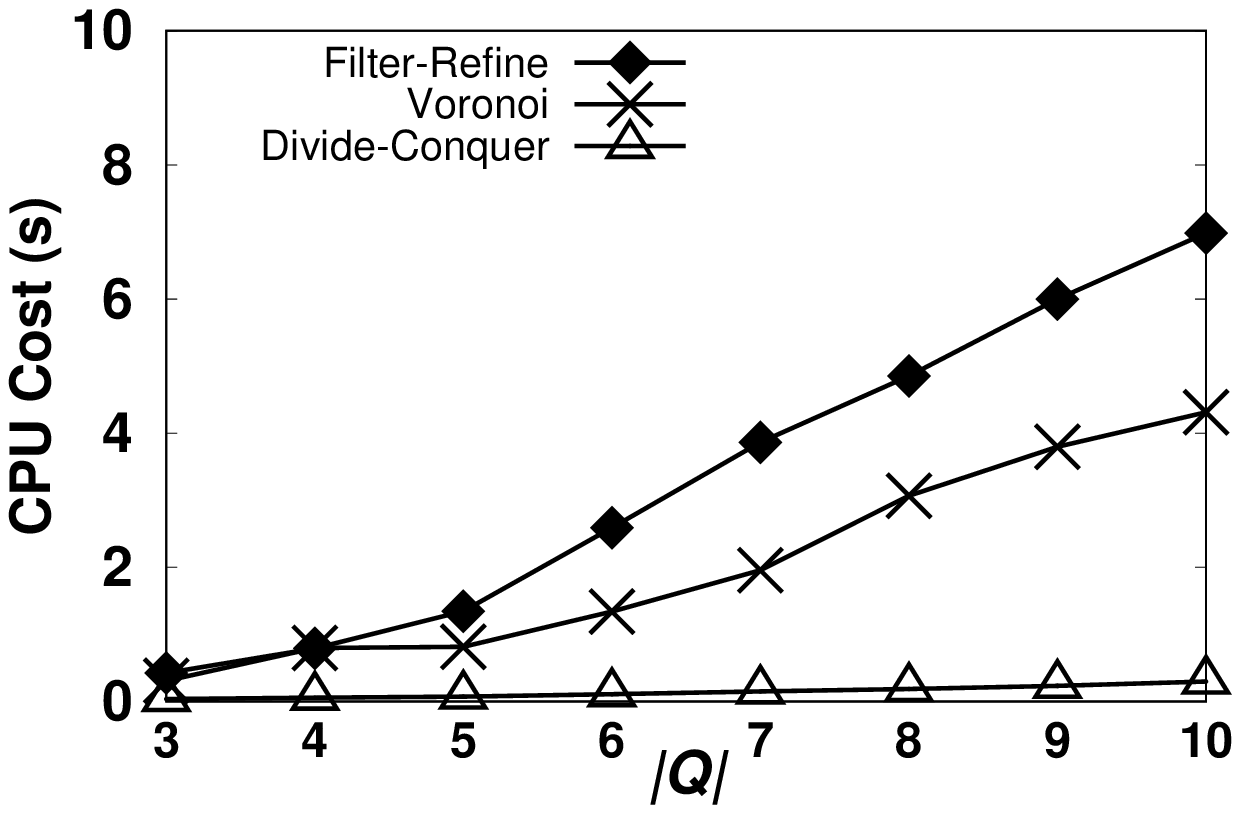}
	}
	\subfigure[\textbf{NYC}]
	{ \label{fig:qtimenyc}
		\includegraphics[width=4cm]{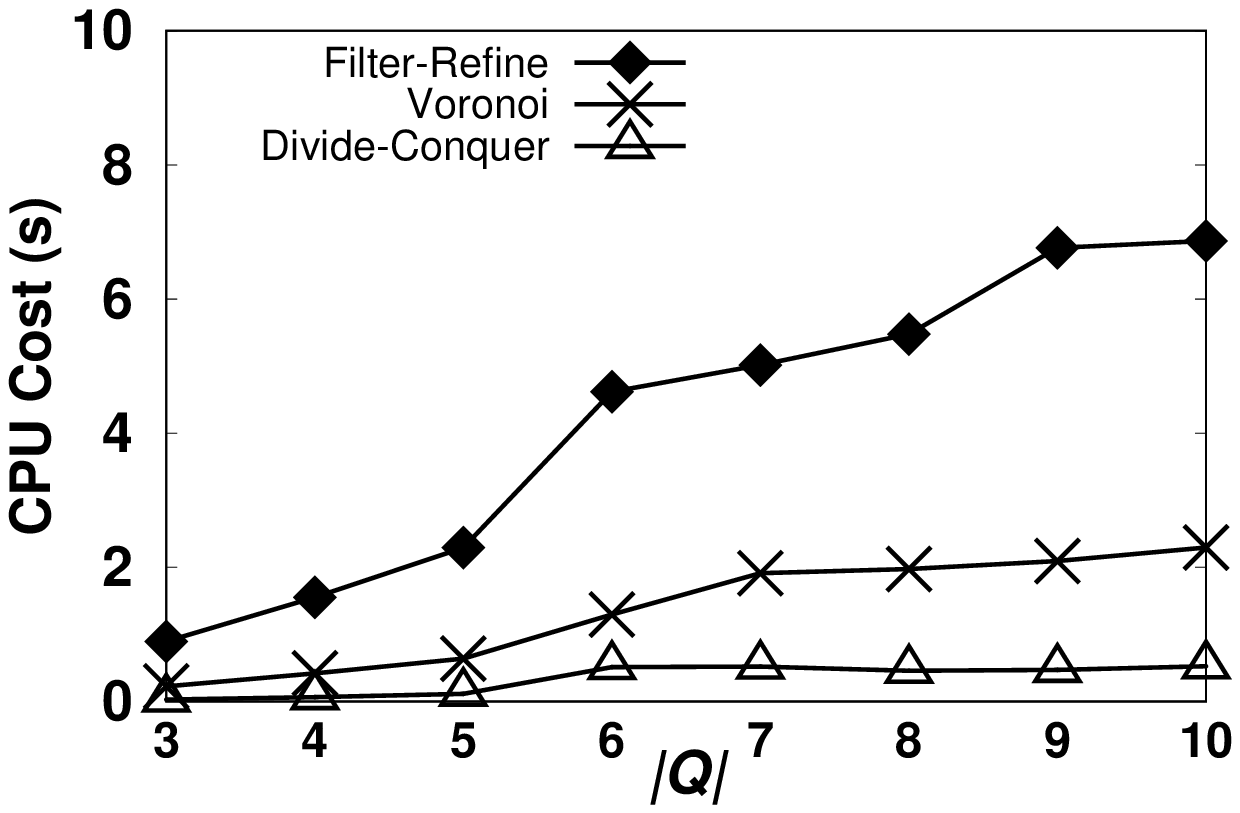}
	}
	\vspace{-2em}
	\caption{Effect on running time with the increasing of $|Q|$.}
	\label{fig:qtime}
	\vspace{-1em}
\end{figure}

\begin{figure}[h]
	\centering
	{
		\includegraphics[width=8cm]{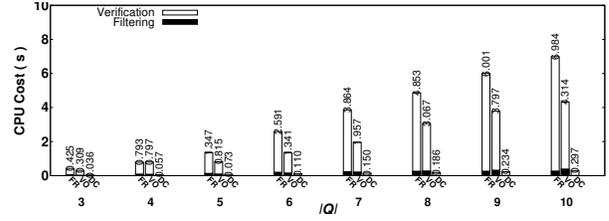}
	}
	\vspace{-1em}
	\caption{Breakdown of running time w.r.t. $|Q|$ in \textbf{LA}.}
	\label{fig:columnqla}
	\vspace{-1em}
\end{figure}

\begin{figure}[h]
	\centering
	\subfigure[\textbf{Synthetic}]
	{ \label{fig:ktimesyn}
		\includegraphics[width=4cm]{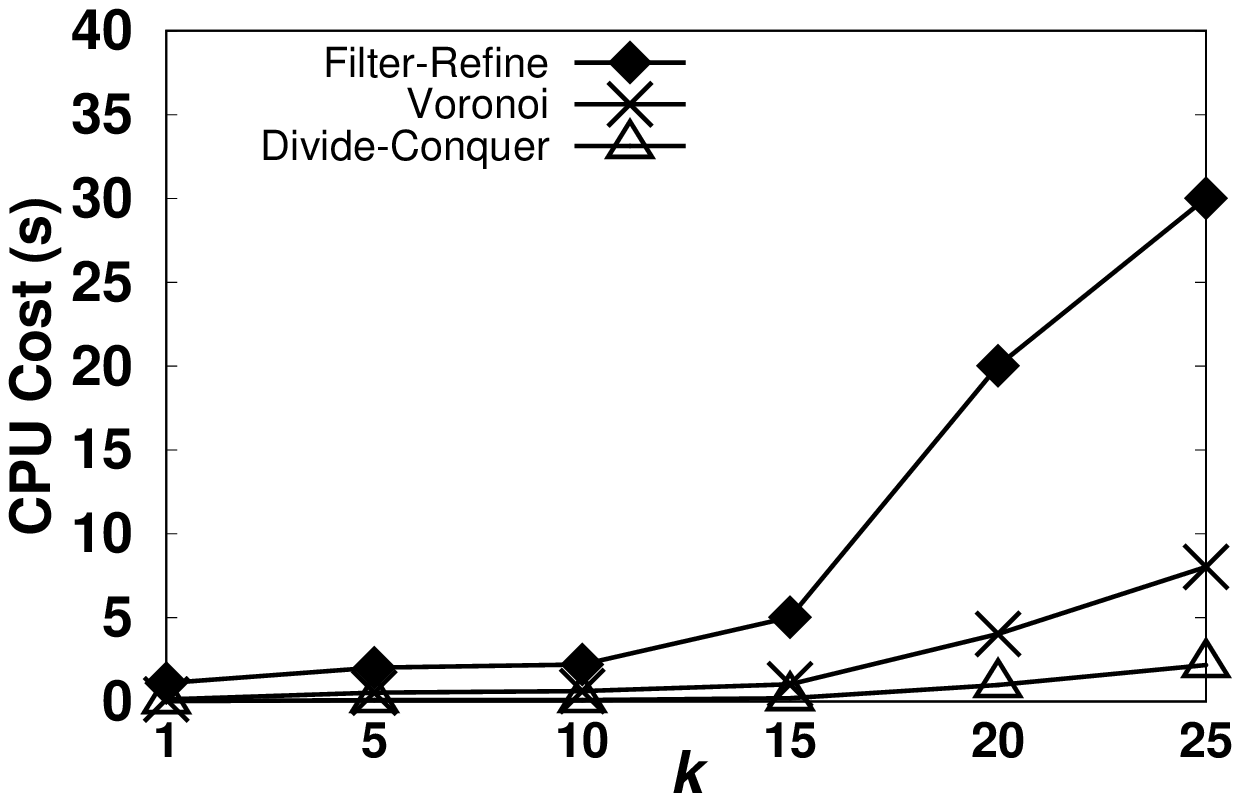}
	}
	\subfigure[\textbf{Synthetic}]
	{ \label{fig:qtimesyn}
		\includegraphics[width=4cm]{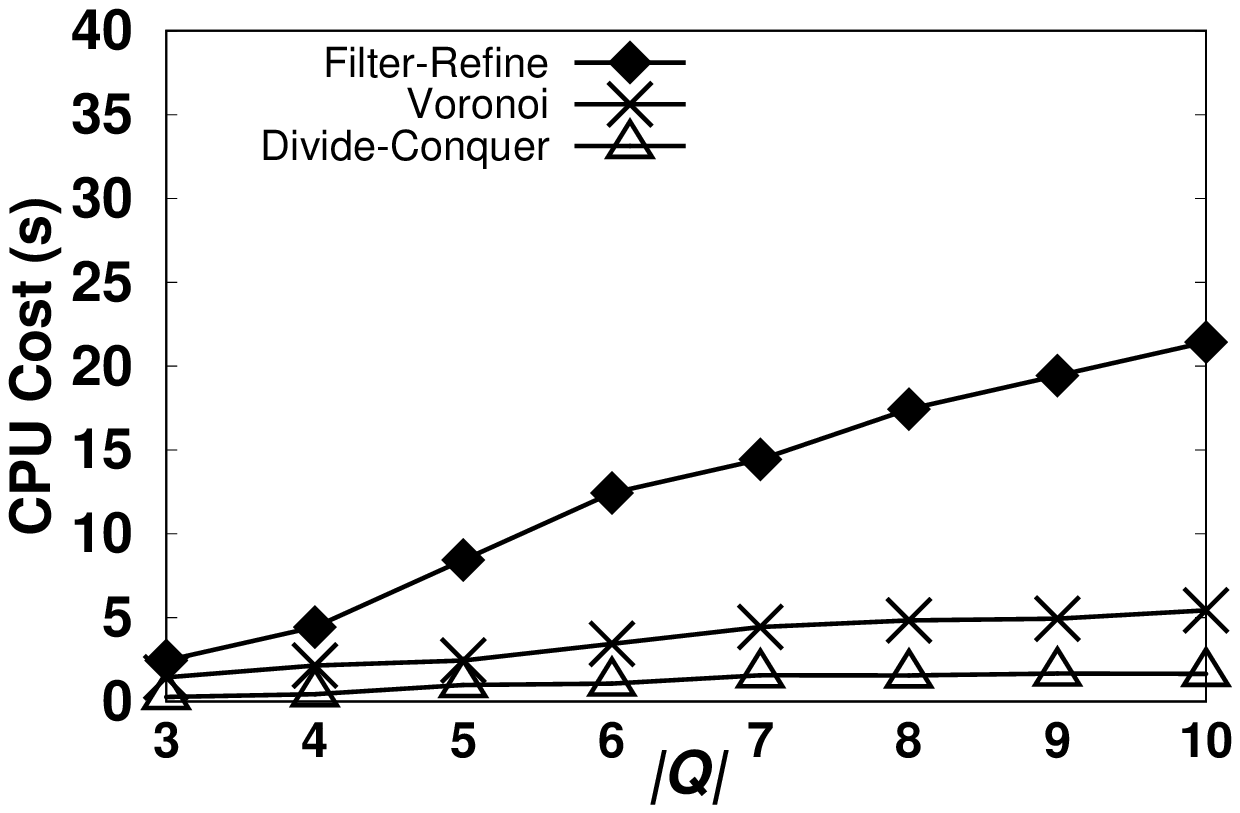}
	}
	\vspace{-2em}
	\caption{Effect on running time with the increasing of $|Q|$ and $k$ in synthetic dataset.}
	\label{fig:synthetictime}
	\vspace{-1em}
\end{figure}

\begin{figure}[h]
	\centering
	\subfigure[\textbf{LA}]
	{ \label{fig:itimela}
		\includegraphics[width=4cm]{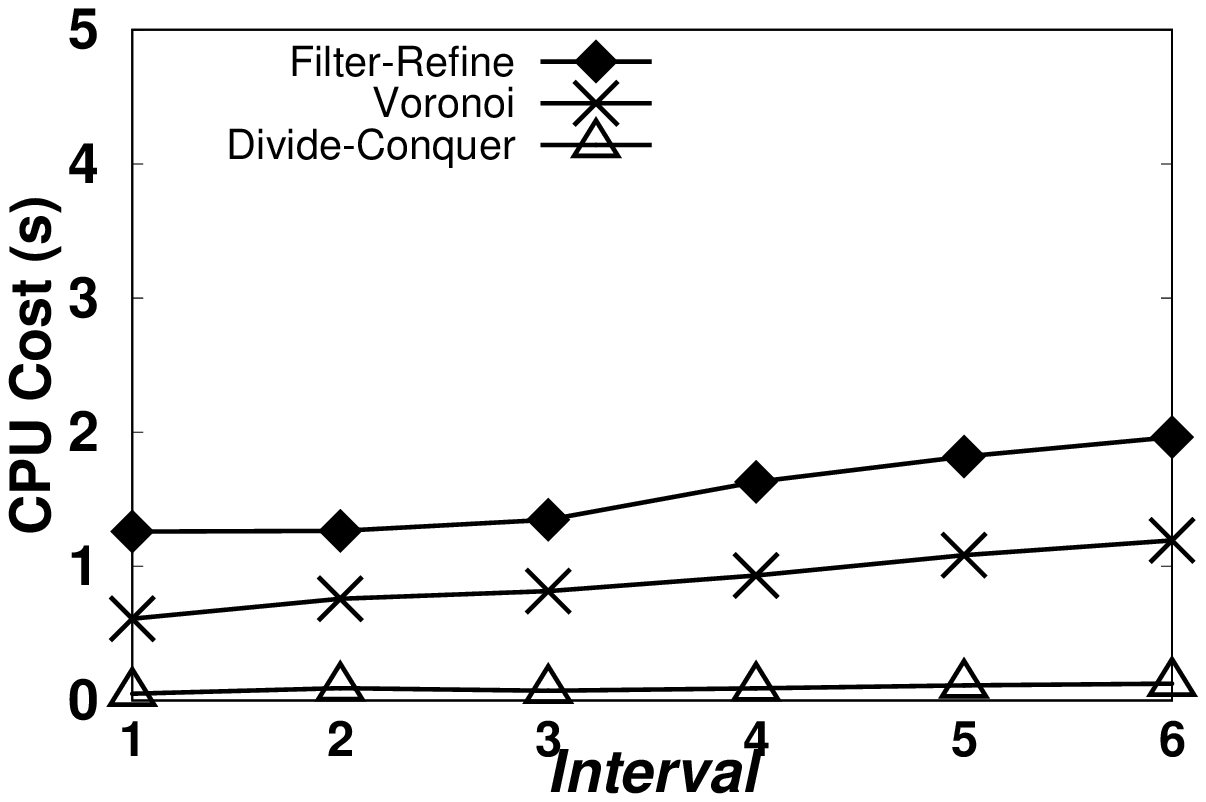}
	}
	\subfigure[\textbf{NYC}]
	{ \label{fig:itimenyc}
		\includegraphics[width=4cm]{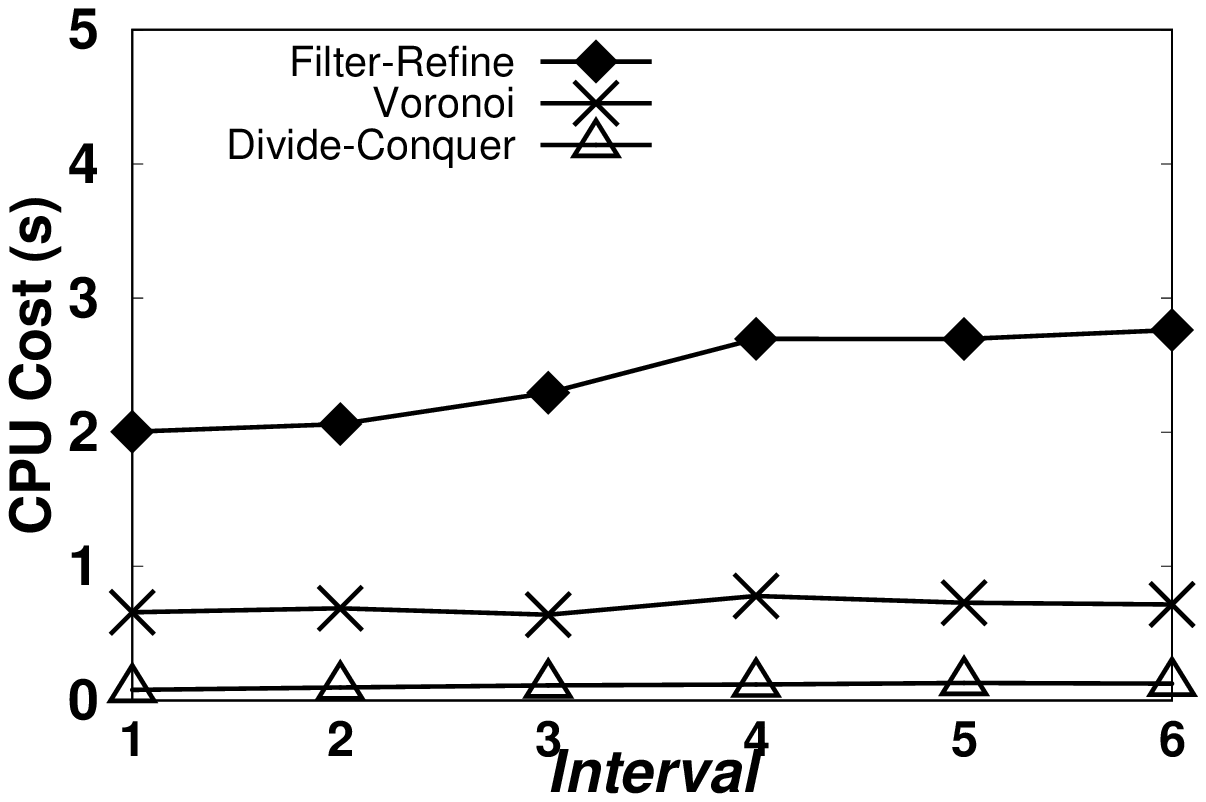}
	}
	\vspace{-2em}
	\caption{Effect on running time with the increasing of $\mathcal{I}$.}
	\label{fig:itime}
	\vspace{-1em}
\end{figure}

\begin{figure}[h]
	\centering
	{
		\includegraphics[width=8cm]{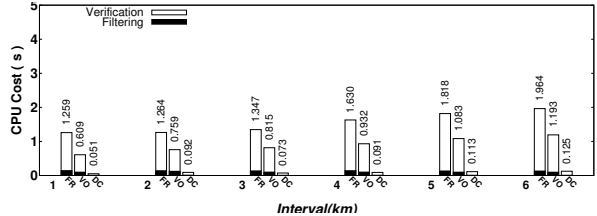}
	}
	\vspace{-1em}
	\caption{Breakdown of running time with increasing $\mathcal{I}$ in \textbf{LA}.}
	\label{fig:columnila}
	\vspace{-1em}
\end{figure}

\begin{figure}[h]
	\centering
	\subfigure[\textbf{LA}]
	{ \label{fig:rknntla}
		\includegraphics[width=4cm]{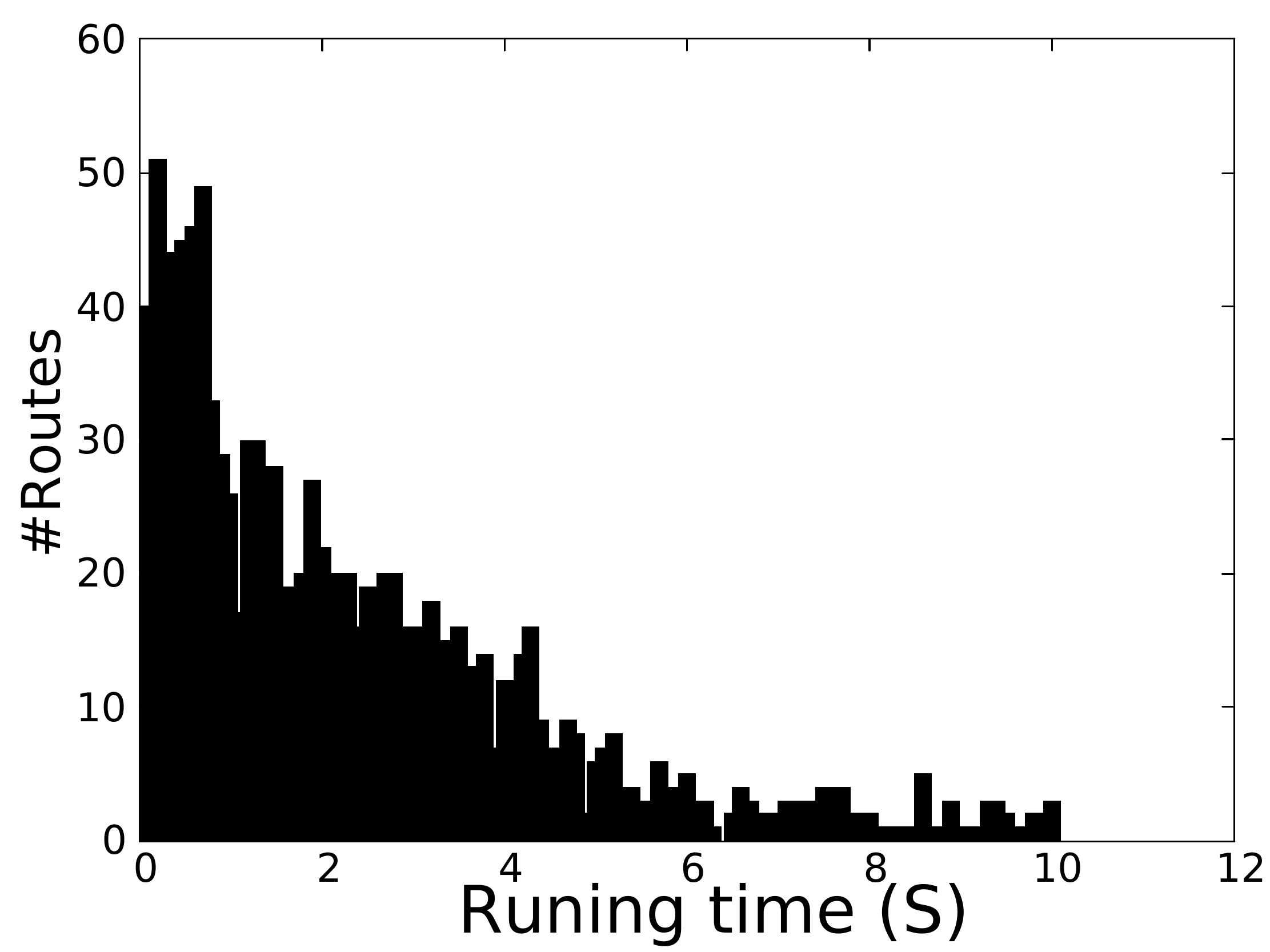}
	}
	\subfigure[\textbf{NYC}]
	{ \label{fig:rknntnyc}
		\includegraphics[width=4cm]{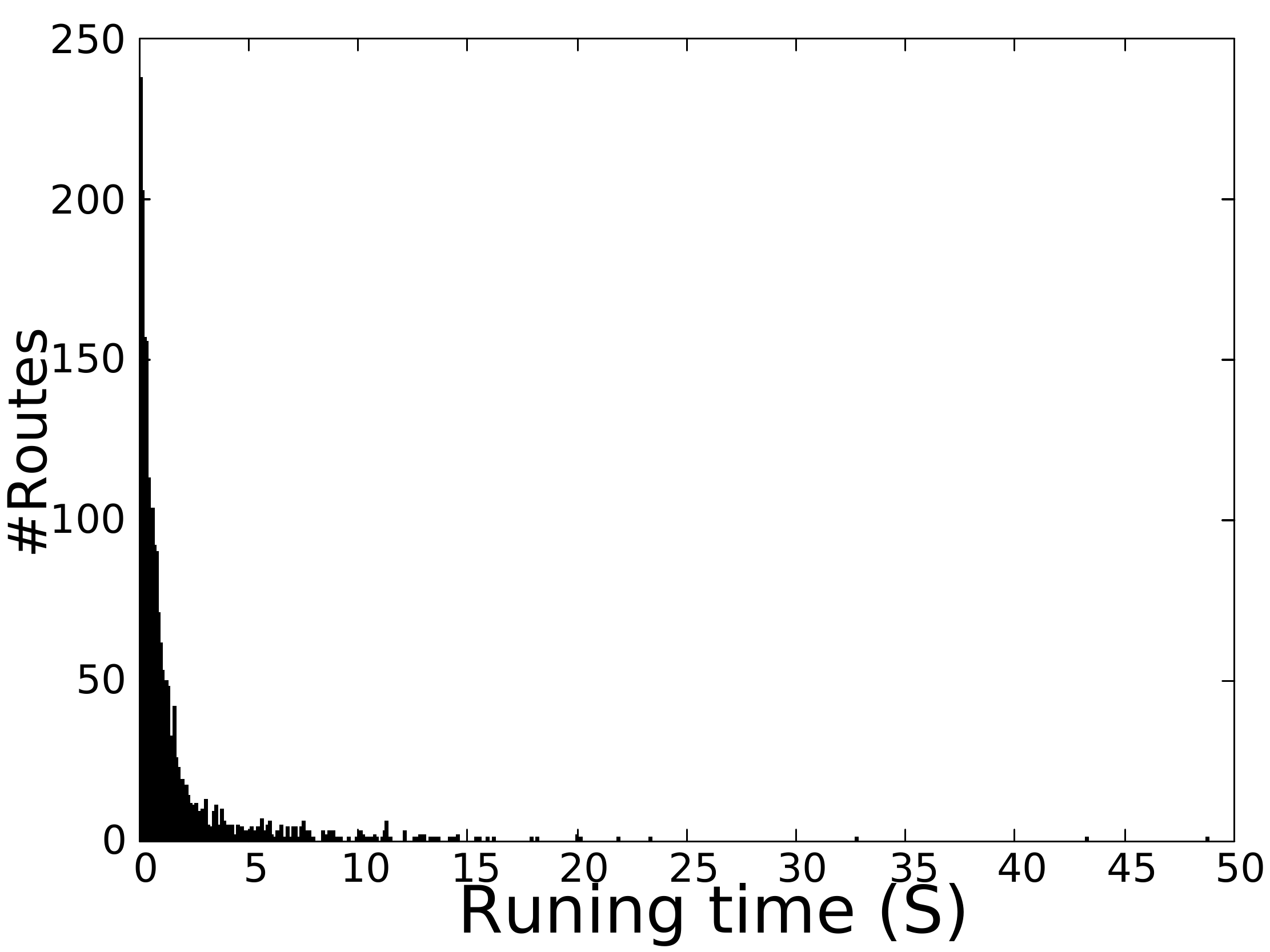}
	}
	\vspace{-2em}
	\caption{The distribution of running time when taking all existing bus routes as query by \maxrknnt~when $k=10$.}
	\label{fig:rknnt}
	\vspace{-1em}
\end{figure}

\myparagraph{Effect of $\left|Q\right|$}
Figure~\ref{fig:qtime} shows the running time of our three methods.
As more points are added into the query, {\compareone} and
{\comparetwo} exhibit a sharp increase in running time.
Since these methods need more time to check whether a node can be
filtered, the filtering space becomes smaller and the probability of
being pruned decreases.
In contrast, {\comparethree} shows almost a linear increase.
This is probably a result of the whole query being divided into $|Q|$
queries, and a node is not be pruned by checking every query point.
Figure~\ref{fig:columnqla} shows a breakdown of the running time to
the tasks of filtering and verification on the LA data.
We can see that the verification occupies more than $80\%$ for most
cases.

\myparagraph{Effect of $k$}
Figure~\ref{fig:ktime} shows that the time cost for all three methods
will increases as $k$ increases.
This is because it is unlikely that a point can be filtered by $k$
filtering routes when $k$ is large.

\myparagraph{Effect of $\mathcal{I}$}
We observe that the intervals $\mathcal{I}$ between two adjacent
points vary from route to route in real life.
Hence, we conducted experiments to see how the running
time is affected in this scenario.
As mentioned when describing query generation, the size of the query
is increased by appending randomly generated points, one at a time.
Figure~\ref{fig:itimela} and Figure~{\ref{fig:itimenyc}} show that
there is a slight increase on the running time when $\mathcal{I}$ is
large.
The main reason is that when two query points are close, a node can
be filtered by a filtering point easily, while when the intervals
are large, it is harder to filter a node.

\myparagraph{Real Route Queries}
After testing the effect of each individual parameter, we took every
route in each dataset as a query to evaluate our best
method $\comparethree$.
Note that before running each query, we removed the points of this
route from the $\fr$ index.
Figure~\ref{fig:rknnt} shows that over $90\%$ of the queries can be
processed in less than $5$s.
The main reason is the relationship to the number of points in
the query.

In summary, our main observations are:
\begin{enumerate}[leftmargin=1em]
	\itemsep 0em
	\item {\comparethree} consistently has the best performance, followed
	by {\comparetwo}, with {\compareone} being the worst.
	\item All three methods are sensitive to $k$ and $|Q|$.
	Only {\compareone} and {\comparetwo} are sensitive to the
	interval length $\mathcal{I}$ of the query.
	\item When taking existing routes as real queries, most queries can
	be answered efficiently using {\comparethree}.
\end{enumerate}

\subsection{Evaluation of  {\large \maxrknnt}}
\label{sec:em}
\myparagraph{Algorithms for evaluation}(1) \textbf{BruteForce}: 
the baseline method which uses the {\em k shortest
	paths}~{\cite{Yen1971}} to find all the routes which have a smaller
travel distance than the distance threshold $\tau$, after which an
{\rknnt} is performed on every candidate to choose the maximal one.
(2) \textbf{Pre}: the method that extends the \textbf{BrouteForce} by
pre-computation of the {\rknnt} set for every vertex without an
on-the-fly {\rknnt} query.
For {\maxrknnt} and {\minrknnt}, both can be solved using the same
pruning techniques with little difference in bound checking, which
has a small impact on performance. We denote them as (3)
\textbf{Pre-Max} and (4) \textbf{Pre-Min}.

\myparagraph{Queries}
To test the effect of key parameters, we first generated a point set
by choosing $1,000$ start points randomly from our route datasets.
Then, we searched $6$ end points for every start point with different
$\traveldistance{se}$, which is the distance between the origin and
the destination, as shown in the last row of Table~\ref{tab:parameter}.
Furthermore, we used existing representative routes as queries and
employed  {\maxrknnt} and {\minrknnt} search algorithm to find the
new ``optimal'' routes.
Finally, we compared the {\rknnt} sets of the original routes against
the new routes.

\myparagraph{Parameters}
We discovered two key parameters that affect the performance of
{\maxrknnt}: (1) the coverage degree of a bus route - denoted by
$\traveldistance{{se}}$ and quantified as the Euclidean distance
between the start and end points of a query $Q$.
(2) $\frac{\tau}{\traveldistance{{se}}}$, which is the ratio of the
travel distance over the straight-line distance from origin to
destination of $Q$.
The choices of these parameters are from the distribution of all real
bus routes, as shown in Figure~\ref{fig:frequence-la}.

\myparagraph{Pre-computation}
Table~\ref{tab:precomputation} shows the time spent on
pre-computation.
The pre-computation consists of of two steps: the {\rknnt} query
for every vertex, and the shortest distance route search, as shown
in Algorithm~\ref{alg:pre}.
All-pair shortest distance computation costs about $4$ minutes for
both datasets, and the {\rknnt} search of all vertices in $\graph$ costs
less than $5$ minutes when $k=10$.
For the synthetic dataset which contains 10M transitions, the time spent
on pre-computation is about $12$ minutes when $k=10$. 

\begin{table}
	\centering
	\vspace{-2.5em}
	\caption{Running time(s) for pre-computation when
		$k=1,5,10$, which is composed of {\rknnt} search and all-pair
		shortest distance computations, the bold numbers are the results for 
		synthetic dataset.}
	\label{tab:precomputation}
	\begin{tabular}{|c|c|c|c|c|c|c|}
		\hline
		& \multicolumn{3}{c|}{\textbf{LA}}                                                       & \multicolumn{3}{c|}{\textbf{NYC}}                                              \\ \hline
		$k$                     & $1$                     & $5$                      & $10$                     & $1$                   & $5$                   & $10$                  \\ \hline
		\multirow{2}{*}{\rknnt} & \multirow{2}{*}{$80.5$} & \multirow{2}{*}{$153.2$} & \multirow{2}{*}{$230.8$} & $140.4$               & $202.1$               & $253.5$               \\ \cline{5-7} 
		&                         &                          &                         & \multicolumn{1}{l|}{\textbf{201.7}} & \multicolumn{1}{l|}{\textbf{545.8}} & \multicolumn{1}{l|}{\textbf{748.1} } \\ \hline
		\textbf{Shortest}     & \multicolumn{3}{c|}{$191.3$}                                                  & \multicolumn{3}{c|}{$251.9$}                                          \\ \hline
	\end{tabular}
\end{table}

\begin{figure}
	\centering
	{
		\includegraphics[height=3cm]{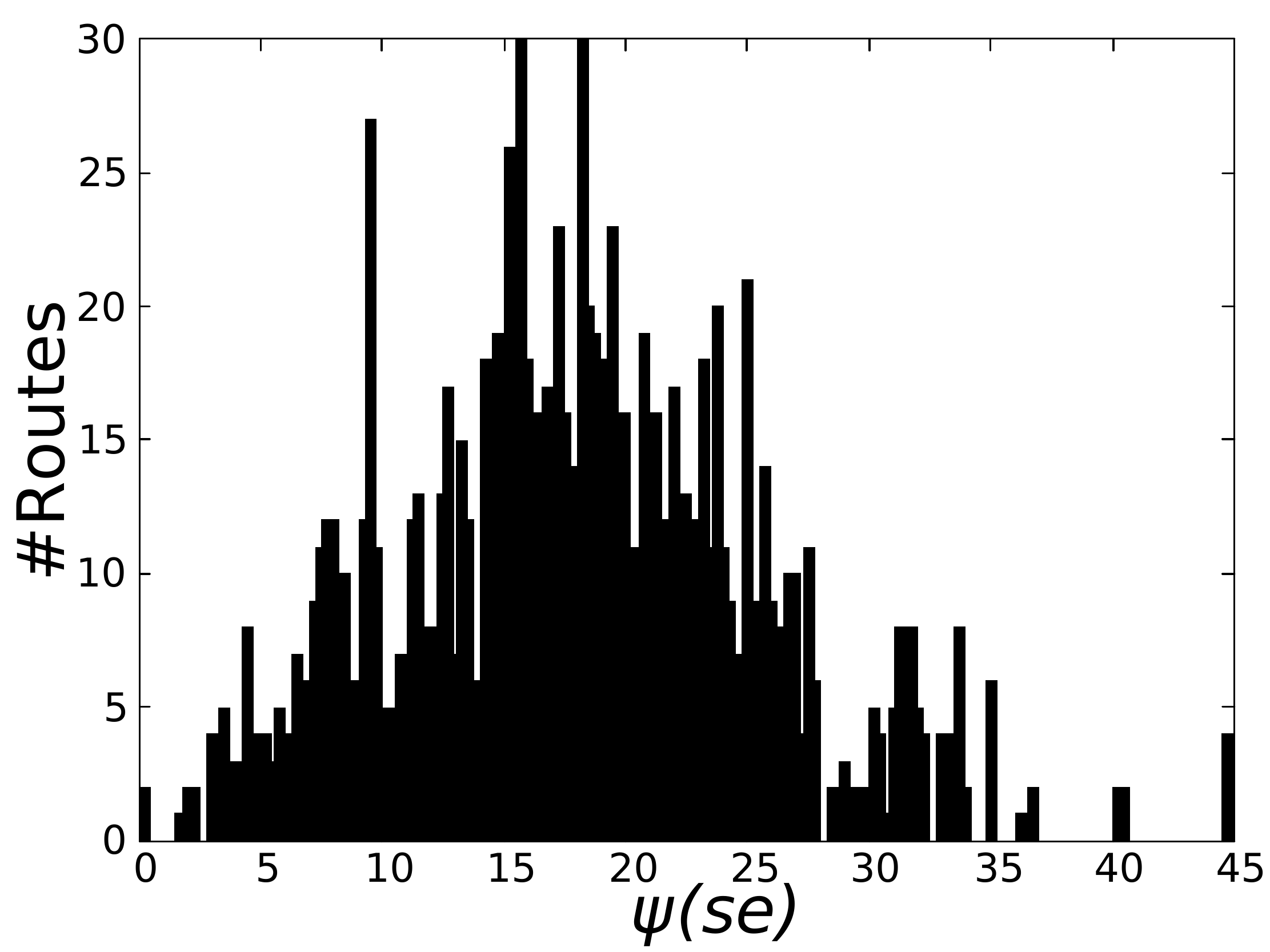}
		\includegraphics[height=3cm]{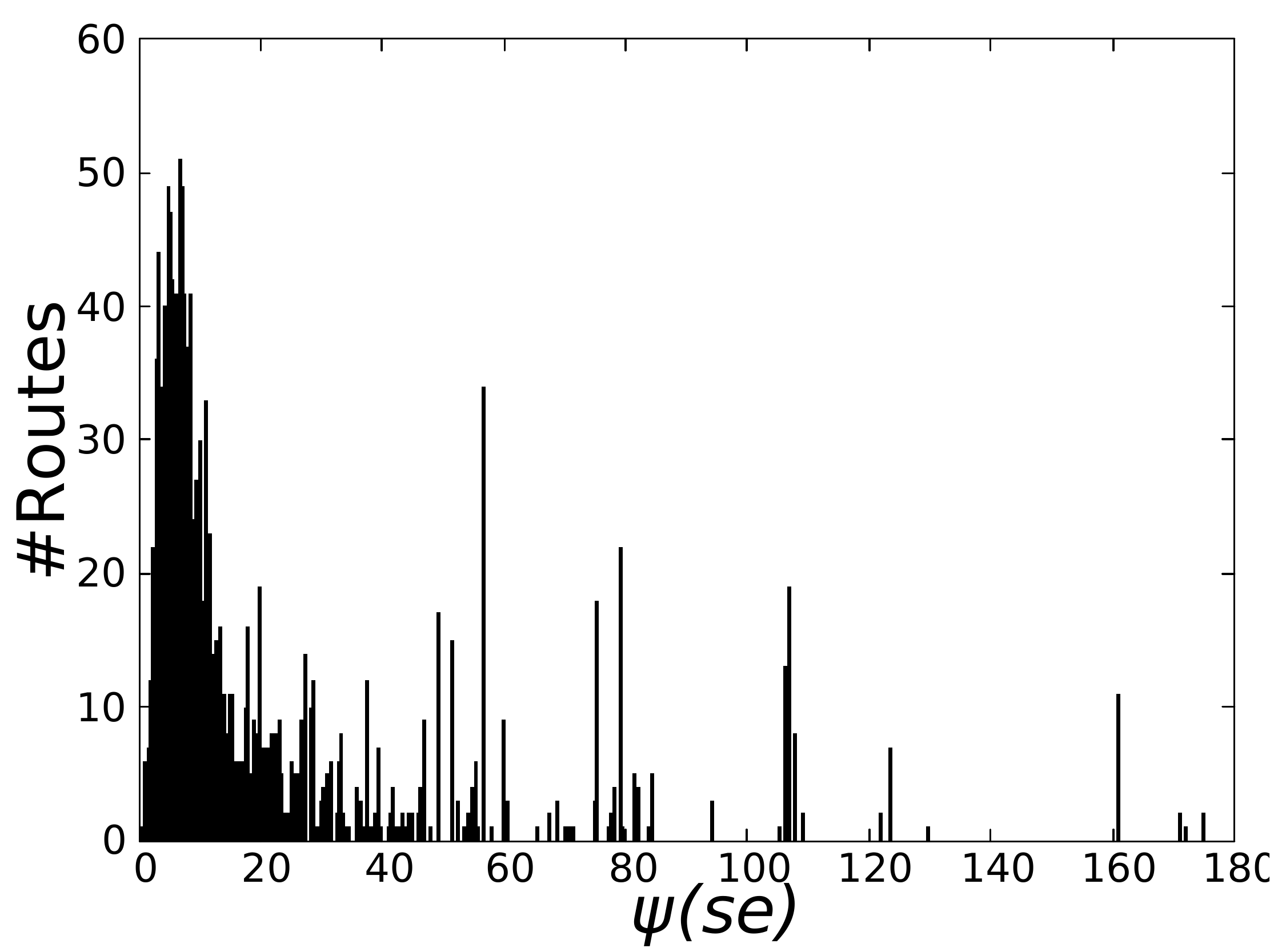}\\~
		\includegraphics[height=3cm]{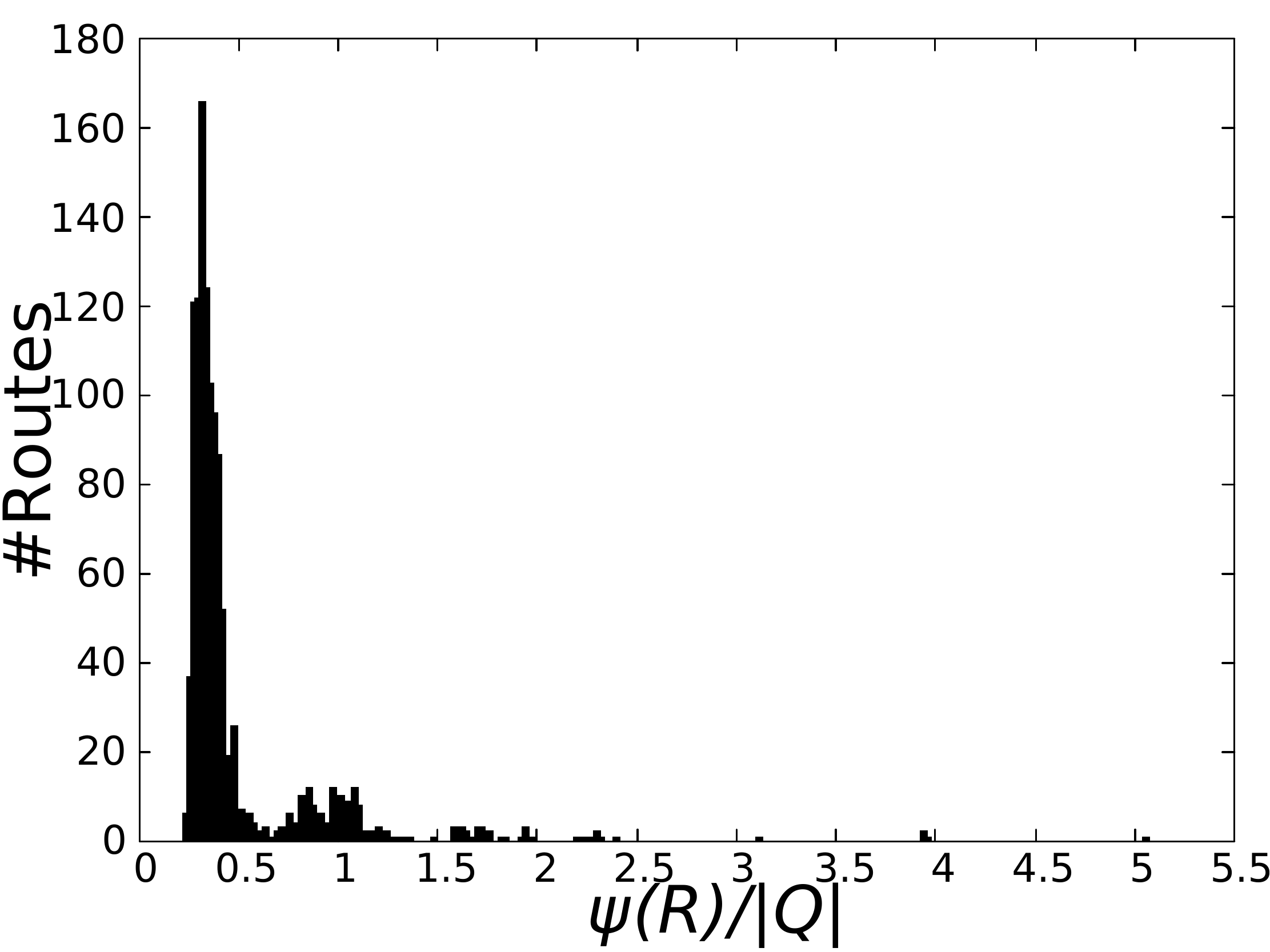}
		\includegraphics[height=3cm]{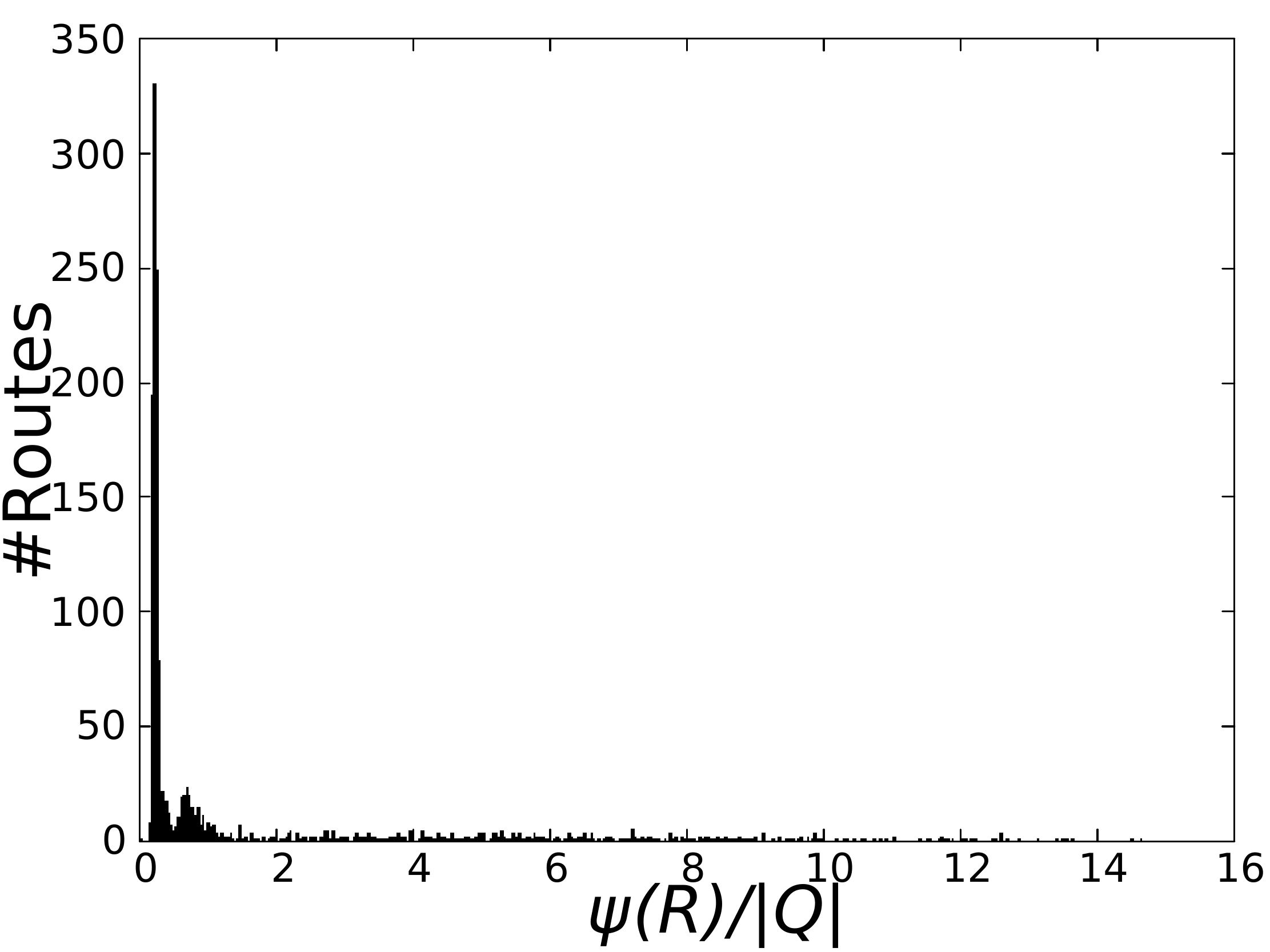}\\~~~
		\includegraphics[height=3cm]{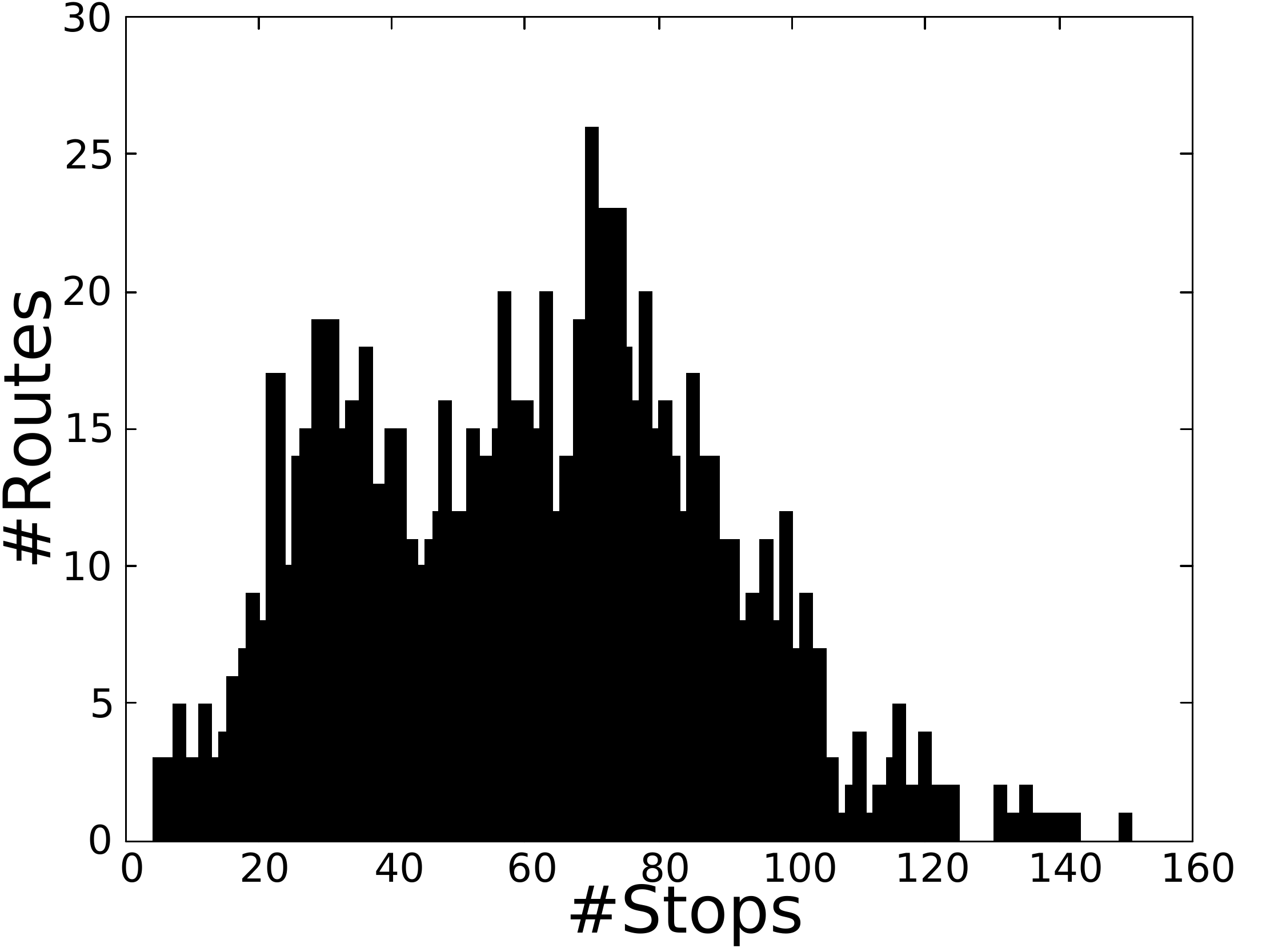}
		\includegraphics[height=3cm]{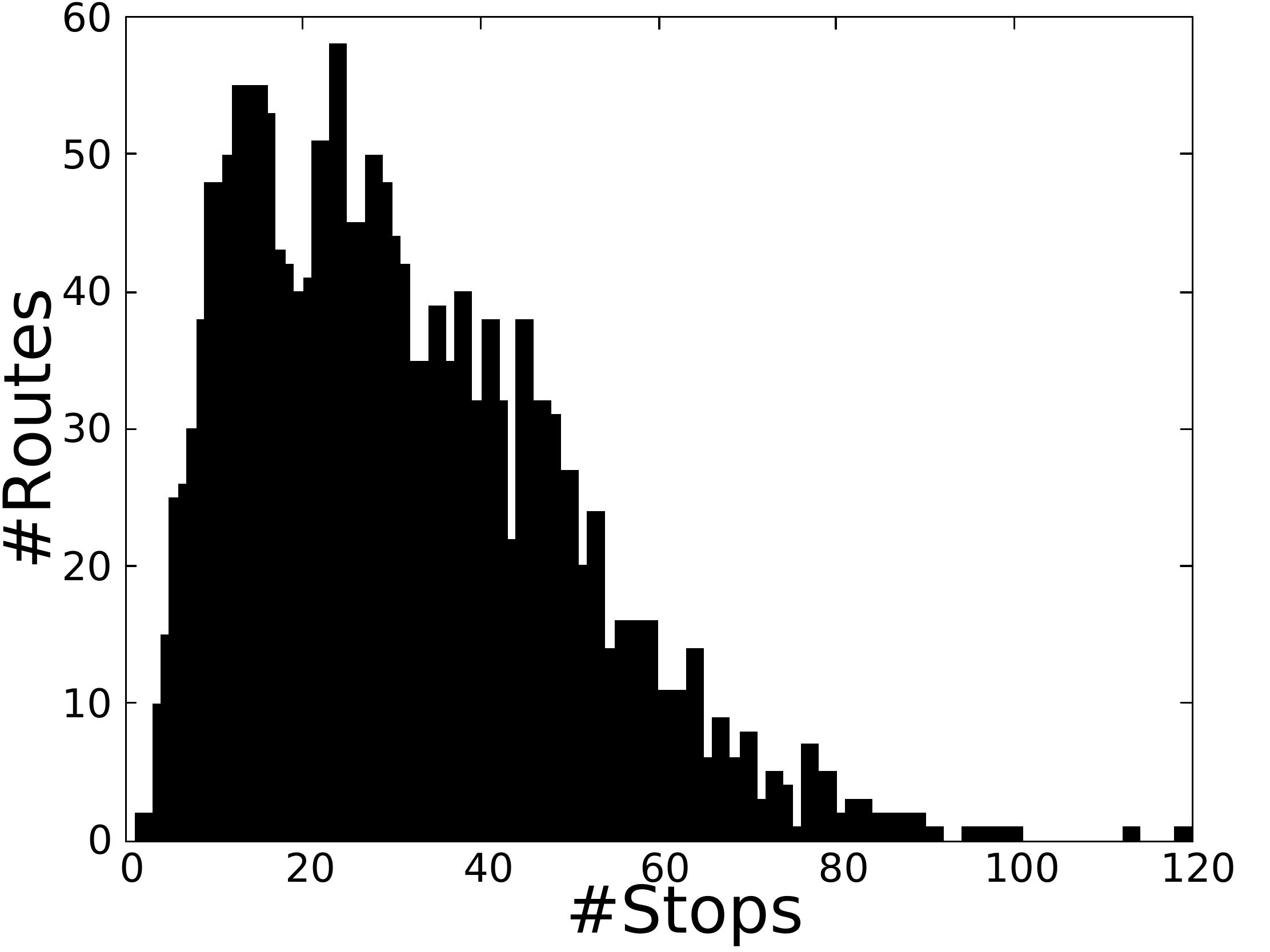}
	}
	\vspace{-2em}
	\caption{Frequency histogram of $\traveldistance{{se}}$, $\mathcal{I}$ and $|\route|$ in LA (left) and NYC (right).}
	\label{fig:frequence-la}
	\vspace{-1em}
\end{figure}

\myparagraph{Effect of $\traveldistance{{se}}$}
Figure~\ref{fig:odtimela} shows that the time spent on the search
task increases when the distance between the origin and destination
$\traveldistance{{se}}$ increases.
This is because more vertices in the graph need to be scanned between
the origin and destination.
For {\maxcompareone}, the reasons are twofold: (1) It returns more
candidate routes for {\rknnt}; (2) The candidate routes
are longer when $\traveldistance{{se}}$ is long, so more time has
to be spent for every $\rknnt$ query.
In contrast, for the remaining three methods, since we have
pre-computed the {\rknnt} set for every vertex, the running
time comes from the search over $\graph$.
\textbf{Pre-Max} has the best performance due to the bound checking
during the spreading of partial routes.

\begin{figure}[h]
	\centering
	\subfigure[\textbf{LA}]
	{ \label{fig:odtimela}
		\includegraphics[width=4cm]{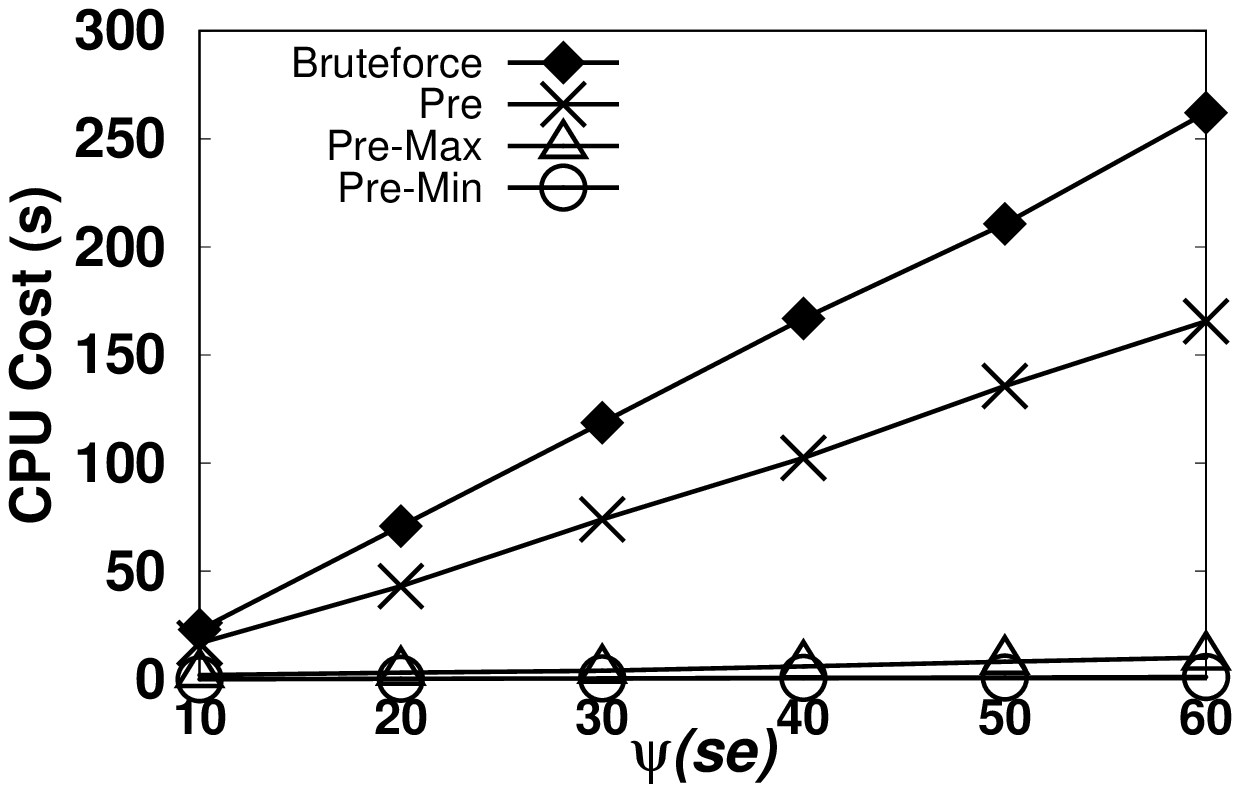}
	}
	\subfigure[\textbf{NYC}]
	{ \label{fig:odtimenyc}
		\includegraphics[width=4cm]{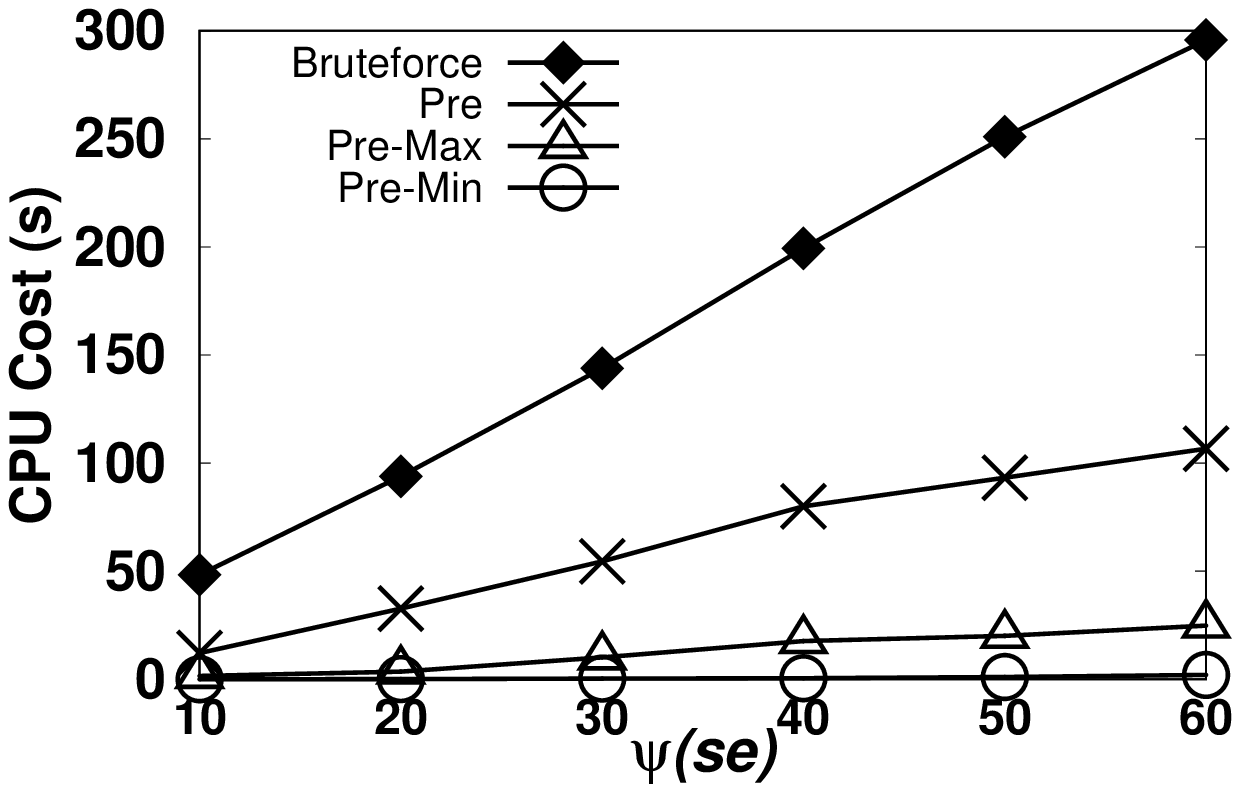}
	}
	\vspace{-2em}
	\caption{Effect on running time as $\traveldistance{{se}}$ increases.}
	\label{fig:oidtime}
	\vspace{-1em}
\end{figure}

\myparagraph{Effect of $\frac{\tau}{\traveldistance{{se}}}$}
To generate the query, we choose a subset of queries with a
fixed $\traveldistance{{se}}$ as the default value shown in
Table~\ref{tab:parameter} and alter $\tau$ in the experiment.
Figure~\ref{fig:dtime} shows that increasing
$\frac{\tau}{\traveldistance{{se}}}$ leads to an increased
running time.
The reason can also be ascribed to the increasing number of
candidates between the origin and destination.

\begin{figure}[h]
	\centering
	\subfigure[\textbf{LA}]
	{ \label{fig:dtimela}
		\includegraphics[width=4cm]{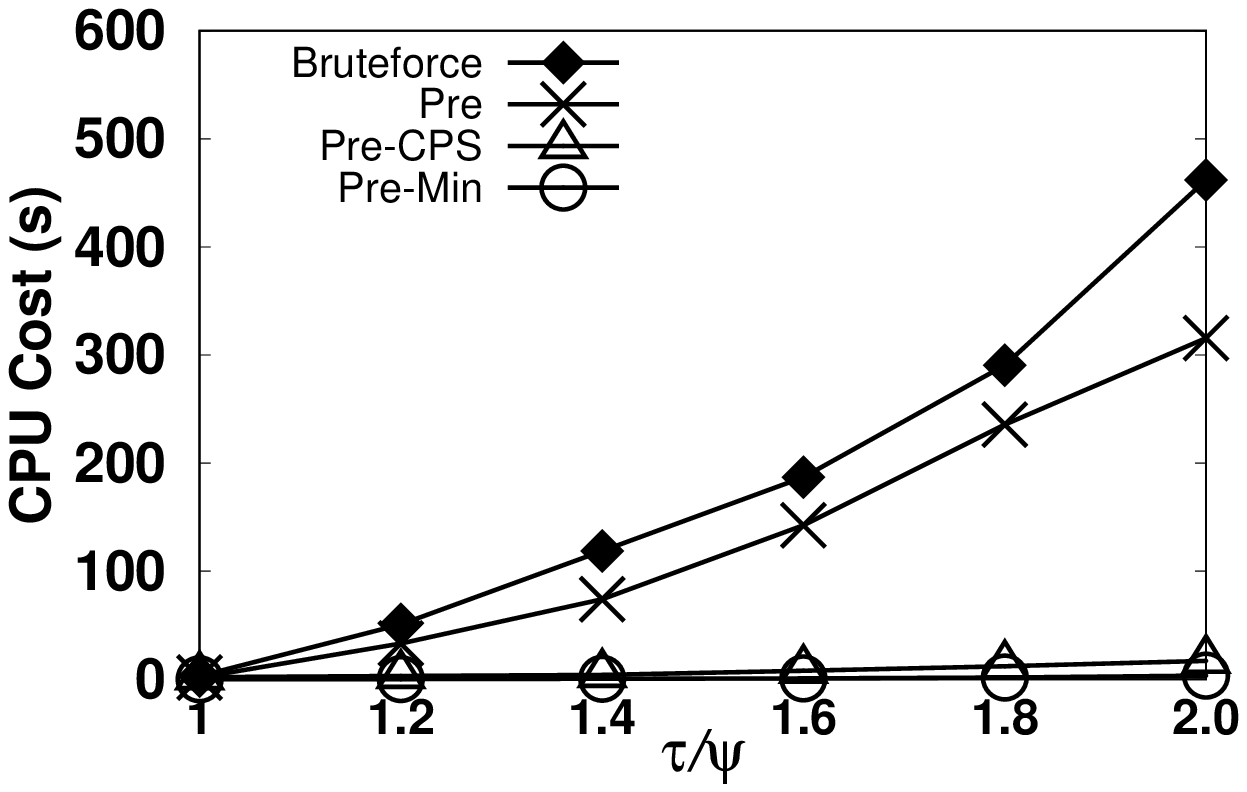}
	}
	\subfigure[\textbf{NYC}]
	{ \label{fig:dtimenyc}
		\includegraphics[width=4cm]{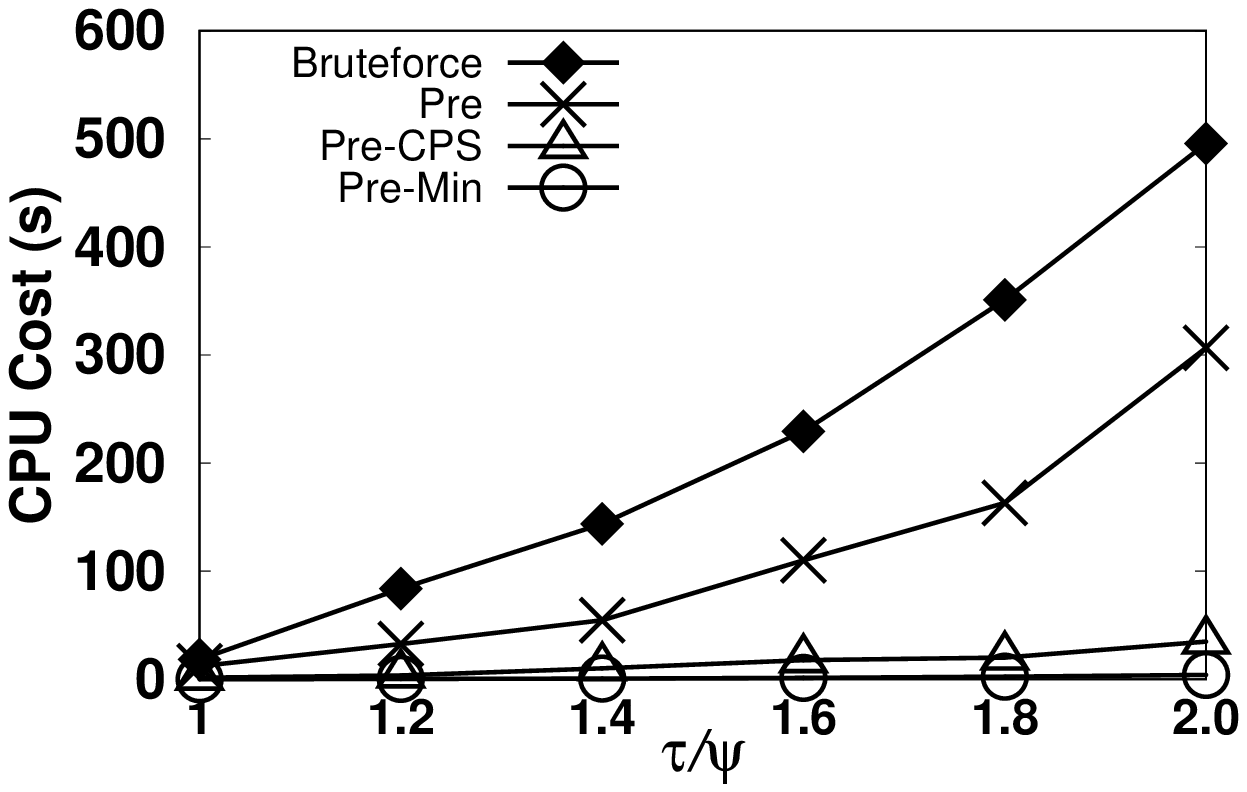}
	}
	\vspace{-2em}
	\caption{Effect on running time with the increase of $\frac{\tau}{\traveldistance{{se}}}$.}
	\label{fig:dtime}
	\vspace{-1em}
\end{figure}

\myparagraph{Real queries}
We took each route in $\rs$ as a query to perform an {\maxrknnt} search to
see whether we can find a better route which has a larger {\rknnt} set
while maintaining an acceptable travel distance threshold.
Each query is generated using the start and end bus stop, and the
travel distance for each route.
Figure~\ref{fig:maxrknnttime} shows the running time distribution for 
the real queries.
We can see that most queries in the LA data can be answered in less
than a second.

\begin{figure}[h]
	\centering
	\subfigure[\textbf{LA}]
	{ \label{fig:maxrknnttimela}
		\includegraphics[width=4cm]{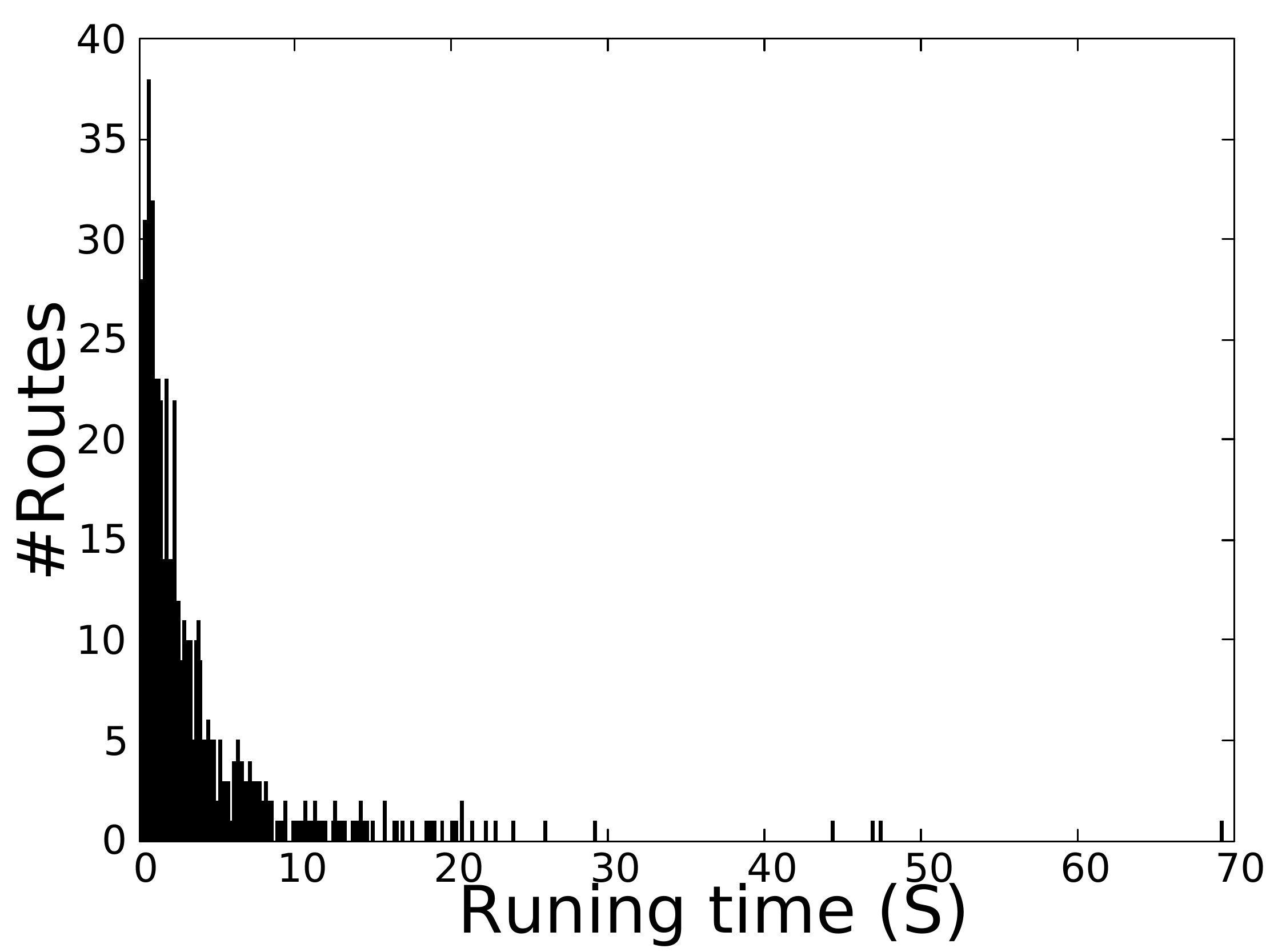}
	}
	\subfigure[\textbf{NYC}]
	{ \label{fig:maxrknnttimenyc}
		\includegraphics[width=4cm]{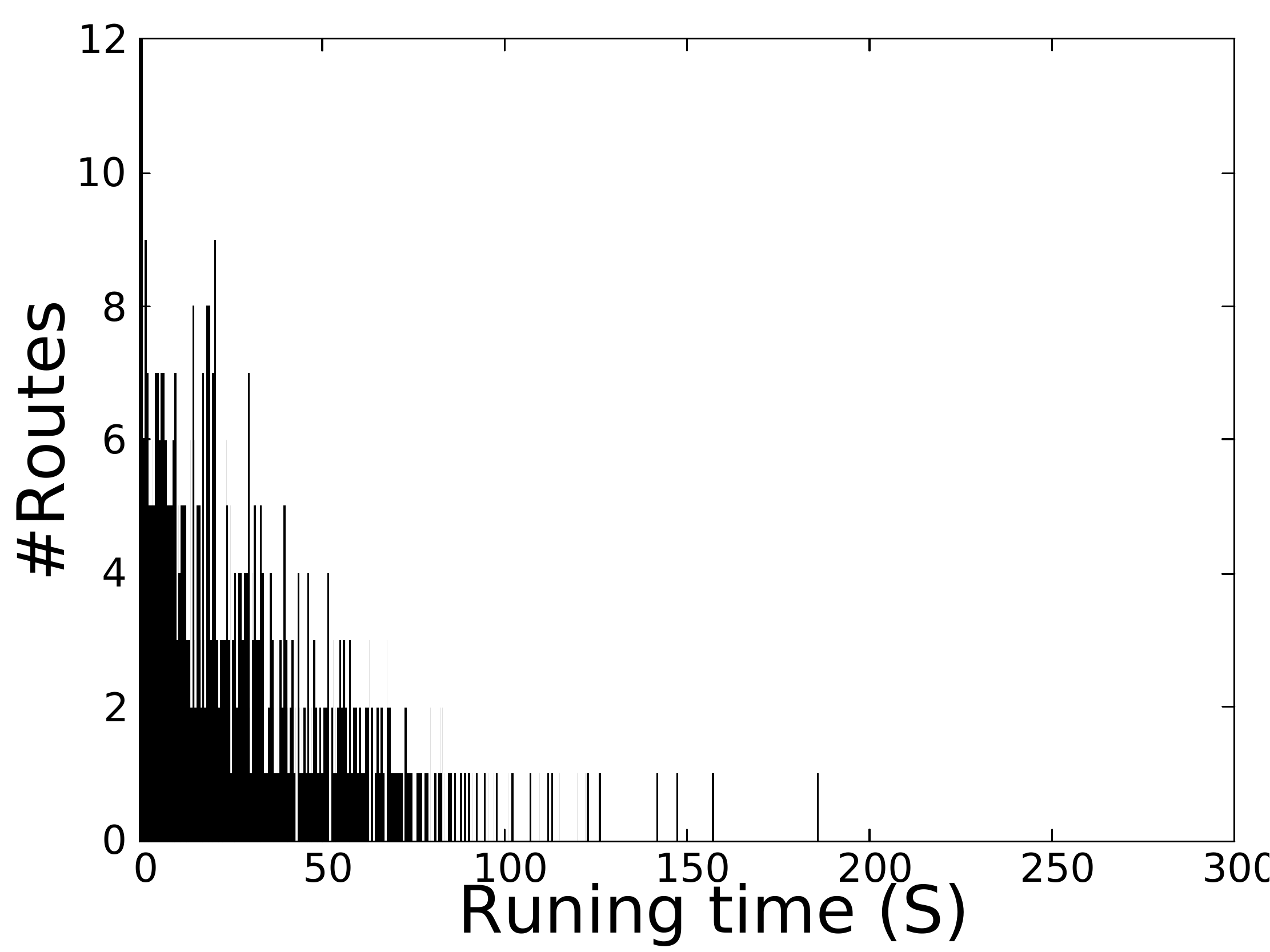}
	}
	\vspace{-2em}
	\caption{Distribution of running time of {\maxrknnt} on real
		route query.}
	\label{fig:maxrknnttime}
	\vspace{-1em}
\end{figure}

\begin{figure}
	\centering
	\begin{minipage}[b]{.5\linewidth}
		\label{fig:fourroutesline}
		\includegraphics[width=4.5cm]{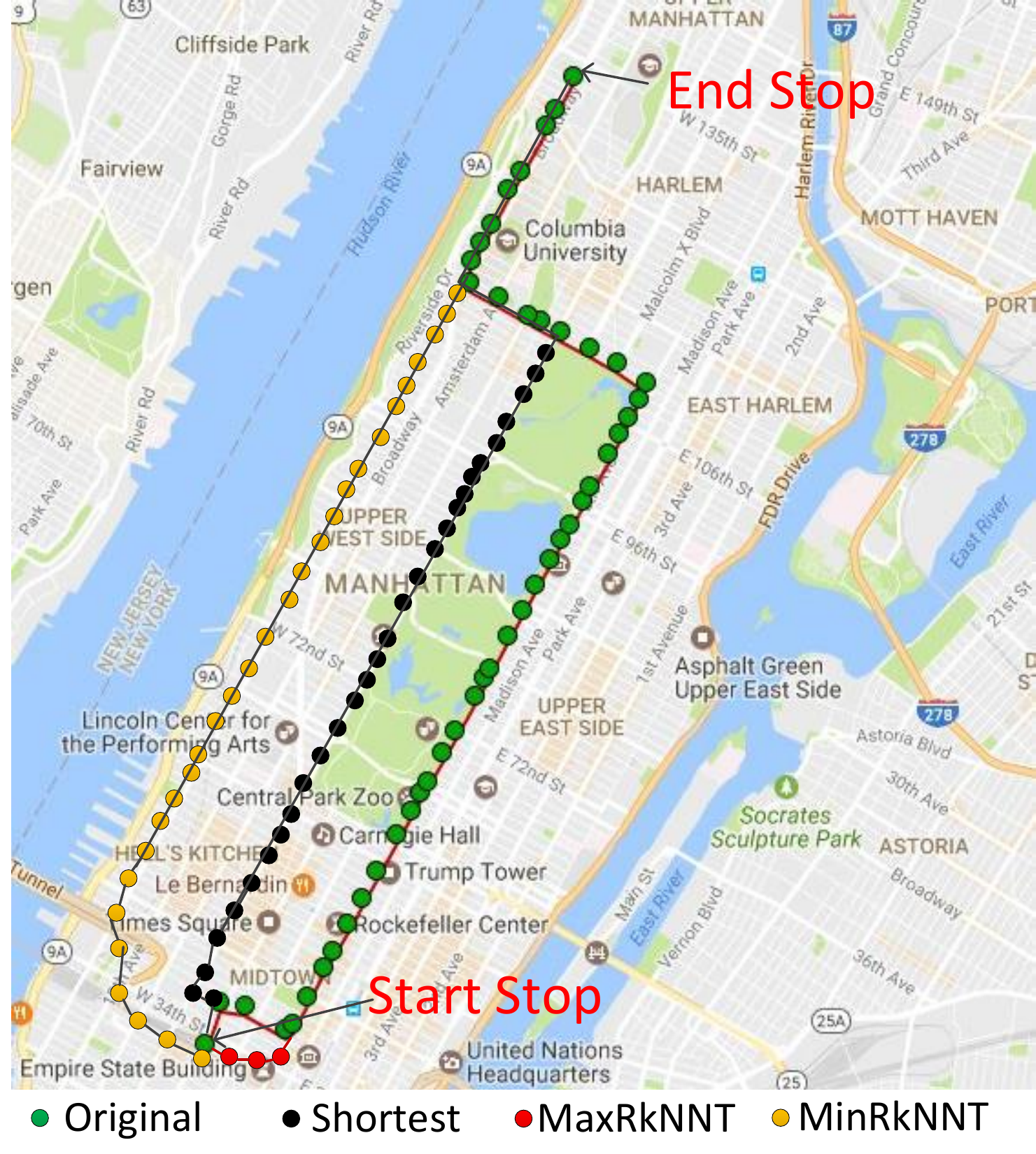}
		\vspace{-1.8em}
	\end{minipage}~~~~~~~~
	\begin{minipage}[b]{.49\linewidth}
		\label{tab:fourroutetable}
		\scomment{
			\begin{tabular}{|c|c|c|c|}
				\hline
				& \textbf{ST} (s) & \textbf{NP} & TD (km)\\ \hline
				1  &  N $\/$ A  &  $1,032$  &  $10,238$  \\ \hline
				2 &  $0.004$  & $817$    &  $9,012$  \\ \hline
				3 &  $1.02$  & $1,061$   &  $10,248$  \\ \hline
				4 &  $0.31$  &  $713$  &   $9,543$ \\ \hline
			\end{tabular}}
			\begin{tabular}{|c|c|c|}
				\hline
				\multicolumn{1}{|c|}{}  & \multicolumn{1}{c|}{\textbf{ST} (s)}  & \multicolumn{1}{c|}{\textbf{NP}}      \\ \hline
				\multicolumn{1}{|c|}{1} & \multicolumn{1}{c|}{N/A}      & \multicolumn{1}{c|}{$1,032$} \\ \hline
				\multicolumn{1}{|c|}{2} & \multicolumn{1}{c|}{$0.004$} & \multicolumn{1}{c|}{$817$}   \\ \hline
				\multicolumn{1}{|c|}{3} & \multicolumn{1}{c|}{$1.02$}  & \multicolumn{1}{c|}{$1,161$} \\ \hline
				\multicolumn{1}{|c|}{4} & \multicolumn{1}{c|}{$0.31$}  & \multicolumn{1}{c|}{$713$}   \\ \hline
				& \textbf{TD} (m)                      & \textbf{\#Stops}                      \\ \hline
				1                       & $10,238$                     &          $49$                    \\ \hline
				2                       & $9,012$                      &            $43$                  \\ \hline
				3                       & $10,248$                     &              $49$                \\ \hline
				4                       & $9,543$                      &               $40$               \\ \hline
			\end{tabular}
			\vspace{2em}
		\end{minipage}
		\caption{Comparison among four routes: ST (searching time), 
			NP (number of passengers), TD (travel distance) and the number of stops.}
		\label{fig:fourroutes}
		\vspace{-1em}
	\end{figure}
	
	In Figure~\ref{fig:fourroutes}, we show four kinds of routes which
	share the same start and end locations: 1) the original bus route passes
	through Manhattan, 2) the shortest distance route,
	3) the {\maxrknnt} route which attracts the most passengers, 4) the
	{\minrknnt} route which attracts the fewest passengers.
	The right table shows the search time, number of passengers,
	travel distance, and number of stops for these four routes.
	We find: (1) the original route and the {\maxrknnt} route are almost
	the same (in particular, {\maxrknnt} finds a route which just is $10$
	meters longer but can attract $129$ extra passengers), which means that
	the existing bus route is almost optimal between the start and
	end locations. 
	(2) If a driver wants to save time, the least crowded route can be
	selected as provided by {\minrknnt}; if the car should be shared
	to increase revenue, the route found by {\maxrknnt} is a good choice.

\section{Conclusion}
\label{sec:conclusion}
In this paper, we proposed and studied the {\rknnt} query, which can
be used directly to support capacity estimation in bus
networks.
First, we proposed a filter-refine processing framework, and an
optimization to increase the filtering space that improves
pruning efficiency.
Then we employed {\rknnt} to solve the bus route planning
problem.
In a bus network, given a start and end bus stop, we can find an optimal
route which attracts the most passengers for a given travel distance
threshold.
To the best of our knowledge, this is the first work studying reverse
$k$ nearest neighbors in trajectories, and our solution
supports dynamically changing transition data while providing
up-to-date answers efficiently.

\begin{small}
\bibliographystyle{abbrv}
\bibliography{sigproc,library}

\begin{thebibliography}{10}

\bibitem{Aljazzar2011}
H.~Aljazzar and S.~Leue.
\newblock {K *: A heuristic search algorithm for finding the k shortest paths}.
\newblock {\em Artificial Intelligence}, 175(18):2129--2154, 2011.

\bibitem{Bao2012Location}
J.~Bao, Y.~Zheng, and M.~F. Mokbel.
\newblock {Location-based and preference-aware recommendation using sparse
  geo-social networking data}.
\newblock In {\em SIGSPATIAL}, pages 199--208, 2012.

\bibitem{Bolivar2014}
M.~A. Bol{\'{i}}var, L.~Lozano, and A.~L. Medaglia.
\newblock {Acceleration strategies for the weight constrained shortest path
  problem with replenishment}.
\newblock {\em Optimization Letters}, 8(8):2155--2172, 2014.

\bibitem{Burfield2013}
C.~Burfield.
\newblock {Floyd-Warshall Algorithm All-Pairs Shortest Paths}.
\newblock pages 1--15, 2013.

\bibitem{Ceder2007}
A.~Ceder.
\newblock {\em {Public Transit Planning and Operation}}.
\newblock 2007.

\bibitem{Cheema2011}
M.~A. Cheema, X.~Lin, W.~Zhang, and Y.~Zhang.
\newblock {Influence zone: Efficiently processing reverse k nearest neighbors
  queries}.
\newblock {\em Proceedings - International Conference on Data Engineering},
  pages 577--588, 2011.

\bibitem{Cheema2012}
M.~A. Cheema, W.~Zhang, X.~Lin, Y.~Zhang, and X.~Li.
\newblock {Continuous reverse k nearest neighbors queries in Euclidean space
  and in spatial networks}.
\newblock {\em VLDB Journal}, 21(1):69--95, 2012.

\bibitem{Chen2014a}
C.~Chen, D.~Zhang, N.~Li, and Z.~H. Zhou.
\newblock {B-planner: Planning bidirectional night bus routes using large-scale
  taxi GPS traces}.
\newblock {\em IEEE Transactions on Intelligent Transportation Systems},
  15(4):1451--1465, 2014.

\bibitem{emrich2014reverse}
T.~Emrich, H.-P. Kriegel, N.~Mamoulis, J.~Niedermayer, M.~Renz, and
  A.~Z{\"u}fle.
\newblock Reverse-nearest neighbor queries on uncertain moving object
  trajectories.
\newblock In {\em Database Systems for Advanced Applications}, pages 92--107.
  Springer, 2014.

\bibitem{Liu2016}
Y.~Liu, C.~Liu, N.~J. Yuan, L.~Duan, Y.~Fu, H.~Xiong, S.~Xu, and J.~Wu.
\newblock {\em {Intelligent bus routing with heterogeneous human mobility
  patterns}}.
\newblock Springer London, 2016.

\bibitem{Lozano2013}
L.~Lozano and A.~L. Medaglia.
\newblock {On an exact method for the constrained shortest path problem}.
\newblock {\em Computers and Operations Research}, 40(1):378--384, 2013.

\bibitem{Martins2003}
E.~Q.~V. Martins and M.~M.~B. Pascoal.
\newblock {A new implementation of Yen's ranking loopless paths algorithm}.
\newblock {\em 4or}, 1(2):121--133, 2003.

\bibitem{pattnaik1998urban}
S.~Pattnaik, S.~Mohan, and V.~Tom.
\newblock Urban bus transit route network design using genetic algorithm.
\newblock {\em Journal of transportation engineering}, 124(4):368--375, 1998.

\bibitem{Shang2011Finding}
S.~Shang, B.~Yuan, K.~Deng, K.~Xie, and X.~Zhou.
\newblock Finding the most accessible locations: Reverse path nearest neighbor
  query in road networks.
\newblock In {\em 19th ACM SIGSPATIAL International Conference on Advances in
  Geographic Information Systems, ACM SIGSPATIAL GIS 2011}, pages 181--190.

\bibitem{stanoi2000reverse}
I.~Stanoi, D.~Agrawal, and A.~El~Abbadi.
\newblock Reverse nearest neighbor queries for dynamic databases.
\newblock In {\em ACM SIGMOD workshop on research issues in data mining and
  knowledge discovery}, pages 44--53, 2000.

\bibitem{tao2004reverse}
Y.~Tao, D.~Papadias, and X.~Lian.
\newblock Reverse knn search in arbitrary dimensionality.
\newblock In {\em Proceedings of the Thirtieth international conference on Very
  large data bases-Volume 30}, pages 744--755. VLDB Endowment, 2004.

\bibitem{Wong2009}
R.~C.-w. Wong, M.~T. Ozsu, P.~S. Yu, A.~W.-c. Fu, and L.~Liu.
\newblock {Efficient Method for Maximizing Bichromatic Reverse Nearest
  Neighbor}.
\newblock {\em Vldb}, 2(1):1126--1137, 2009.

\bibitem{wu2008finch}
W.~Wu, F.~Yang, C.-Y. Chan, and K.-L. Tan.
\newblock Finch: Evaluating reverse k-nearest-neighbor queries on location
  data.
\newblock {\em Proceedings of the VLDB Endowment}, 1(1):1056--1067, 2008.

\bibitem{Xian2013}
O.~Y. Xian, M.~Chitre, and D.~Rus.
\newblock {An approximate bus route planning algorithm}.
\newblock {\em Proceedings of the 2013 IEEE Symposium on Computational
  Intelligence in Vehicles and Transportation Systems, CIVTS 2013 - 2013 IEEE
  Symposium Series on Computational Intelligence, SSCI 2013}, pages 16--24,
  2013.

\bibitem{Yang2015a}
S.~Yang, M.~A. Cheema, X.~Lin, and W.~Wang.
\newblock {Reverse k nearest neighbors query processing: experiments and
  analysis}.
\newblock {\em Proceedings of the VLDB Endowment}, 8(5):605--616, 2015.

\bibitem{Yang2014}
S.~Yang, M.~A. Cheema, X.~Lin, and Y.~Zhang.
\newblock {SLICE: Reviving regions-based pruning for reverse k nearest
  neighbors queries}.
\newblock {\em Proceedings - International Conference on Data Engineering},
  1:760--771, 2014.

\bibitem{yang2007parallel}
Z.~Yang, B.~Yu, and C.~Cheng.
\newblock A parallel ant colony algorithm for bus network optimization.
\newblock {\em Computer-Aided Civil and Infrastructure Engineering},
  22(1):44--55, 2007.

\bibitem{Yen1971}
J.~Y. Yen.
\newblock {Finding the K Shortest Loopless Paths in a Network}.
\newblock {\em Management Science}, 17(11):712--716, 1971.

\bibitem{Zhou2011}
Z.~Zhou, W.~Wu, X.~Li, M.~L. Lee, and W.~Hsu.
\newblock {MaxFirst for MaxBRkNN}.
\newblock {\em Proceedings - International Conference on Data Engineering},
  pages 828--839, 2011.

\end{thebibliography}
\end{small}

\end{document}